\DeclareMathOperator{\bE}{\mathbb{E}}
\DeclareMathOperator{\bI}{\mathbb{I}}
\DeclareMathOperator{\bR}{\mathbb{R}}
\DeclareMathOperator{\cF}{\mathcal{F}}
\DeclareMathOperator{\malpha}{\overline{\alpha}}
\DeclareMathOperator{\mmu}{\overline{\mu}}
\DeclareMathOperator{\mV}{\overline{V}}
\DeclareMathOperator{\mpi}{\overline{\pi}}
\DeclareMathOperator{\ralpha}{\widehat{\alpha}}
\DeclareMathOperator{\rmu}{\widehat{\mu}}
\DeclareMathOperator{\rV}{\widehat{V}}
\DeclareMathOperator{\rpi}{\widehat{\pi}}
\DeclareMathOperator{\Exp}{\mathbb{E}}
\DeclareMathOperator{\Prob}{\mathbb{P}}
\DeclareMathOperator{\smf}{S^{\mathsf{MF}}}
\DeclareMathOperator{\amf}{A^{\mathsf{MF}}}
\newtheorem{theorem}{Theorem}
\newtheorem{lemma}{Lemma}
\theoremstyle{definition}
\newtheorem{example}{Example}
\newtheorem*{claim*}{Claim}
\newcommand{\mfp}{{\xspace}MFP\xspace}
\title[Indexability is Not Enough for Whittle: Improved, Near-Optimal Algorithms for Restless Bandits]{Indexability is Not Enough for Whittle: Improved, Near-Optimal Algorithms for Restless Bandits}
\author{Abheek Ghosh}
\affiliation{
  \institution{University of Oxford}
  \city{}
  \country{}}
\email{abheek.ghosh@cs.ox.ac.uk}
\author{Dheeraj Nagaraj}
\author{Manish Jain}
\affiliation{
  \institution{Google Research}
  \city{}
  \country{}}
\email{{dheerajnagaraj,manishjn}@google.com}
\author{Milind Tambe}
\affiliation{
  \institution{Google Research}
  \city{} 
  \country{}}
\affiliation{
  \institution{Harvard University}
  \city{}
  \country{}}
\email{milindtambe@google.com}
\begin{abstract}
We study the problem of planning restless multi-armed bandits (RMABs) with multiple actions. This is a popular model for multi-agent systems with applications like multi-channel communication, monitoring and machine maintenance tasks, and healthcare. Whittle index policies, which are based on Lagrangian relaxations, are widely used in these settings due to their simplicity and near-optimality under certain conditions. In this work, we first show that Whittle index policies can fail in simple and practically relevant RMAB settings, \textit{even when} the RMABs are indexable. We discuss why the optimality guarantees fail and why asymptotic optimality may not translate well to practically relevant planning horizons. 

We then propose an alternate planning algorithm based on the mean-field method, which can provably and efficiently obtain near-optimal policies with a large number of arms, without the stringent structural assumptions required by the Whittle index policies. This borrows ideas from existing research with some improvements: our approach is hyper-parameter free, and we provide an improved non-asymptotic analysis which has: (a) no requirement for exogenous hyper-parameters and tighter polynomial dependence on known problem parameters; (b) high probability bounds which show that the reward of the policy is reliable; and (c) matching sub-optimality lower bounds for this algorithm with respect to the number of arms, thus demonstrating the tightness of our bounds. Our extensive experimental analysis shows that the mean-field approach matches or outperforms other baselines. 

\end{abstract}
\keywords{Restless Bandits; Resource Allocation; Mean-Field; Whittle Index}
\newcommand{\BibTeX}{\rm B\kern-.05em{\sc i\kern-.025em b}\kern-.08em\TeX}
\begin{document}


\pagestyle{fancy}
\fancyhead{}


\maketitle 


\section{Introduction}

Multi-Armed Bandits have been extensively studied \cite{lattimore2020bandit,thompson1933likelihood,weber1990index} and widely deployed in online decision-making. Restless multi-armed bandit (RMAB) is a specific instance of this, which deals with optimal allocation of limited resources to multiple agents/arms. Each arm corresponds to a known Markov decision process (MDP), making this a planning problem for multi-agent reinforcement learning. While this is computationally hard~\cite{papadimitriou1999complexity}, there are index policy based approximation algorithms that utilize Lagrangian relaxations of the associated integer programming problems. These policies, called the Whittle index policies~\cite{whittle1988restless}, are simple to implement and are asymptotically optimal under certain structural assumptions on the MDPs. RMABs have been applied across a multitude of domains like multi-channel communication~\cite{hodge2015asymptotic,modi2019transfer,zhao2007myopic,bagheri2015restless,tripathi2019whittle}, monitoring and machine maintenance~\cite{yu2018deadline}, security~\cite{qian2016restless}, and healthcare~\cite{ruiz2020multi,bhattacharya2018restless,lee2019optimal,mate2020collapsing}, including deployments in the field~\cite{mate2022field}. 



However, these index policies require assumptions like homogeneity, indexability, irreducibility, infinite-horizon average reward, and global attractor properties of certain McKean-Vlasov type flows~\cite{weber1990index,hodge2015asymptotic} associated with the Lagrange relaxations to have asymptotic optimality guarantees. These properties can be hard to even verify and practitioners often apply Whittle index policies for their applications without verification.
The general argument for doing so, in the words of~\cite{weber1990index}, is:
\textit{... the evidence so far is that counterexamples to the conjecture (asymptotic optimality of Whittle index policy) are rare and that the degree of sub-optimality is very small.}
In this work, our first contribution is to show that these assumptions cannot be taken for granted even in natural, practically relevant situations by constructing two-action RMABs where Whittle index based policies perform sub-optimally by a large margin.

The most emphasized property required for using the Whittle index policy is indexability. Even Whittle's seminal paper ~\cite{whittle1988restless} that introduced RMABs and proposed the Whittle index policy 
defined the policy only for indexable RMABs. \cite{weber1990index}, among others, show that without indexability, Whittle index policy can be sub-optimal. However, all our examples \textit{are} of indexable RMABs. Therefore, our work further shows that even with indexability, Whittle index policy can be arbitrarily bad compared to the optimal policy for some natural examples. For these examples, even intuitively, the Whittle index policy seems sub-optimal. The examples we provide are well-motivated and are inspired
from practical applications of RMABs, e.g.~\cite{killian2021beyond,mate2022field}.

Our second contribution is to propose the mean-field planning algorithm (\mfp) that works 
provably, reliably, and efficiently with minimal assumptions in settings with a large number of agents $N$. 
\mfp is based on mean-field limits, 
and considers a continuum limit of a system with a large number of similarly behaving agents.
This allows us to remove the combinatorial complexity and obtain analytically tractable objects like differential/difference equations (called the McKean-Vlasov type equations~\cite{weber1990index,hodge2015asymptotic}), albeit with a small approximation error. Our method first maps the problem for any finite $N$ to a continuum limit, where linear programming based methods can be efficiently deployed to solve the problem.
Then, it maps the solution of the continuum limit back to the finite $N$ system to provide us with an approximately optimal solution.

Our third contribution is to provide a 
tight, non-asymptotic convergence of MFP to the optimal policy as a function of $N$. 
In contrast to Whittle index policies, \mfp works without any of the restrictive structural assumptions on the MDPs themselves.
Our algorithmic approach is similar to the algorithm proposed by~\cite{zayas2019asymptotically}, and our analysis shares similarities with the analysis in~\cite{verloop2016asymptotically}. However, our approach does not require precision hyper-parameter tuning of those two approaches and hence is more generic. Our theoretical analysis improves upon the ones in prior works by polynomial factors of problem dependent parameters. 

Our final contribution is to provide an experimental evaluation of \mfp in different settings, especially ones of practical importance. 
Our approach performs as well as or outperforms state-of-the-art index based policies consistently in all the experiments with a large number of agents. 







\section{Related Work}
\paragraph
{\textbf{Mean-Field Limits:}}
Mean field limits are extensively studied in statistical physics and probability theory \cite{weiss1907hypothese,kadanoff2009more,meleard1996asymptotic,budhiraja2012large}, game theory \cite{gueant2011mean,carmona2018probabilistic}, networks and control theory \cite{cammardella2019kullback,cammardella2020kullback,li2010ensemble,chertkov2018ensemble,chen2018distributed,fornasier2014mean,yasodharan2022large}, and reinforcement learning \cite{wang2020breaking,acciaio2019extended,lin2018efficient,yang2017learning} in order to understand the behavior of stochastic dynamical systems with multiple interacting particles or agents. Works like \cite{fornasier2014mean,lacker2017limit,budhiraja2012large} consider convergence of multi-agent continuous time control systems to McKean-Vlasov equations under general conditions. 

\paragraph
{\textbf{Restless Bandits and Approximation Limits:}}
\cite{papadimitriou1994complexity} shows that obtaining the exact solution to the RMAB planning problem is PSPACE-hard.
This led to research on approximation algorithms for the optimal policy whenever the number of arms $N$ is large.
For the case of two action RMABs, the practically effective `Whittle index' policy was proposed in~\cite{whittle1988restless}, and was shown to be asymptotically optimal under certain structural assumptions in~\cite{weber1990index}. \cite{bertsimas2000restless} extends this Lagrangian relaxation based approach to a hierarchy of relaxations culminating in the exact solution. 
Recent works have also extended Langrangian relaxations to consider
more complex MDPs with multiple actions~\cite{killian2021beyond,glazebrook2011general,killian2021q,hodge2015asymptotic}.
In the two-action case, \cite{verloop2016asymptotically} extends the Whittle index to the non-indexable setting, but with other structural assumptions. \cite{brown2020index,hu2017asymptotically} consider Lagrangian relaxations for finite horizon two-action MDPs. While these policies do not require indexability, the guarantees in \cite{brown2020index,hu2017asymptotically} suffer from an exponential dependence on the horizon. In contrast, \cite{zhang2021restless,zhang2022near} consider fluid balance policies where the relaxation allows the resource constraint to hold in expectation, but this does not allow for multiple actions. 


Our main algorithm \mfp~is based on the algorithm proposed in~\cite{zayas2019asymptotically} which does not require indexability or any other structural assumptions on the MDP. The method in~\cite{zayas2019asymptotically} requires additional hyper-parameters to be carefully set in order to ensure that the constraints are satisfied and the relaxation is near-optimal as $N \to \infty$ with an error of $\tilde{O}(\sqrt{N})$. We modify the sub-routine that translates the solution to the mean-field LP to a policy for the RMAB,
and obtain the following four improvements:
(i)
Our method is hyper-parameter free, which is very attractive to the practitioner.
(ii)
Suppose all arms are not identical, but there are $K$ clusters of arms each associated with a different MDP with $|S|$ states. Our error bounds scale as $\sqrt{K|S|N}$ whereas the results in \cite{zayas2019asymptotically} scale as $\sqrt{K^4|S|^4N}$. 
Consequently, when $K|S|$ scales as $o(\sqrt{N}) \cap \omega(N^{1/8})$, our bounds still ensure asymptotically optimal policy unlike the prior work.\footnote{Assuming that the optimal expected reward is $\Theta(N)$, which is usually the case.} 
(iii)
We obtain high probability confidence bounds for the random discounted reward under the mean-field policy, showing that this policy is at most $O(\sqrt{K|S|N})$ away from the optimal \emph{expected} reward with \emph{high probability}.
(iv)  
We obtain matching \sloppy $\Omega(\sqrt{N})$ lower bounds for the sub-optimality, which shows that the analysis is tight with respect to the number of arms $N$.

\section{Notation and Preliminaries}\label{sec:prelim}

We begin by describing a (Multi-Action) Restless Multi-Armed Bandit (RMAB) problem.
Let $T$ denote the time horizon for planning, and $[1:T]$ (shorthanded as $[T]$)  denote the set of positive integers $\{1,2, \ldots,T\}$. In an RMAB, 
each arm $i \in [N]$ at time $t \in [T]$ corresponds to an MDP $(S, A, R_{t,i}, P_{t,i})$, where $S$ is the set of states, and $A$ is the set of actions.
We use $P_{t,i}(s'|s,a)$ to denote arm $i$'s transition probability from state $s$ to $s'$ under action $a$ at time $t$. 
Similarly, let $R_{t,i}: S \times A \rightarrow \bR_{\ge 0}$ and $C_{t,i} : S \times A \rightarrow \bR_{\ge 0}$ denote the reward and the cost, respectively, of playing action $a$ for arm $i$ in state $s$ at time $t$. Notice that the rewards and transitions can change with time.
We assume that there is an action of cost $0$ for each $t$, $i$, and $s$.\footnote{This is without loss of generality for $t$ and $i$: all the costs and budget can always be reduced by the minimum positive cost to get a zero cost action. 
While this condition is \textit{with} loss of generality for $s$, it is not limiting in practice given most RMAB applications have a default zero cost action.}


We require that the actions must satisfy the budget constraint: $\sum_{i \in [N]} C_{t,i}(s_{t,i},a_{t,i}) \le B_t  \in \bR_{\ge 0}$. A policy $\pi = (\pi_t)_{t \in [T]}$ recommends a joint action $(a_{t,i})_{i \in [N]}$ 
as a function of the joint state $(s_{t,i})_{i \in [N]}$ 
i.e., $\pi_t : S^N \rightarrow A^N$ for each $t \in [T]$. Note that, the MDPs of all agents collectively form a single, large MDP with state-space $S^N$. This admits a deterministic optimal policy, which maps the joint state deterministically to a single joint action $(a_{t,i}) \in A^N$ (see \cite{szepesvari2010algorithms}). Therefore, we will work with deterministic policies only. 

\sloppy
The expected discounted reward starting from state $s_1$ under the policy $\pi$ is defined as $V_{\gamma}^\pi(s_1) = \bE\left[\sum_{t=1}^T \gamma^{t-1} \sum_{i \in [N]} R_{t,i}(s_{t,i},\pi_t(s_t)_i) \bigr| s_1\right]$ where the next state is drawn according to $s_{t+1,i} \sim P_{t,i}(\cdot | s_{t,i},\pi_t(s_t)_i)$ independently for $i \in [N]$ when conditioned on $s_t$. $\gamma \in [0,1]$ is the discount factor, the actions are selected according to the policy $\pi$, and $\pi_t(s_t)_i$ is the $i$-th component of $\pi_t(s_t)$. The planner's goal is to maximize the total expected reward.

\subsection{RMABs with Clusters}\label{sec:prelim:rmabClusters}
While known theoretical bounds require homogeneity of the arms (i.e, the associated MDPs are identical), 
different arms behave differently in practically relevant settings. For example, different patients might have different reactions to the same treatment, as observed in healthcare applications~\cite{mate2022field,mate2020collapsing}. A clustering approach, where arms are grouped into clusters is often applied in these scenarios~\cite{mate2022field}.
Each arm belongs to exactly one cluster, and the MDPs associated with all arms in a given cluster are identical.
This setting allows for certain aspects of the model to scale with the number of clusters and not the arms, and 
the model itself can be learned in a statistically efficient way from data (e.g. using collaborative filtering) 
without playing each arm individually multiple times. 

We shall use the notation based on clusters throughout the paper, and we describe this notation in Table~\ref{tab:notation}.
The state of a cluster can be succinctly described as the vector of counts of arms in that given cluster and a given state, due to the homogeneity of arms (i.e, arms in a given state and a given cluster are following identical MDPs and are indistinguishable from each other in terms of reward maximization). Thus, we represent the state of the cluster as
$\rmu_t = (\rmu_{t,i}(s))_{i \in [K], s \in S}$, and the action at time $t$ as $\ralpha_t = (\ralpha_{t,i}(s,a))_{i \in [K], s \in S, a \in A}$.
A few identities are a direct consequence of our definitions: 
\begin{align*}
\sum_{s \in S} \rmu_{t,i}(s) = N_i, \ 
\sum_{a \in A} \ralpha_{t,i}(s,a) = \rmu_{t,i}(s), \ \forall i \in [K], t \in [T], s \in S.
\end{align*}
As an illustrative example, suppose that there are four states $S = \{s_1, s_2, s_3, s_4\}$ and three actions $A = \{a_1, a_2, a_3\}$. Say, cluster $i$ has $100$ arms, and at time $t$, the states $s_1$, $s_2$, and $s_3$ have $30$ arms each and $10$ arms are in the fourth state $s_4$. Then, the succinctly represented state of cluster $i$ is $\rmu_{t,i} = (30, 30, 30, 10)$, where $\rmu_{t,i}(s_1) = \rmu_{t,i}(s_2) = \rmu_{t,i}(s_3) = 30$ and $\rmu_{t,i}(s_4) = 10$. Similarly, for the $30$ arms in cluster $i$ and state $s_1$, if we take action $a_1$ for $15$ arms, $a_2$ for $10$ arms, and $a_3$ for the remaining $5$ arms, then $\ralpha_{t,i}(s_1, \cdot) = (15, 10, 5)$.
\begin{table}[ht]
    \centering
    \begin{tabular}{|p{0.17\columnwidth}|p{0.75\columnwidth}|}
        \hline
        \textbf{Symbol} & \textbf{Meaning} \\ \hline
        $S$ & State space, subscripted as $s$ \\
        $A$ & Action space, subscripted as $a$ \\
        $T$ & Time horizon for the RMAB, subscripted as $t$ \\
        $K$ & Number of clusters, subscripted as $i$ \\
        $P_{t,i}(s'|s,a)$ & Transition probability from $s$ to $s'$, given $a$ at time $t$ for an agent in cluster $i$ for $a \in A, s, s' \in S$ \\
        $R_{t,i}(s,a)$ & Reward in state $s$ for playing action $a$ for an agent in cluster $i$ at time $t$ for $a \in A, s \in S$ \\
        $C_{t,i}(s,a)$ & Cost in state $s$ for playing action $a$ for an agent in cluster $i$ at time $t$ for $a \in A, s \in S$ \\
        $N_i$ & Number of arms in cluster $i$, where all the $N_i$ add up to $N$ arms \\
        $\rmu_{t,i}(s)$ & Number of arms in cluster $i$ in state $s$ at time $t$ \\
        $\ralpha_{t,i}(s,a)$ & Number of times the action $a$ is applied to the set of arms in the state $s$ and cluster $i$ in time $t$ \\
        $\gamma$ & Discount factor, $\gamma \in [0, 1]$ \\
        \hline
    \end{tabular}
    \caption{Notation based on clustered RMABs.}
    \label{tab:notation}
\end{table}
\section{Failure Examples for Index Policies}
\label{sec:fail_case}
We now provide natural examples motivated by applications in mobile health care~\cite{vida,welldoc,armman-mhealth} where the Whittle index policy performs much worse than the optimal policy (and the policy found by the mean-field method). In these domains, healthcare workers reach out to patients through mobile apps or phone calls to get them to adhere to treatment regimens. 
Here, the patients are modeled as arms, with their adherence with the mobile health program being represented as an MDP~\cite{mate2022field,killian2021beyond}. The action space of the healthcare workers are the different modes of communication (including no outreach).
The number of healthcare workers (resource) is limited and not all beneficiaries can be regularly contacted by them. 
Patients with similar MDPs are grouped together in clusters, similar to clustered RMABs we described in Section~\ref{sec:prelim:rmabClusters}. We now describe three specific RMAB examples inspired from this domain, and compute the  Whittle indices for them. 

\begin{example}
\label{ex:hard1}
In this example (see Figure~\ref{fig:hard1}), the healthcare workers make one of two actions per patient---call or no call.
A patient either decides to continue to engage with the program, or not participate further (e.g. delete the application from the mobile device, or block the incoming call number). Here, we model two types of patients: (a) \textit{greedy}: a patient that engages with the program in the first time step but does not continue in the program even after repeated attempts of healthcare workers;
(b) \textit{reliable}: a patient who is careful to engage in the beginning but will remain in the program for long periods if contacted by healthcare workers consistently.
The RMAB model for this scenario 
has two types of arms corresponding to the \textit{reliable} and \textit{greedy} types of patients, three states per arm (\textit{start}, \textit{engaged}, and \textit{dropout}), and two actions (\textit{active}, or call and \textit{passive}, or no call).

\begin{figure}[htpb]
\includegraphics[width=\columnwidth]{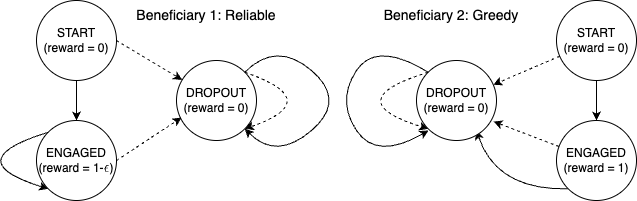}
\caption{Example~\ref{ex:hard1} - Whittle index policy is sub-optimal. (Solid arrows: active actions; dotted arrows: passive actions.)}\label{fig:hard1}
\end{figure}

The behavior of the arms for each type of patient is graphically shown in Figure~\ref{fig:hard1}. The solid lines are transitions for the \textit{active} action ($a=1$), and the dotted lines are for the \textit{passive} action ($a=0$). The nodes represent the $3$ states of \textit{start $s_s$, engaged $s_e$} and \textit{dropout $s_d$} for the patients. For the patient of the \textit{reliable} type, the engaged state gives a reward of $1-\epsilon$ and all other states give a $0$ reward. For the patient of the \textit{greedy} type, the engaged state gives a reward of $1$ and all other states give a $0$ reward. 
The transition probabilities for all transitions for both patient types are given in Table~\ref{tab:example1:transitions}. The parameters for reliable arms are denoted by superscript of $r$ and greedy by $g$. We initialize the RMAB to have $n$ arms of each type (i.e. total arms $N=2n$), as well as the budget $B=n=N/2$ for every time step. All agents start at their respective $s_s$ state.
\end{example}

\begin{table}[htpb]
    \centering
    \begin{tabular}{|c|c|c|}
        \hline 
        & \textbf{Reliable Arm} & \textbf{Greedy Arm} \\ \hline
        \multirow{3}{*}{{Active Action}} & $P^r(s_e|s_s, a=1) = 1$ & $P^g(s_e | s_s, a=1) = 1$ \\
         & $P^r(s_e|s_e, a=1) = 1$ & $P^g(s_d | s_e, a=1) = 1$ \\ 
         & $P^r(s_d|s_d, a=1) = 1$ & $P^g(s_d | s_d, a=1) = 1$ \\
         \hline
         {Passive Action} & \multicolumn{2}{|c|}{$P(s_d | s, a=0) = 1 \ \forall s \in \{s_s, s_e, s_d\}$}\\ 
         \hline
         \multicolumn{3}{|c|}{All other transition probabilities are $0$.} \\
         \hline
    \end{tabular}
    \caption{Transition Probabilities for Example~\ref{ex:hard1}}
    \label{tab:example1:transitions}
\end{table}

The key behavior is that a \textit{reliable} patient moves to the state $s_d$ only after a \textit{passive} action, whereas a \textit{greedy} patient moves to $s_d$ latest by the second time step regardless of the actions. Also, once in $s_d$, an agent stays there forever. So, the greedy arms give rewards for only one time-step, while the reliable arms give rewards for forever if we keep on allocating resources to them. If the discount factor $\gamma$ is not too small, we can intuit that a good policy should allocate resources to the reliable arms. The theorem below shows that the Whittle index does not follow this observation and hence is sub-optimal. We refer to Appendix~\ref{sec:fail_case:app} for its proof.

\begin{theorem}
\label{thm:hardExample1}
The instance described in Example~\ref{ex:hard1} is indexable. The ratio of optimal reward and the Whittle index policy reward is at least $\frac{1-\epsilon}{1-\gamma}$, which goes to $\infty$ as $\gamma \rightarrow 1$ for any $0 < \epsilon < 1$ and $0 < \gamma < 1$.
\end{theorem}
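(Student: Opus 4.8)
\emph{Plan.} The statement has two parts, handled in sequence. First I would establish indexability by computing the single-arm Whittle indices explicitly for each of the two arm types; then I would roll out the Whittle index policy and compare it against a simple hand-crafted benchmark policy to lower-bound the reward ratio.

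\emph{Step 1: the single-arm $\lambda$-subsidised MDPs and indexability.} For a fixed arm, consider the single-arm problem in which the passive action earns an extra subsidy $\lambda$ per period. The state $s_d$ is absorbing with state-reward $0$, so its value is $\lambda/(1-\gamma)$ and passive is (weakly) optimal there for every $\lambda\ge 0$; hence its index is $0$. I would then solve $s_e$ and $s_s$ by backward substitution. For a reliable arm in $s_e$ the active action keeps the arm in $s_e$ while the passive action sends it to $s_d$, and the indifference equation $\gamma V_\lambda(s_e)=\lambda+\gamma V_\lambda(s_d)$ with $V_\lambda(s_e)=(1-\epsilon)/(1-\gamma)$ at indifference gives index $\gamma(1-\epsilon)$. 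For a greedy arm in $s_e$ both actions lead to $s_d$ and the state-reward does not depend on the action, so the active action has no value and the index is $0$. For $s_s$ the active action moves to $s_e$ in both types; comparing against the passive value $\lambda/(1-\gamma)$ gives index $\gamma(1-\epsilon)$ for a reliable arm and (using that passive is optimal in $s_e$ for a greedy arm) index $\gamma$ for a greedy arm. Finally I would check the defining property of indexability: as $\lambda$ increases the set of states where passive is optimal grows monotonically, namely $\emptyset\subseteq\{s_d\}\subseteq\{s_s,s_e,s_d\}$ for the reliable arm (thresholds $0$ and $\gamma(1-\epsilon)$) and $\emptyset\subseteq\{s_e,s_d\}\subseteq\{s_s,s_e,s_d\}$ for the greedy arm (thresholds $0$ and $\gamma$). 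Since the transitions and rewards are time-homogeneous, the Whittle index is well-defined, and this proves the first assertion.

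\emph{Step 2: the two rollouts.} All $2n$ arms start in $s_s$, and $\gamma=W^g(s_s)>W^r(s_s)=\gamma(1-\epsilon)$ for every $\epsilon\in(0,1)$; since the budget is exactly $n$, the Whittle index policy activates precisely the $n$ greedy arms at $t=1$ and keeps all $n$ reliable arms passive (there is no tie to break at $t=1$). Hence at $t=2$ every reliable arm sits in the absorbing state $s_d$, the greedy arms are in $s_e$ contributing reward $n$ in that period, and from $t=3$ onward all arms are absorbed in $s_d$ regardless of the actions, so the Whittle policy collects $V^{\mathrm{WI}}=\gamma n$ (tie-breaking at $t\ge 2$ is immaterial, since those states are absorbing or flow into $s_d$ irrespective of the action and the $t=2$ reward is already fixed). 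For the optimum it suffices to lower-bound $V^\star$ by the value of one concrete policy: activating all $n$ reliable arms in every period brings them to $s_e$ by $t=2$ and holds them there, yielding $n(1-\epsilon)$ per period for $t\ge 2$, so $V^\star\ge n(1-\epsilon)\sum_{t=2}^{T}\gamma^{t-1}=\gamma n(1-\epsilon)\,\tfrac{1-\gamma^{T-1}}{1-\gamma}$, which tends to $\gamma n(1-\epsilon)/(1-\gamma)$ as $T\to\infty$. Dividing gives $V^\star/V^{\mathrm{WI}}\ge(1-\epsilon)/(1-\gamma)$, which diverges as $\gamma\to 1$.

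\emph{Main obstacle.} The arithmetic of Step 2 is routine; the care is all in Step 1. The crucial subtlety is that the greedy arm's index at $s_e$ must come out to be exactly $0$ --- this is precisely what forces $W^g(s_s)=\gamma$ to strictly exceed $W^r(s_s)=\gamma(1-\epsilon)$ for \emph{every} $\epsilon>0$, not merely for $\epsilon$ large --- and the monotonicity of the passive sets must be verified including the degenerate/boundary cases ($W^g(s_e)=W^g(s_d)=0$, so those two states flip to passive simultaneously, and the $\lambda<0$ regime in which every state is active). I would also state explicitly whether ``optimal reward'' refers to the $T\to\infty$ discounted value (the ratio is then exactly $(1-\epsilon)/(1-\gamma)$) or to a finite horizon (in which case the stated bound is the $T\to\infty$ limit of the exact ratio $(1-\epsilon)(1-\gamma^{T-1})/(1-\gamma)$).
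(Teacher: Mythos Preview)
Your proposal is correct and follows essentially the same route as the paper: you compute the Whittle indices state-by-state for each arm type (obtaining $W^g(s_s)=\gamma>\gamma(1-\epsilon)=W^r(s_s)$), verify the monotonicity of the passive sets to conclude indexability, and then roll out the Whittle policy versus the ``always-activate-reliable'' policy to obtain the ratio $(1-\epsilon)/(1-\gamma)$.
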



Note that the result in Theorem~\ref{thm:hardExample1} holds for any $n \ge 1$ (and hence $n \rightarrow \infty$), thus making the Whittle index policy asymptotically sub-optimal. Example~\ref{ex:hard1} is indexable (Theorem~\ref{thm:hardExample1}) 
but not homogeneous
(i.e, different arms behave differently with respect to the same action). 
While homogeneity is one of the conditions required by~\cite{weber1990index} for the optimality of the Whittle index policy, \textit{lack of homogeneity is not the reason why the Whittle index policy is sub-optimal} here. 
We show this by transforming this example into a homogenous RMAB 
by combining the reliable and greedy MDPs 
into a single MDP.


\begin{example}
\label{ex:hard2}
As shown in Figure~\ref{fig:hard2}, consider only one type of arm with five states: \textit{reliable-start}, \textit{reliable-engaged}, \textit{greedy-start}, \textit{greedy-engaged}, and \textit{dropout}, which are equivalent to the states in Example~\ref{ex:hard1}. That is, \textit{reliable-start} state behaves like the \textit{start} state of the \textit{reliable} arm, and so on. Similarly, as in Example~\ref{ex:hard1}, we define there to be $2n$ arms ($n$ in each of \textit{reliable-start} and \textit{greedy-start}) and a budget of $n$.\footnote{
We can also modify this MDP so that every arm starts at the same initial state by adding a dummy \textit{start} state while maintaining indexability and homogeneity. This dummy state transitions to either \textit{reliable-start} or \textit{greedy-start} state with probability $0.5$ irrespective of the action played. After one time step, this ensures a similar initial condition as described in Example~\ref{ex:hard2} for large $N$.
}
\end{example}

\begin{figure}[htpb]
\includegraphics[width=0.6\columnwidth]{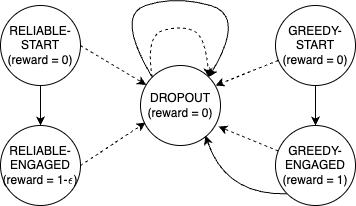}
\caption{Example~\ref{ex:hard2} - Whittle index policy is sub-optimal. (Solid arrows: active actions; dotted arrows: passive actions.)}\label{fig:hard2}
\end{figure}

As in Example~\ref{ex:hard1}, the Whittle index policy prefers the arms in the \textit{greedy-start} state compared to the arms in the \textit{reliable-start} state. The analysis and the results are exactly the same as for Theorem~\ref{thm:hardExample1}: Example~\ref{ex:hard2} is indexable and the Whittle index policy can be arbitrarily worse compared to the optimal policy as $\gamma \rightarrow 1$. 
 


Both these examples show that the Whittle index is sub-optimal for indexable and homogeneous RMABs \textit{even as $N \to \infty$}. Thus, the natural question is: \textit{what about the asymptotic optimality proof of~\cite{weber1990index}}? In addition to these conditions,~\cite{weber1990index} requires that an RMAB satisfy certain `irreducibility' and `global attractor' properties, and proves the optimality of Whittle index for the infinite horizon average reward. However, these conditions often go unverified by designers who implement such policies. 

We now further modify Example~\ref{ex:hard2} to make it irreducible\footnote{A Markov chain is irreducible if there is a non-zero probability to go from any state to any other state after sufficiently many steps.} for every possible single-MDP policy (in addition to homogeneity and indexability). These properties, together with the global attractor property, guarantee that a Whittle index based policy is optimal for the \textit{infinite horizon average reward} problem. However, most environments of practical interest are either finite time horizon or use discounted infinite horizon reward. We show in Example~\ref{ex:hard3} that Whittle index based policies can be sub-optimal in such practical applications, and should be used after careful analysis and experimentation.


\begin{example}
\label{ex:hard3}
In this example, we consider the same \textit{state} and \textit{rewards} as in Example~\ref{ex:hard2}, but 
modify the transition probabilities to ensure irreducibility with respect to any policy. This was not the case previously since, for example, an arm could never transition from $s^g_e$ to $s^r_e$.
Consider the following transition probabilities:
\begin{itemize}
    \item $P(s_{gs},0,s_{ge}) = \eta_{s} = 1 - P(s_{gs},0,s_{d})$,
    \item $P(s_{rs},0,s_{re}) = \eta_{s} = 1 - P(s_{rs},0,s_{d})$,
    \item $P(s_{re},1,s_{d}) = \eta_{r} = 1 - P(s_{re},0,s_{re})$,
    \item $P(s_{d},\cdot,s_{gs}) = P(s_{d},\cdot,s_{rs}) = \eta_d = (1 - P(s_{d},\cdot,s_{d}))/2$,
\end{itemize}
where $s_d$ is the dropout state, $s^g_{s}$ is greedy-start state and so on.
$\eta_s$, $\eta_r$, and $\eta_d$ are positive constants, and are called ergodicity parameters. We set $\eta_s = 0.05$, $\eta_r = 0.1$, $\eta_d = 0.1$, and $\epsilon = 0.01$.
\end{example}

We prove the indexability of Example~\ref{ex:hard3} graphically (see Figure~\ref{fig:hard3:1} and Figure~\ref{fig:hard3:2} in Appendix~\ref{sec:fail_case:app}) by plotting the difference in q-values of active and passive actions as we increase the subsidy for the passive action. Note that the MDP satisfies indexabilitiy since this plot has a negative slope and crosses the x-axis for each state at a unique point, and that the index (x-intercept) of the greedy-start state ($s^g_{s}$) is more than the reliable-start ($s^r_{s}$) state. While we do this for infinite horizon average reward, the same applies for discounted reward and finite horizon reward as well.



In Table~\ref{tab:example3:reward}, we show that the Whittle index is sub-optimal for either discounted reward or finite horizon for Example~\ref{ex:hard3}. The lower bound for the optimal reward is given by the index policy that always prefers $s^r_s$ to 
$s^g_s$.
The difference is also more significant if the ergodicity parameters are made smaller, e.g., if we set $\eta_s = \eta_r = \eta_d = 0.01$, then even with $\gamma = 0.95$, Whittle policy gets a discounted reward of $3.04$ compared to at least $9.32$ for the optimal policy. 

\begin{table}[htpb]
    \centering
    \begin{tabular}{|c|c|c|}
        \hline 
         \textbf{Setting} & \textbf{Whittle} & \textbf{Optimal} \\ \hline
         discounted, $\gamma = 0.95$ & $7.33$ & $\geq 8.65$ \\
         discounted,  $\gamma = 0.8$ & $1.17$ & $\geq 1.86$ \\ 
         finite horizon, $T = 20$ & $7.41$ & $\geq 9.11$ \\
         \hline
    \end{tabular}
    \caption{Estimated rewards for Example~\ref{ex:hard3}.} \label{tab:example3:reward}
\end{table}
For Example~\ref{ex:hard3}, the Whittle index policy is sub-optimal for finite time or discounted reward because the \textit{mixing time} (time it takes to approximately reach the stationary distribution) is large in comparison to the (effective) planning horizon $T$ or $\frac{1}{1-\gamma}$. 
Whittle needs the system to be close to the stationary distribution to start acting near-optimally. 
As the MDP in Example~\ref{ex:hard3} is irreducible for every policy, the mixing time is finite, which ensures optimality of Whittle for non-discounted infinite time reward (given other optimality conditions of~\cite{weber1990index} are also satisfied).
In Appendix~\ref{sec:fail_case:app:mixing}, we show that Whittle's reward goes towards optimality if we either increase the effective horizon or decrease the mixing time for Example~\ref{ex:hard3}.

We have not explicitly tested for the global attractor property in our examples, which states that the McKean-Vlasov differential equation which gives the time evolution of the empirical distribution of the states as $N \to \infty$ under the Whittle index policy converges to a unique fixed point, for every initial condition~\cite{weber1990index}. However, verifying this property is hard in general. In our numerical experiments, the system always converged to the fixed point, 
which makes us believe that our example satisfies the global attractor property.
This further motivates the need for an algorithm whose guarantees does not require these multiple pre-conditions.


\section{The Mean-Field Method}
\label{sec:mf}
 We now propose the mean-field planning (\mfp) method as an approximation technique for finding the optimal policy for an RMAB. 
 The basic idea behind \mfp~is that when $N$ is large, the space of empirical distributions of the states can be approximated closely by the simplex $\Delta(S)$ (up to a quantization error of $O(1/N)$). 
When $N$ is large, a `law of large numbers' type effect shows that the evolution of the empirical measure due to a fixed policy is almost deterministic. 
 Indeed, in \mfp, we use this property to approximate the stochastic process of each cluster's state $\rmu_t = (\rmu_{t,i}(s))_{i \in [K], s \in S}$ by a deterministic process denoted by $\mmu_t = (\mmu_{t,i}(s))_{i \in [K], s \in S}$.

We define the mean-field MDP corresponding to the RMAB instance under consideration below and construct a linear program which solves the mean-field MDP exactly. Our \mfp algorithm, described in Section~\ref{sec:mf:policy}, solves this linear program at each time instant and quantizes the optimal solution in order to compute a feasible action for the RMAB.

\subsection{The Mean-Field MDP}
Given the state of the RMAB and the action taken at $t=1$, $\rmu_1$ and $\ralpha_1$ respectively, the state of the RMAB at $t=2$ is a random variable $\rmu_2 = (\rmu_{2,i}(s'))_{i, s'}$ where $\rmu_{2,i}(s') = \sum_{s, a} \sum_{\ell \in [\ralpha_{1,i}(s,a)]} \bI_{\{x_{\ell}(s,a) = s'\}}$, $x_{\ell}(s,a)$ is a random state in $S$ picked based on the distribution $P_{1,i}(\cdot | s,a)$, 
and $\bI_{\{\ldots\}}$ denotes the indicator function. 
\mfp approximates this \textit{random variable} $\rmu_2$ by the \textit{deterministic} $\mmu_2 = (\mmu_{2,i}(s))_{i, s}$, where 
   $ \mmu_{2,i}(s') = \Exp[\rmu_{2,i}(s')] 
    = \sum_{s \in S, a \in A} P_{1,i}(s'|s,a) \ralpha_{1,i}(s,a).$
In a similar fashion, we can approximate the random variable $\rmu_t$ by the deterministic $\mmu_t$ for each $t \in [T]$. As we shall show later, whenever the number of agents is large, the typical value of $\rmu_{2} - \mmu_2$ is $O(\sqrt{N})$.
Similarly, a mean-field action $\malpha_t = (\malpha_{t,i}(s,a))_{i,s,a}$ roughly corresponds to the expected number of arms in cluster $i$ and state $s$ for which we play the action $a$ at time $t$ and it can be fractional. 

As an illustrative example of the mean-field approximation: Suppose there are two states $S = \{s_1, s_2\}$ and two actions $A = \{a_1, a_2\}$. Let cluster $i$ have $5$ arms, and at $t = 1$, all the arms in cluster $i$ are in state $s_1$, i.e., $\rmu_{1,i} = (5, 0)$. For all $5$ arms in cluster $i$ and state $s_1$, say we play $a_1$, i.e., $\alpha_{1,i}(s_1, \cdot) = (5,0)$. Let's assume $P_{1,i}(\cdot | s_1, a_1) = (0.3, 0.7)$, i.e., an agent in cluster $i$ and state $s_1$ subjected to action $a_1$ moves to state $s_1$ w.p. (with probability) $0.3$ and state $s_2$ w.p. $0.7$. The real-life state at $t = 2$ for cluster $i$ is a random variable $\rmu_{2,i}$, where $\rmu_{2,i} = (\ell, 5 - \ell)$ w.p. $\binom{5}{\ell} (0.3)^{\ell}(0.7)^{5 - \ell}$ for $\ell \in [0:5]$. On the other hand, the mean-field method makes a deterministic approximation of $\rmu_{2,i}$ in the form of the mean-field state $\mmu_{2,i} = \Exp[\rmu_{2,i}] = (1.5, 3.5)$. Notice that $\mmu_{2,i}$ can take fractional values although $\rmu_{2,i}$ can never be fractional. Similarly, $\malpha_{2,i}$ can also be fractional, e.g., $\malpha_{2,i}(s_1, \cdot) = (1, 0.5)$ is a valid mean-field action at time $2$ for this example. 

Recall the original MDP defined in Section~\ref{sec:prelim}. 
The mean-field MDP is a deterministic MDP with state space $\smf = N \Delta([K] \times S) = \{Nx : x \in \Delta([K] \times S)\}$ and action space $\amf \subseteq N \Delta([K] \times S \times A)$, with horizon $T$. 
Given an action $a \in A$, let $P_{t,i}(\cdot|\cdot,a)$ denote the $|S| \times |S|$ transition matrix for cluster $i$ under action $a$ at time $t$.
Using $R_{t,i}(\cdot,a)$ and $C_{t,i}(\cdot,a)$ to denote the $|S|$-dimensional reward and cost vectors, the reward and the cost for playing $\malpha_{t,i}$ for cluster $i$ at time $t$ can be written as $\sum_{a \in A} \malpha_{t,i}(\cdot,a)^{\intercal} R_{t,i}(\cdot,a)$ and $\sum_{a \in A} \malpha_{t,i}(\cdot,a)^{\intercal} C_{t,i}(\cdot,a)$, respectively, where $x^\intercal$ denotes the transpose of a vector $x$. 
An action $\malpha_t$ is feasible at time $t$ if $\sum_{i \in [K]}\sum_{a \in A} \malpha_{t,i}(\cdot,a)^{\intercal} C_{t,i}(\cdot,a) \leq B_{t}$ and $\sum_{a\in A} \malpha_{t,i}(\cdot,a) = \mmu_{t,i}(\cdot)$ for every $i \in [K]$. For all $t \in [T-1]$, we define the mean-field state evolution of state $\mmu_t$ under any feasible action $\malpha_t$ as:
\[
    \mmu_{t+1,i}(\cdot)^\intercal = \sum_{a \in A} \malpha_{t,i}(\cdot,a)^{\intercal} P_{t,i}(\cdot|\cdot,a).
\]

\subsection{The Mean-Field LP}\label{sec:mf:lp}
For the mean-field MDP defined above, we can find the optimal action by solving the following LP. 
This LP gives a mean-field action $\malpha$ computed using the real-life state at time $t=1$, $\rmu_1$, which is known. 
\begin{align}
    \max_{\malpha_t, \mmu_t} \ \  \sum_{t, i, a}  \gamma^{t-1} \malpha_{t,i}(\cdot,a)^{\intercal} & R_{t,i}(\cdot,a) \label{eq:mflp:1}  \\
    \rm{s.t. } \quad \quad \mmu_{1,i} = \rmu_{1,i}, \qquad &\forall i \label{eq:mflp:5} \\ 
    \mmu_{t+1,i}(\cdot)^\intercal = \sum_{a \in A} \malpha_{t,i}(\cdot,a)^{\intercal} P_{t,i}(\cdot|\cdot,a), \qquad &\forall t \in [T-1], \forall i \label{eq:mflp:2} \\
    \sum_{i \in [K]}\sum_{a \in A} \malpha_{t,i}(\cdot,a)^{\intercal} C_{t,i}(\cdot,a) \le B_t, \qquad &\forall t \label{eq:mflp:4} \\
    \sum_{a \in A} \malpha_{t,i}(\cdot,a) = \mmu_{t,i}(\cdot), \qquad & \forall t,i \label{eq:mflp:3} \\ 
    \malpha_{t,i}(\cdot,a) \ge 0, \qquad & \forall t,i,a \label{eq:mflp:6} 
\end{align}
Equation~\eqref{eq:mflp:4} is the budget constraint whereas Equations~\eqref{eq:mflp:3} and~\eqref{eq:mflp:6} guarantee consistency. This LP is similar to the the LP given in~\cite{zayas2019asymptotically} with the hyperparameters $\epsilon$ set to $0$. Next, we describe the mean-field policy computed using this LP - i.e, a policy computed for the real-world RMAB using the mean-field solution $\malpha$. Note that this is different from the randomized policy computation in~\cite{zayas2019asymptotically}.

\subsection{RMAB Policy from the Mean-Field LP}\label{sec:mf:policy}
We are given the initial real-life state $\rmu_1$. For $t = 1, 2, \ldots, T$ execute the steps described below:
\begin{enumerate}
    \item 
    Solve the above LP 
    with the initial state as $\rmu_{t}$ and the time horizon as $(T-t+1)$, corresponding to the MDP considered from times $t$ to $T$. Let's denote the mean-field states and actions computed by solving the LP as $\mmu^{(t)} = (\mmu^{(t)}_{\tau})_{\tau \in [t:T]}$ and $(\malpha^{(t)}_{\tau})_{\tau \in [t:T]}$, respectively. Notice that given an initial state at time $t$, $\mmu^{(t)}_{t} = \rmu_{t}$.
    
    \item 
    Play the action $\malpha^{(t)}_{t}$. Notice that, by construction, $\sum_a \malpha^{(t)}_{t,i}(s,a) = \mmu^{(t)}_{t,i}(s) = \rmu_{t,i}(s)$. But, some of the $\malpha^{(t)}_{t,i}(s,a)$ values may not be integers. So, we take a \texttt{floor} and play the action $a$ for $\lfloor \malpha^{(t)}_{t,i}(s,a) \rfloor$ arms in cluster $i$ and state $s$ at time $t$. For the remaining arms (in cluster $i$ and state $s$ at time $t$), which is equal to $\rmu_{t,i}(s) - \sum_a \lfloor \malpha^{(t)}_{t,i}(s,a) \rfloor$, play the $0$ cost action. This ensures that the `rounded' action stays within the budget $B_t$. 
    
    \item 
    After playing the action at time $t$, we get the realization of the real-life state at time $t+1$, $\rmu_{t+1}$.
\end{enumerate}
 Step 2 introduces a potential \textit{rounding error} in the algorithm. However, in our experiments, we observed that the mean-field actions were generally integral. This is likely because the costs, rewards and budget in our experiments were all integers, and our optimal solution of the resulting LP may be an integral extreme point. Since this might not hold for general RMAB instances, we incorporate and bound the rounding error in our analysis.

In the algorithm above, we solve the LP for each time step, which is an accepted deployment paradigm for practitioners because it allows the algorithm to adjust its policy based on the evolution of the real-life state. 
If a designer prefers to solve the LP only once,
then they may use an alternate algorithm we discuss in Appendix~\ref{sec:mf:alternate:app}.  
While this variant may have lower performance in practice, both the algorithms have similar worst-case error bounds. 
We refer to Appendix~\ref{sec:mf:lowerbound:app} for such a worst-case example.

\section{Analysis: Near-Optimality of MFP}\label{sec:mf:analysis}

In this section, we prove that the mean-field planning (\mfp) method is close to optimal if the number of arms is large compared to $K|S|$. Specifically, we shall show that the reward we get from the real-life process when we play actions computed using the mean-field method is close to the expected reward we can get by playing actions from an optimal real-life policy, with high probability. 

\subsection{Intuition Behind the Proof}
 Lemma~\ref{lm:expected2lp} presents a simple relaxation based argument to show that the optimal reward for the mean-field MDP is always larger than of the optimal expected reward for the RMAB. Therefore, we reduce the proof to showing that the state-evolution given by the application of the mean-field policy to the RMABs (given by $\rmu_t$) `tracks' the mean-field flow given by $\mmu_t$, up-to a small error. That is, we want to bound $\|\rmu_t -\mmu_t^{(1)}\|_1$.

To this end, we decompose the error produced in each time step into two types: (1) 
    Rounding Error, 
    and (2)
    Variance Error.
Rounding error occurs since the actions proposed by the mean-field solution, given by $\malpha^{(t)}_{t}$, need not be integers. As proposed in the algorithm, we quantize these actions so that the budget is not exceeded. We can show by a simple counting argument that this error is at-most $O(K|S||A|)$ and independent of $N$.

The mean-field MDP, computed at time $t$, assumes that $\mmu_t^{(t)}$ transitions to $\mmu_{t+1}^{(t)}$ exactly. However, this is only true in expectation. As we show, the actual empirical state $\rmu_{t+1}$ is such that $\|\mmu_{t+1}^{(t)} - \rmu_{t+1}\|_1 = \tilde{O}(\sqrt{K|S|N})$ with high probability due to the inherent randomness present in the system. We call this the variance error. By bounding the accumulation of these two kinds of errors, we show that the mean-field policy applied to the RMAB performs near optimally.

\subsection{Preliminaries}\label{sec:mf:prelim}
In our analysis, we shall use the Azuma-Hoeffding inequality for bounds on multinomial distributions. See Appendix~\ref{sec:mf:prelim:app} for the proof.

\begin{lemma}[]
\label{lm:multinomial}
Let $X_1, \ldots, X_n$ be independent random vectors in $\{e_1,\dots,e_k\}$ (where $e_i$ denotes the $i$-th standard basis vector in $\bR^k$) 
\begin{enumerate}
    \item $\Exp[|| \sum_{i \in [n]} (X_i - \Exp[X_i]) ||_1] \le \sqrt{kn}$, and
    \item $\Prob[|| \sum_{i \in [n]} (X_i - \Exp[X_i]) ||_1 \ge \epsilon ] \le \delta$ for all $0 < \delta < 1$, where $\epsilon = \sqrt{2\log(2)kn + 2n\log(\tfrac{1}{\delta})}$.
\end{enumerate}
\end{lemma}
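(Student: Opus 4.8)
The plan is to prove both parts via a single martingale: set $Y_i = X_i - \Exp[X_i]$ and $S_n = \sum_{i\in[n]} Y_i \in \bR^k$, and track the scalar process $Z_j = \|\sum_{i\le j} Y_i\|_1$. Since the $X_i$ take values in $\{e_1,\dots,e_k\}$, each coordinate of $S_n$ is a sum of independent bounded increments, so $(S_n^{(1)},\dots,S_n^{(k)})$ is a vector-valued martingale with respect to the natural filtration $\cF_j = \sigma(X_1,\dots,X_j)$, and $\|S_j\|_1$ is a submartingale by convexity of the $\ell_1$-norm composed with a martingale.

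For part (1), I would bound $\Exp[\|S_n\|_1] = \sum_{\ell=1}^k \Exp[|S_n^{(\ell)}|]$ coordinate-wise. For each coordinate $\ell$, $S_n^{(\ell)} = \sum_{i} (\b1\{X_i = e_\ell\} - p_{i,\ell})$ is a sum of independent mean-zero variables with variance $p_{i,\ell}(1-p_{i,\ell}) \le p_{i,\ell}$, so $\Exp[(S_n^{(\ell)})^2] \le \sum_i p_{i,\ell}$ and hence $\Exp[|S_n^{(\ell)}|] \le \sqrt{\sum_i p_{i,\ell}}$ by Jensen/Cauchy–Schwarz. Summing over $\ell$ and applying Cauchy–Schwarz again with $\sum_{\ell}\sum_i p_{i,\ell} = n$ gives $\sum_{\ell} \sqrt{\sum_i p_{i,\ell}} \le \sqrt{k \sum_{\ell}\sum_i p_{i,\ell}} = \sqrt{kn}$, which is the claimed bound.

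For part (2), the approach is to apply a vector-valued Azuma–Hoeffding inequality (or McDiarmid's bounded-differences inequality) to the function $f(X_1,\dots,X_n) = \|S_n\|_1$. Changing a single $X_i$ from $e_a$ to $e_b$ changes $S_n$ by $e_b - e_a$, which changes $\|S_n\|_1$ by at most $\|e_b - e_a\|_1 = 2$; so $f$ has bounded differences with constant $c_i = 2$. McDiarmid then yields $\Prob[f \ge \Exp[f] + u] \le \exp(-u^2/(2\sum_i c_i^2)) = \exp(-u^2/(8n))$. Combining with part (1), $\Exp[f] \le \sqrt{kn}$, and choosing $u$ so that the tail probability is $\delta$, i.e. $u = \sqrt{8n\log(1/\delta)}$, gives $\Prob[\|S_n\|_1 \ge \sqrt{kn} + \sqrt{8n\log(1/\delta)}] \le \delta$. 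A small amount of algebra with $\sqrt{a}+\sqrt{b} \le \sqrt{2(a+b)}$ converts this into the stated form $\epsilon = \sqrt{2\log(2)kn + 2n\log(1/\delta)}$ — though the exact constants in the statement suggest the authors instead bound $\Exp[\|S_n\|_1]^2 \le kn$ more carefully or use a slightly different concentration constant, so I would double-check whether the intended route is McDiarmid with constant $2$ or a direct Azuma bound on $Z_j$ with a sharper per-step second-moment control.

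The main obstacle is getting the constants in part (2) to match exactly: the naive McDiarmid bound with difference constant $2$ gives a variance proxy of $8n$ (not $2n$), so recovering the factor $2\log(2)kn + 2n\log(1/\delta)$ requires either a tighter, "Bernstein-flavored" Azuma bound that exploits that the per-coordinate increments have small variance (not just small range), or absorbing the expectation term differently. I would handle this by working directly with the one-dimensional submartingale $Z_j = \|S_j\|_1$, controlling its increments' conditional variance by $\sum_i p_{i,\ell}$ summed over coordinates, and invoking a Freedman/Azuma-type inequality; the $\sqrt{2\log 2}$ factor strongly hints the authors are folding the $\Exp[\|S_n\|_1] \le \sqrt{kn}$ bound into the deviation term via $a + \sqrt{2b\log 2}\cdot(\cdot)$, i.e. using $\Prob[\|S_n\|_1 \ge \epsilon] \le \delta$ with $\delta = 2\exp(-(\epsilon^2 - \text{something})/\cdot)$ — so the final step is careful bookkeeping rather than a new idea.
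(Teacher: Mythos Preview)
Your argument for part (1) is correct and essentially identical to the paper's: both bound $\Exp[|S_n^{(\ell)}|]$ by $\sqrt{\sum_i p_{i,\ell}}$ via Jensen and then apply Cauchy--Schwarz (equivalently, the AM--QM inequality) using $\sum_{\ell}\sum_i p_{i,\ell}=n$.

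For part (2) you take a genuinely different route from the paper, and your speculation about how to recover the exact constants misses the actual mechanism. The paper does \emph{not} use McDiarmid on $f=\|S_n\|_1$, nor a Bernstein/Freedman argument on the submartingale $Z_j$. Instead it uses the dual representation of the $\ell_1$ norm,
\[
\|S_n\|_1 = \sup_{\alpha\in\{-1,1\}^k}\langle \alpha, S_n\rangle,
\]
and for each fixed sign vector $\alpha$ applies Hoeffding to the one--dimensional sum $\sum_{i} Y_i(\alpha)$ with $Y_i(\alpha)=\langle \alpha, X_i-\Exp X_i\rangle$. Because $X_i$ is a standard basis vector, $Y_i(\alpha)$ has range exactly $2$, so each $Y_i(\alpha)$ is $1$-subgaussian and $\sum_i Y_i(\alpha)$ is $n$-subgaussian, giving $\Prob[\sum_i Y_i(\alpha)\ge\epsilon]\le\exp(-\epsilon^2/(2n))$. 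A union bound over the $2^k$ sign vectors then yields $\Prob[\|S_n\|_1\ge\epsilon]\le 2^k\exp(-\epsilon^2/(2n))$, and setting this equal to $\delta$ produces $\epsilon=\sqrt{2\log(2)kn+2n\log(1/\delta)}$ on the nose. The $2\log(2)k$ term is precisely the price of the union bound over $\{-1,1\}^k$, which is the hint you were looking for.

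Your McDiarmid approach is a perfectly valid alternative, but note a small arithmetic slip: with bounded differences $c_i=2$ the standard McDiarmid bound is $\exp(-2t^2/\sum_i c_i^2)=\exp(-t^2/(2n))$, not $\exp(-u^2/(8n))$. Correcting this, your route yields $\Prob[\|S_n\|_1\ge\sqrt{kn}+\sqrt{2n\log(1/\delta)}]\le\delta$, which is comparable to (and in the $kn$ term slightly sharper than) the stated bound, but is not literally the same expression. The paper's sign-vector argument trades a worse constant on $kn$ for a single closed-form $\epsilon$ without the additive $\Exp[f]$ term.
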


To make the analysis concise, let us introduce some additional notation. Let $R_t(\mu_t, \alpha_t) \equiv R_t(\alpha_t) = \sum_{i,s,a} R_{t,i}(s,a) \alpha_{t,i}(s,a)$ denote the total reward at time $t$ for playing an action $\alpha_t$. Let $\mV_{t:T}(\mu_t)$ be the optimal objective value of the mean-field LP for time $t$ to $T$ with $\mu_t$ as the start state at time $t$ (Section~\ref{sec:mf:policy}). Similarly, let $\rV_{t:T}^{\pi}(\mu_t)$ denote the real-life reward for using any RMAB policy $\pi$ for time $[t:T]$ starting with state $\mu_t$. Note that $\rV_{t:T}^{\pi}(\mu_t)$ is random. Let $\rV_{t:T}(\mu_t) = \max_{\pi} \Exp[\rV_{t:T}^{\pi}(\mu_t)]$ be the maximum expected reward achieved by any RMAB policy. 
We shall also use the notation introduced in Section~\ref{sec:mf:policy}: $\mpi$ to denote the mean-field policy;  $\mmu^{(t)}_{\tau}$ and $\malpha^{(t)}_{\tau}$, for $t \in [T]$, $\tau \in [t:T]$, to denote the states and actions as computed by the LPs in Section~\ref{sec:mf:policy}. Let $R_{max} = \max_{t,i,s,a} R_{t,i}(s,a)$.

\subsection{Analysis}\label{sec:mf:actualanalysis}
As mentioned in Section~\ref{sec:prelim}, an RMAB is a large but finite MDP and has an optimal deterministic Markov policy. Therefore, we restrict ourselves to comparing the reward of the mean-field policy to that of deterministic Markov policies. It can also be observed from the description of the mean-field policy in Section~\ref{sec:mf:policy} that it is deterministic and Markov. We shall first focus on the finite horizon non-discounted reward, where $\gamma = 1$ and $T < \infty$ and later extend the analysis to infinite horizon discounted rewards.

\begin{theorem}\label{thm:expecationF}
The mean-field policy $\mpi$ has a non-discounted reward for finite time horizon $T$ that is at most $O(R_{max}T^2\sqrt{K|S|N})$ less than the expected reward of an optimal policy. In particular,
\begin{enumerate}
    \item lower-bound for expected reward: 
    $\Exp[\rV^{\mpi}_{1:T}(\rmu_1)] \ge \rV_{1:T}(\rmu_1) - \frac{T^2 R_{max}}{4} (\sqrt{K|S|N} + 5K|S||A|)$,
    \item high-probability lower-bound for reward: 
    $\rV^{\mpi}_{1:T}(\rmu_1) \ge \rV_{1:T}(\rmu_1) - \frac{T^2 R_{max}}{4} (\sqrt{2 \log(2) K|S|N + 2 N \log(\frac{T}{\delta})} + 5K|S||A|)$
    with probability at least $(1-\delta)$ for every $0 < \delta < 1$.
\end{enumerate}

\end{theorem}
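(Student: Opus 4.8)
The plan is to bound $\bigl|\Exp[\rV^{\mpi}_{1:T}(\rmu_1)] - \rV_{1:T}(\rmu_1)\bigr|$ (and its high-probability analogue) by tracking how far the real-life empirical state $\rmu_t$ drifts from the mean-field flow $\mmu^{(1)}_t$ computed at time $1$, and then converting this state-tracking error into a reward gap. First I would invoke Lemma~\ref{lm:expected2lp} (stated as forthcoming in the excerpt), which gives $\mV_{1:T}(\rmu_1) \ge \rV_{1:T}(\rmu_1)$: the mean-field LP is a relaxation of the true RMAB, so its optimum upper-bounds the best achievable expected reward. Hence it suffices to show $\Exp[\rV^{\mpi}_{1:T}(\rmu_1)] \ge \mV_{1:T}(\rmu_1) - (\text{error})$, i.e.\ that the reward actually collected by playing the (rounded) mean-field actions is close to the LP value $\mV_{1:T}(\rmu_1) = \sum_t R_t(\malpha^{(1)}_t)$.

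Next I would set up the per-step error decomposition. Define $\Delta_t = \|\rmu_t - \mmu^{(t)}_t\|_1$ where $\mmu^{(t)}_t = \rmu_t$ by construction, so really the relevant quantity is $E_t := \|\rmu_{t+1} - \mmu^{(t)}_{t+1}\|_1$, the one-step discrepancy between the realized next state and the mean-field prediction made at time $t$. This splits into (a) a \emph{rounding error}: because we play $\lfloor \malpha^{(t)}_{t,i}(s,a)\rfloor$ instead of $\malpha^{(t)}_{t,i}(s,a)$ and divert the leftover $\rmu_{t,i}(s) - \sum_a \lfloor\malpha^{(t)}_{t,i}(s,a)\rfloor$ arms to the zero-cost action, the expected next state differs from $\mmu^{(t)}_{t+1}$ by at most the number of diverted arms times $1$ in $\ell_1$, which a counting argument caps at $O(K|S||A|)$ (at most $|A|-1$ fractional leftover per $(i,s)$ pair, actually; the constant $5$ in the statement absorbs slack); and (b) a \emph{variance error}: conditioned on $\rmu_t$ and the chosen integral action, $\rmu_{t+1,i}(\cdot)$ is a sum of $N$ independent multinomial draws, so by Lemma~\ref{lm:multinomial} part (1), $\Exp\|\rmu_{t+1} - \Exp[\rmu_{t+1}\mid\rmu_t]\|_1 \le \sqrt{K|S|N}$ (summing the $\sqrt{|S|N_i}$ bounds over clusters and using concavity/Cauchy-Schwarz to get $\sum_i \sqrt{|S|N_i} \le \sqrt{K|S|N}$), and part (2) gives the high-probability version with $\epsilon = \sqrt{2\log(2)K|S|N + 2N\log(1/\delta)}$ per step, union-bounded over the $T$ steps to get the $\log(T/\delta)$ term. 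So $\Exp[E_t] \le \tfrac12(\sqrt{K|S|N} + 5K|S||A|)$ roughly (the factor $\tfrac12$ coming from $\ell_1$ being twice the total-variation distance), matching the constants in the theorem.

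Then I would propagate these errors through the horizon. Because $\mmu^{(t)}_{t+1}$ is generally \emph{not} equal to $\mmu^{(t+1)}_{t+1}$ — the LP is re-solved from the new realized state — one cannot simply telescope the $\mmu^{(t)}$ flows; instead I would compare, at each time $\tau$, the reward $R_\tau(\malpha^{(\tau)}_\tau)$ that MFP collects against $R_\tau(\malpha^{(1)}_\tau)$ (or against the remaining LP value $\mV_{\tau:T}(\rmu_\tau)$), using the key Lipschitz/monotonicity property of the LP value function: if two start states $\mu, \mu'$ satisfy $\|\mu - \mu'\|_1 \le \rho$ then $|\mV_{\tau:T}(\mu) - \mV_{\tau:T}(\mu')| \le (T-\tau+1) R_{max}\,\rho$ (feasibly re-route the $\rho$ mismatched arms arbitrarily, losing at most $R_{max}$ reward per arm per remaining step). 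Combining, each unit of state error at time $t$ costs at most $R_{max}(T-t)$ in downstream reward, and the state errors themselves can compound — an error of $\Delta_t$ at time $t$ plus a fresh error $E_{t+1}$ of order $\sqrt{K|S|N}$ at time $t+1$ gives $\Delta_{t+1} \le \Delta_t + E_{t+1}$ (the mean-field transition is an $\ell_1$-contraction, being a stochastic matrix applied to a distribution, so no amplification). Summing $\sum_{t=1}^{T} R_{max}(T-t)\cdot\Exp[E_t]$ over $t$ yields the $\sum_{t} R_{max}(T-t)\cdot t \cdot \tfrac12(\sqrt{K|S|N}+\dots) $-type double sum, which is $\Theta(T^3)$ if done naively; getting it down to the claimed $T^2/4$ prefactor requires the sharper accounting that an error born at time $t$ propagates for only $T-t$ steps while there are $t$ such contributions, and $\sum_{t=1}^T t(T-t) \le T^3/6$ — so I suspect the $T^2$ in the statement comes from bounding $\sum_t \Exp[\Delta_t]\cdot R_{max}$ directly with $\Exp[\Delta_t] \le t\cdot\tfrac12(\dots)$ and $\sum_t t \le T^2/2$, then one more factor from the reward-Lipschitz step, with the constant $\tfrac14$ tracked carefully. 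The main obstacle, and the step I would spend the most care on, is exactly this bookkeeping of how re-solving the LP each step interacts with error propagation — showing the re-solve does not make things worse (it shouldn't, by optimality of the re-solved LP from the true current state) while keeping the horizon dependence at $T^2$ rather than $T^3$. The high-probability part (2) then follows by replacing, in the same chain, the expectation bound from Lemma~\ref{lm:multinomial}(1) with the deviation bound from Lemma~\ref{lm:multinomial}(2) applied at each of the $T$ steps and a union bound.
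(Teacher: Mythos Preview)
Your ingredients are exactly the paper's: the LP relaxation bound (Lemma~\ref{lm:expected2lp}), the rounding/variance decomposition of the one-step error, the Lipschitz property of $\mV_{t:T}$, and Lemma~\ref{lm:multinomial} for the concentration. Where you go astray is the accounting. You introduce a cumulative drift $\Delta_t$ and worry that ``state errors can compound,'' leading you to a $\sum_t (T-t)\cdot t$ double sum and a feared $T^3$. This compounding never enters the paper's argument, and it need not enter yours: re-solving the LP at each step means you only ever pay for the \emph{one-step} error $\|\mmu^{(t)}_t - \mmu^{(t-1)}_t\|_1 = \|\rmu_t - \mmu^{(t-1)}_t\|_1$, not for any accumulated drift from $\mmu^{(1)}$.

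The clean way to see this (the paper's Lemma~\ref{lm:real2lp}) is the telescoping identity. Since $\mV_{t:T}(\mmu^{(t)}_t) = R_t(\malpha^{(t)}_t) + \mV_{t+1:T}(\mmu^{(t)}_{t+1})$ (the LP at time $t$ achieves its own optimum), one has
\[
\sum_{t=1}^{T} R_t(\malpha^{(t)}_t)
= \mV_{1:T}(\rmu_1) + \sum_{t=2}^{T}\bigl[\mV_{t:T}(\mmu^{(t)}_t) - \mV_{t:T}(\mmu^{(t-1)}_t)\bigr].
\]
Each bracketed term is controlled by the Lipschitz bound $|\mV_{t:T}(\mu)-\mV_{t:T}(\mu')| \le \tfrac{(T-t+1)R_{max}}{2}\|\mu-\mu'\|_1$, and $\|\mmu^{(t)}_t - \mmu^{(t-1)}_t\|_1$ is exactly your one-step $E_{t-1}$, bounded in expectation by $\sqrt{K|S|N}+K|S||A|$ (Lemma~\ref{lm:mubound}). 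Summing $\sum_{t=2}^T (T-t+1) \le T^2/2$ gives the $T^2/4$ prefactor directly; there is no extra factor of $t$ and no $\Delta_t$ anywhere. Adding the separate rounding-of-reward term $TK|S||A|R_{max}$ (Lemma~\ref{lm:rounding}) and absorbing it into the $5K|S||A|$ constant finishes part (1); part (2) is identical with Lemma~\ref{lm:multinomial}(2) and a union bound over $T$ steps, as you said. So your parenthetical ``or against the remaining LP value $\mV_{\tau:T}(\rmu_\tau)$'' was the right thread---follow it, drop $\Delta_t$ entirely, and the $T^2$ falls out.
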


\begin{proof}
We shall prove our result using a sequence of lemmas (Lemmas~\ref{lm:lipsMF} to~\ref{lm:expected2lp}); the proofs are given in the Appendix~\ref{sec:mf:expecationF:app}. 

First, we prove a Lipschitz condition on the objective value of the optimal solution of the mean-field LP. In particular, say $\mu$ and $\mu'$ are two start states such that the $\ell_1$-distance between $\mu$ and $\mu'$ is small, we shall show the difference between the optimal objective value computed by the LP given these start states is also small.
\begin{lemma}\label{lm:lipsMF}
Let $\mu_t$ and $\mu_t'$ be two arbitrary states at time $t$ with $|| \mu_t - \mu_t' ||_1 = \delta$. Then the difference in the optimal objective value of the mean-field LP for time steps $t$ to $T$ starting from $\mu_t$ and $\mu_t'$ is bounded by 
\[
    |V_{t:T}(\mu_t) - V_{t:T}(\mu_t')| \le (T-t+1) R_{max} \delta/2.  
\]
\end{lemma}

In the next lemma, we bound the rounding error introduced in step (2) of the mean-field policy described in Section~\ref{sec:mf:policy}. 
\begin{lemma}\label{lm:rounding}
 $| \rV_{1:T}^{\mpi}(\rmu_1) - \sum_t R_t(\malpha^{(t)}_t)| \le TK|S||A|R_{max}$ almost surely.
\end{lemma}

Next lemma connects the real-life reward for using the mean-field policy to the objective value of the mean-field LP at time $t=1$.
\begin{lemma}\label{lm:real2lp}
We have the following bound: 
$| \sum_t R_t(\malpha^{(t)}_t) - \mV_{1:T}(\mmu_1) | \le \sum_{t \in [2:T]} \frac{(T-t+1) R_{max} || \mmu^{(t)}_t - \mmu^{(t-1)}_t ||_1 }{2}$.
\end{lemma}

Notice that $\mmu^{(t)}_t$ is the state at time $t$ when playing the mean-field policy and $\mmu^{(t-1)}_t$ is the mean-field estimate of the state at time $t$ as estimated by the mean-field LP at time $t-1$. Next two lemmas bound $|| \mmu^{(t)}_t - \mmu^{(t-1)}_t ||_1$.
\begin{lemma}\label{lm:mubound}
We have the following bounds on $|| \mmu^{(t)}_t - \mmu^{(t-1)}_t ||_1$:
\begin{enumerate}
    \item $\Exp[|| \mmu^{(t)}_{t} - \mmu^{(t-1)}_{t} ||_1] \le \sqrt{K|S|N} + K|S||A|$,
    \item $\Prob \bigr[ || \mmu^{(t)}_{t} - \mmu^{(t-1)}_{t} ||_1 \ge \sqrt{2 \log(2) K|S|N + 2 N \log(1/\delta)} + K|S||A| \bigr] \le \delta$, for every $0 < \delta < 1$.
    
\end{enumerate}
    
\end{lemma}

Next we show that the objective value of the mean-field LP at time $t=1$, $\mV_{1:T}(\mmu_1)$, upper bounds the expected reward of any real-life policy.
\begin{lemma}\label{lm:expected2lp}
Let $\rpi$ denote any arbitrary Markov policy. The expected reward of this policy is bounded above by the objective value of the mean-field LP at time $t=1$, i.e., $\Exp[\rV_{1:T}^{\rpi}(\rmu_1)] \le \mV_{1:T}(\mmu_1)$, where $\rmu_1 = \mmu_1$ is an arbitrary start state.
\end{lemma}

We are now ready to prove our theorem using the lemmas given above.
From Lemma~\ref{lm:expected2lp}, we know that $\Exp[\rV_{1:T}^{\rpi}(\rmu_1)] \le \mV_{1:T}(\mmu_1)$ for every Markov policy $\rpi$. So, the optimal expected real-life reward $\rV_{1:T}(\rmu_1) = \max_{\rpi} \Exp[\rV_{1:T}^{\rpi}(\rmu_1)] \le \mV_{1:T}(\mmu_1)$. Further, using Lemma~\ref{lm:rounding} and Lemma~\ref{lm:real2lp}, we have $\mV_{1:T}(\mmu_1) \le \rV^{\mpi}_{1:T}(\rmu_1) +  \sum_{t \in [2:T]} \frac{(T-t+1) R_{max} || \mmu^{(t)}_t - \mmu^{(t-1)}_t ||_1 }{2} + TK|S||A|R_{max}$. Putting together, the optimal real-life reward is bounded above by the reward received by the mean-field policy as follows:
\begin{multline*}
    \rV_{1:T}(\rmu_1) \le \rV^{\mpi}_{1:T}(\rmu_1) + TK|S||A|R_{max}\\
    +  \sum_{t \in [2:T]} \frac{(T-t+1) R_{max} || \mmu^{(t)}_t - \mmu^{(t-1)}_t ||_1 }{2}.
\end{multline*}

\sloppy
Using Lemma~\ref{lm:mubound}, we can lower-bound $\Exp[\rV^{\mpi}_{1:T}(\rmu_1)]$ as
    $\Exp[\rV^{\mpi}_{1:T}(\rmu_1)] \ge \rV_{1:T}(\rmu_1) - \frac{T^2 R_{max}}{4} (5K|S||A| +  \sqrt{K|S|N})$,
and with high probability bound $\rV^{\mpi}_{1:T}(\rmu_1)$ as
    $\Prob\bigr[\rV^{\mpi}_{1:T}(\rmu_1) \ge \rV_{1:T}(\rmu_1) 
    - \frac{T^2 R_{max}}{4} (5K|S||A| + \sqrt{2 \log(2) K|S|N + 2 N \log(T/\delta)})\bigr] \le 1-\delta$, 
for every $0 < \delta < 1$.
\end{proof}

The next theorem proves the result for infinite horizon discounted reward. When the time horizon is infinite, we run the mean-field algorithm for a suitably selected truncated horizon. This proof is a modification of the proof for Theorem~\ref{thm:expecationF}, and is given in Appendix~\ref{sec:mf:expecationIF:app}. 
\begin{theorem}\label{thm:expecationIF}
For the infinite horizon RMAB planning problem with discount factor $\gamma < 1$, the mean-field policy $\mpi$ computed using a suitably truncated horizon has a reward that is at most $O(R_{max}\sqrt{K|S|N}/(1-\gamma)^2)$ less than the expected reward of an optimal policy. In particular,
\begin{enumerate}
    \item lower-bound for expected reward: 
    $\rV_{1:\infty}(\rmu_1) \le \Exp[\rV^{\mpi}_{1:\infty}(\rmu_1)] + \frac{R_{max}((2-\gamma)K|S||A| + \gamma  \sqrt{K|S|N)})}{2(1-\gamma)^2}$ solved using a truncated horizon of $T \ge \frac{2 \sqrt{N}}{ \sqrt{K|S|}} + 1$,
    \item high-probability lower-bound for reward: 
    $\rV_{1:\infty}(\rmu_1) \le \rV^{\mpi}_{1:\infty}(\rmu_1) + \frac{R_{max}((2-\gamma)K|S||A| + \gamma  \sqrt{2 \log(2) K|S|N + 2N \log(N/\delta})}{2(1-\gamma)^2}$
    with probability at least $(1-\delta)$ for every $0 < \delta < 1$ solved using a truncated horizon of $T \ge \frac{\sqrt{2N}}{ \sqrt{\log(2)K|S| + \log(1/\delta)}} + 1$.
\end{enumerate}
\end{theorem}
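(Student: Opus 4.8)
The plan is to recycle the argument behind Theorem~\ref{thm:expecationF}, but to carry the discount weights $\gamma^{t-1}$ through the Lipschitz and telescoping steps so that the $T^2$ appearing in the non-discounted bound is replaced by $\tfrac{1}{(1-\gamma)^2}$, and then to pay a controlled penalty for truncating the infinite horizon at a finite $T$. Concretely, I run \mfp with a finite horizon $T$ (chosen at the end): at each $t\le T$ it solves the (discounted) mean-field LP of Section~\ref{sec:mf:lp} over $[t:T]$ and plays the floored action; write $\mV_{1:T}$ for its optimum. Lemma~\ref{lm:expected2lp} is a pure LP-relaxation statement, unaffected by discounting, so $\Exp[\rV^{\rpi}_{1:T}(\rmu_1)]\le\mV_{1:T}(\rmu_1)$ for every Markov $\rpi$; and since all rewards are nonnegative, any infinite-horizon policy loses at most the discounted tail $\sum_{t>T}\gamma^{t-1}R_{max}N=\tfrac{\gamma^T R_{max}N}{1-\gamma}$ by being restricted to $[1:T]$, whence $\rV_{1:\infty}(\rmu_1)\le\mV_{1:T}(\rmu_1)+\tfrac{\gamma^T R_{max}N}{1-\gamma}$.

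Next I re-prove the discounted analogues of Lemmas~\ref{lm:lipsMF}--\ref{lm:mubound}. The rounding bound becomes $\bigl|\rV^{\mpi}_{1:T}(\rmu_1)-\sum_t\gamma^{t-1}R_t(\malpha^{(t)}_t)\bigr|\le\sum_t\gamma^{t-1}K|S||A|R_{max}\le\tfrac{K|S||A|R_{max}}{1-\gamma}$. Since a stochastic transition matrix is $\ell_1$-nonexpansive, a perturbation of $\ell_1$-mass $\delta$ in the state at time $t$ changes the LP optimum by at most $\tfrac{R_{max}\delta}{2}\sum_{\tau=t}^{T}\gamma^{\tau-1}\le\tfrac{\gamma^{t-1}R_{max}\delta}{2(1-\gamma)}$, the discounted form of Lemma~\ref{lm:lipsMF}; feeding this into the telescoping argument of Lemma~\ref{lm:real2lp} yields $\bigl|\sum_t\gamma^{t-1}R_t(\malpha^{(t)}_t)-\mV_{1:T}(\mmu_1)\bigr|\le\sum_{t\in[2:T]}\tfrac{\gamma^{t-1}R_{max}}{2(1-\gamma)}\|\mmu^{(t)}_t-\mmu^{(t-1)}_t\|_1$. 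Lemma~\ref{lm:mubound} is used verbatim, as the one-step variance error $\|\mmu^{(t)}_t-\mmu^{(t-1)}_t\|_1$ is $\tilde{O}(\sqrt{K|S|N})$ irrespective of $\gamma$ (Lemma~\ref{lm:multinomial}).

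Chaining these as in the proof of Theorem~\ref{thm:expecationF} and using $\rV^{\mpi}_{1:T}\le\rV^{\mpi}_{1:\infty}$ (nonnegative rewards) gives
\begin{multline*}
\rV_{1:\infty}(\rmu_1)\le\rV^{\mpi}_{1:\infty}(\rmu_1)+\tfrac{\gamma^T R_{max}N}{1-\gamma}+\tfrac{K|S||A|R_{max}}{1-\gamma}\\
+\sum_{t\in[2:T]}\tfrac{\gamma^{t-1}R_{max}}{2(1-\gamma)}\|\mmu^{(t)}_t-\mmu^{(t-1)}_t\|_1.
\end{multline*}
Taking expectations and applying Lemma~\ref{lm:mubound}(1), the last sum is at most $\tfrac{1}{2(1-\gamma)}(\sqrt{K|S|N}+K|S||A|)\sum_{t\ge2}\gamma^{t-1}\le\tfrac{\gamma(\sqrt{K|S|N}+K|S||A|)}{2(1-\gamma)^2}$, which together with the rounding term collapses to $\tfrac{R_{max}((2-\gamma)K|S||A|+\gamma\sqrt{K|S|N})}{2(1-\gamma)^2}$ once the truncation term is absorbed. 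For the absorption, use the polynomial-decay bound $\gamma^T\le e^{-(1-\gamma)T}\le\tfrac{1}{(1-\gamma)T}$, so the tail is $\le\tfrac{R_{max}N}{(1-\gamma)^2T}$; taking $T\ge\tfrac{2\sqrt N}{\sqrt{K|S|}}+1$ makes it $\le\tfrac{R_{max}\sqrt{K|S|N}}{2(1-\gamma)^2}$, giving part (1). For part (2), invoke Lemma~\ref{lm:mubound}(2) with parameter $\delta/T$ at each of the $T$ steps and union-bound, which replaces $\sqrt{K|S|N}$ by $\sqrt{2\log(2)K|S|N+2N\log(T/\delta)}$; since the chosen $T$ is $O(\sqrt N)$ we may bound $\log(T/\delta)\le\log(N/\delta)$, and $T\ge\tfrac{\sqrt{2N}}{\sqrt{\log(2)K|S|+\log(1/\delta)}}+1$ balances the truncation term against this quantity.

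The delicate point is the discounted re-derivation in the second step: one must recognize that the cost of an error committed at time $t$ is the \emph{remaining discounted reward weight} $\tfrac{\gamma^{t-1}}{1-\gamma}$, not the remaining \emph{number} of steps $T-t+1$. This is exactly what turns the $\sum_t(T-t+1)\asymp T^2$ of Theorem~\ref{thm:expecationF} into $\sum_t\gamma^{t-1}/(1-\gamma)\asymp 1/(1-\gamma)^2$, and it is why the truncation horizon only needs to be polynomial in $N$ (no $\log N$), so that the union bound in part (2) costs merely $\log(N/\delta)$. Everything else is a routine re-run of the finite-horizon argument.
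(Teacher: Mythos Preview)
Your overall strategy---re-run the finite-horizon chain of lemmas with the weights $\gamma^{t-1}$ and pay a truncation tail---is exactly the paper's, and your discounted versions of Lemmas~\ref{lm:rounding}--\ref{lm:expected2lp} are correct. The one substantive divergence is in how the truncation tail $\gamma^{T}R_{max}N/(1-\gamma)$ is handled, and here your claim that it is ``absorbed'' into the stated bound does not go through. Adding your rounding term and your telescoping sum already yields the theorem's expression $\tfrac{R_{max}((2-\gamma)K|S||A|+\gamma\sqrt{K|S|N})}{2(1-\gamma)^2}$; the truncation term is \emph{additional}. With your bound $\gamma^{T}\le 1/((1-\gamma)T)$ and $T\ge 2\sqrt{N}/\sqrt{K|S|}$ it contributes a further $\tfrac{R_{max}\sqrt{K|S|N}}{2(1-\gamma)^2}$, so the coefficient of $\sqrt{K|S|N}$ becomes $1+\gamma$, not $\gamma$. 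This is harmless for the $O(\cdot)$ headline but does not reproduce the exact constants claimed in parts~(1) and~(2).

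The paper obtains the sharper constant by a different device that you discard in passing: it keeps the \emph{finite}-horizon Lipschitz bound $|V_{t:T}(\mu)-V_{t:T}(\mu')|\le\tfrac{(\gamma^{t-1}-\gamma^{T})R_{max}\delta}{2(1-\gamma)}$ rather than your relaxation $\tfrac{\gamma^{t-1}R_{max}\delta}{2(1-\gamma)}$. The retained $-\gamma^{T}$, summed over $t\in[2:T]$, yields a negative ``credit'' $-\tfrac{\gamma^{T}(T-1)R_{max}(K|S||A|+\sqrt{K|S|N})}{2(1-\gamma)}$. The choice $T\ge\tfrac{2\sqrt N}{\sqrt{K|S|}}+1$ is engineered precisely so that $(T-1)\sqrt{K|S|N}\ge 2N$, making this credit dominate the truncation tail $\tfrac{\gamma^{T}R_{max}N}{1-\gamma}$; the two $\gamma^{T}$ terms then cancel outright and no polynomial bound on $\gamma^{T}$ is ever invoked. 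The same cancellation drives part~(2) with the high-probability quantity in place of $\sqrt{K|S|N}$. Your route via $\gamma^{T}\le 1/((1-\gamma)T)$ is a legitimate and arguably more transparent alternative that delivers the same $O(R_{max}\sqrt{K|S|N}/(1-\gamma)^{2})$ rate, just with a leading constant off by roughly $\tfrac{1+\gamma}{\gamma}$.
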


We complement the error upper bounds of the mean-field algorithm proved in Theorems~\ref{thm:expecationF} and~\ref{thm:expecationIF} with a tight error lower bound with respect to $N$ in Theorem~\ref{thm:mf:lowerbound} given in Appendix~\ref{sec:mf:lowerbound:app}. 



\section{Experiments}\label{sec:exp:main}
We present results of experiments in two simulation environments based on real-world use cases of RMABs: (i) maternity healthcare~\cite{mate2022field}, and (ii) tuberculosis healthcare~\cite{killian2021beyond}.\footnote{Code: \url{https://github.com/google-research/socialgood/tree/mfp/meanfield}.} Both these experiments show the effectiveness of the \mfp algorithm for real-world settings of practical importance for RMABs. 


\begin{figure*}[ht]
  \centering
  \subfloat[ARMMAN Experiments Reward Plot]{
  \includegraphics[width=0.46\textwidth]{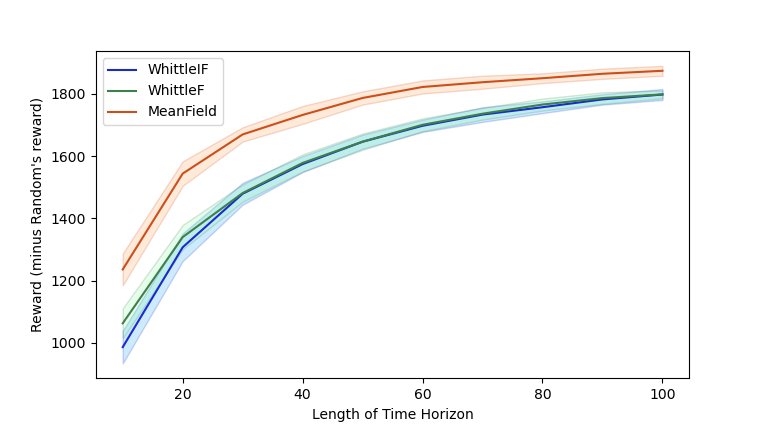}
  \label{fig:armman:1}
  }  
  \subfloat[Tuberculosis Experiment Reward Plot ($1000$ arms)]{
  \includegraphics[width=0.43\textwidth]{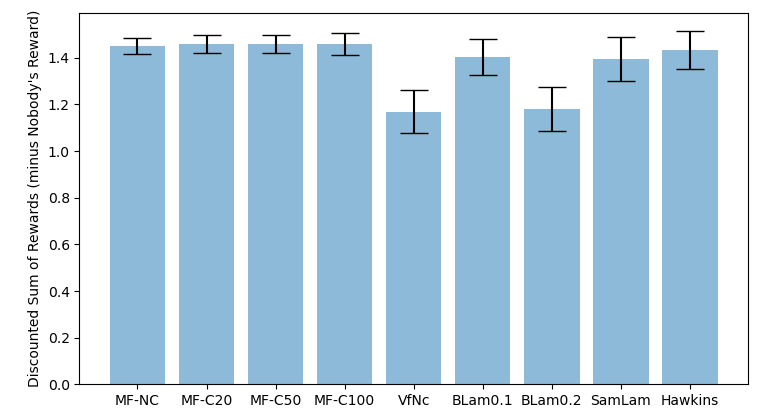}
  \label{fig:tb:1}
  }
\caption{Experimental Results}
\label{fig:armman-tb}
\end{figure*}

\subsection{Mobile Healthcare for Maternal Health}\label{sec:maternity}
This experiment is based on the real-world data of the mMitra program of the healthcare NGO ARMMAN~\cite{armman-mhealth}. Under this program, healthcare workers at ARMMAN make phone calls to enrollees (beneficiaries) of the program to increase engagement and deliver targeted health information. The number of healthcare workers to make phone calls is substantially lower than the number of enrolled beneficiaries, and the healthcare workers have to continually prioritize which beneficiaries to call. \cite{mate2022field} used RMABs to solve this resource allocation problem, modeling each beneficiary as an MDP shown in Figure~\ref{fig:armmanMDP}. \cite{mate2022field} divides the beneficiaries into $40$ clusters, estimates the RMAB parameters (transition probabilities) for each cluster offline, and then uses these parameters to find a resource allocation scheme using the Whittle index policy. 

\begin{figure}[htbp]
\centering
\resizebox{!}{64pt}{%
\begin{tikzpicture}[
    -Triangle, every loop/.append style = {-Triangle},
    start chain=main going right,
    state/.style={circle,minimum size=9mm,draw},
      node distance=14mm,
      font=\scriptsize,
      >=stealth,
      bend angle=48,
      auto
    ]
  \node [state, on chain, fill=black, text=white] (A) {\textbf{\boldmath NE}};
   
  \node[state, on chain, fill=lightgray]  (B) {E};
 
    \path[->,draw, thick, bend right=35]
    (B) edge node[above] {$P(NE|E,a)$} (A);
    \path[thick]
    (B) edge[loop right] node{$P(E|E,a)$} (B);
    
     \path[->,draw, thick, bend right=35]
    (A) edge node[below] {$P(E|NE,a)$} (B);
     \path[thick]
    (A) edge[loop left] node{$P(NE|NE,a)$} (A);

\end{tikzpicture}
}
\caption{MDP model for ARMMAN mMitra program from~\cite{mate2022field}. $NE$ implies that a beneficiary is not engaged with the program, whereas $E$ implies engagement; $a$ is the action.}
\label{fig:armmanMDP}
\end{figure}
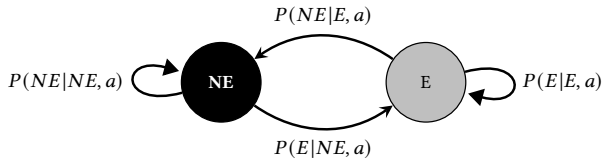

The \mfp algorithm is particularly suitable for the clustered data of ARMMAN.
We communicated with the authors to run our \mfp algorithm on this ARMMAN dataset and we present those results in Figure~\ref{fig:armman:1}. 
We compare the performance of the mean-field policy with the Whittle index policy. We use two versions of the Whittle index policy: (i) the usual infinite-horizon policy; (ii) finite-horizon modification of the Whittle index policy, where the infinite horizon q-values are replaced with finite horizon q-values computed by back-tracking. 
The transition probabilities of $96158$ arms are estimated from the real historical data collected by ARMMAN using the clustering approach of \cite{mate2022field}. 
We use a budget is $1000$, i.e. $1000$ calls per week from healthcare workers to beneficiaries. The discount factor $\gamma$ is set to $0.95$.
The $y$-axis in Figure~\ref{fig:armman:1} is the improvement in the average reward of a policy as compared to the random policy, and the $x$-axis shows the length of the time horizon. 
These results are for $40$ clusters and averaged over $40$ runs. 
See Appendix~\ref{sec:maternity:app} for further details about the experimental environment.

Figure~\ref{fig:armman:1} shows that mean-field performs better than Whittle and has an additional average reward of about $200$. 
\mfp~gives about $20\%$ higher reward (i.e., number of engaged beneficiaries) per \textit{active} action (i.e. phone call) than Whittle. We also observe that mean-field performs better than Whittle for other cluster sizes, given in Appendix~\ref{sec:maternity:app}. Although the worst case time complexity of the mean-field algorithm depends upon the complexity of solving LPs (polynomial but not linear), we see that for our experiments the mean-field algorithm scaled linearly (see Appendix~\ref{sec:maternity:app} for plots). This shows that \mfp may scale well for real-life instances.


\subsection{Tuberculosis Healthcare}\label{sec:tuberculosis}
We test the mean-field algorithm on a rigorous simulation environment (based on a real-world data) developed by \cite{killian2021beyond} motivated by tuberculosis care in India. In this domain, a single healthcare worker manages up to $200$ patients encouraging them to adhere to their $6$-month TB treatment regimen. Each patient is modeled as an individual arm, with state being a tuple of (adherence level, treatment phase, day of treatment). The actions available to the healthcare worker are (i) do nothing (no cost or impact), (ii) call (moderate cost and impact), and (iii) visit (high cost and impact). We refer to \cite{killian2021beyond} for a detailed description of the RMAB. 

We simulate this setting for many parameter combinations. The original simulation framework by \cite{killian2021beyond} did not have any clustering. We ran the mean-field algorithm as: \textbf{MF-NC}, without any clustering ($K = N$); \textbf{MF-C$K$}, with $K$ clusters $K = 20, 50, 100$. We compare our algorithms with the ones in \cite{killian2021beyond}: \textbf{Hawkins}, which is the multi-action extension of Whittle; \textbf{VfNc}, \textbf{SamLam} (called SampleLam in \cite{killian2021beyond}), \textbf{BLam0.1}, and \textbf{BLam0.2}, which are approximations of \textbf{Hawkins} as described in \cite{killian2021beyond}. Figure~\ref{fig:tb:1} plots the reward of these algorithms for $1000$ arms and budget being $10\% = 100$, averaged over $25$ runs (minus the reward of a policy that takes all $0$ cost actions called \textbf{Nobody}). We see that \mfp performs as well as others.
In Appendix~\ref{sec:tuberculosis:app}, we repeat the experiment with smaller number of arms; again, \mfp's performance is as good as the other algorithms.

\textbf{Additional Experiments:} We also simulated another environment proposed in~\cite{killian2021beyond}, which is similar to the examples we show in Section~\ref{sec:fail_case}. 
This simulation environment also has \textit{greedy} and \textit{reliable} type of agents, along with a third agent type named \textit{easy}. 
In these experiments (Appendix~\ref{sec:gre:app}), \mfp significantly outperforms the Whittle index policy. 

\section{Conclusions}
This work challenges the techniques used in practically deployed solutions to important resource allocation problems, and proposes and evaluates an alternative, provably effective method. Specifically, we demonstrate that it is important to read the fine print before applying Whittle index policies for RMABs in sensitive applications like healthcare, by constructing simple, practically relevant and indexable instances of RMABs where Whittle index policies perform poorly. We then propose and analyze the mean field planning (\mfp) algorithm, which is simple, efficient, and hyper-parameter free. This method provably obtains near-optimal policies with a large number of arms $N$ without any structural assumptions of the Whittle index policies. We then demonstrate via extensive experiments that \mfp obtains policies that perform better than the state-of-the-art Lagrangian relaxation based policies---policies that are currently used in practical applications in the healthcare domain.


\begin{acks}
This work was done while AG was doing an internship at Google Research India. We thank Aditya Mate, Shresth Verma, and Jackson Killian for their help with the data and the code for the experiments.
\end{acks}



\bibliographystyle{ACM-Reference-Format} 
\bibliography{ref}


\newpage
\appendix
\section{Appendix for Section~\ref{sec:fail_case}}\label{sec:fail_case:app}
\subsection{Omitted Proofs}\label{sec:fail_case:app:proofs}
\begin{proof}[Proof of Theorem~\ref{thm:hardExample1}]
Let us show that the RMAB given in Example~\ref{ex:hard1} is indexable; we shall get the Whittle indexes as a by-product.

Let $Q^{\lambda}_i(s, a)$ and $V^{\lambda}_i(s)$ be the q-value and v-value, respectively, for an arm of type $i \in \{r,g\}$ in state $s$ and for action $a$, given subsidy for the passive action as $\lambda$ (equivalently, penalty for active action $\lambda$).
Remember that an RMAB is indexable iff the MDP corresponding to each arm is indexable. Further, the MDP corresponding to arm $i$ is indexable iff for every state $s$, there is a value of $\lambda_i(s) \in \bR$ such that $Q^{\lambda}_i(s,0) < Q^{\lambda}_i(s,1)$ for all $\lambda < \lambda_i(s)$ and $Q^{\lambda}_i(s,0) \ge Q^{\lambda}_i(s,1)$ for all $\lambda \ge \lambda_i(s)$. If this indexability condition is satisfied, then this threshold value $\lambda_i(s)$ is called the Whittle index. In other words, we prefer the active action if the subsidy $\lambda$ is less than the Whittle index $\lambda_i(s)$ and prefer the passive action otherwise (given such a $\lambda_i(s)$ exists aka indexability). Also notice that $Q^{\lambda_i(s)}_i(s,0) = Q^{\lambda_i(s)}_i(s,1)$.

Let us now prove indexability and compute the indexes. Let us start with the greedy arms, as they are comparatively simpler to analyze.
\paragraph{\textbf{Arm of Type Greedy}} Let us start with the \textbf{dropout state}. The q-values are:
\begin{equation}\label{eq:thm1:g1}
    Q_g^{\lambda}(s_d,0) = \lambda + \gamma V_g^{\lambda}(s_d); \quad Q_g^{\lambda}(s_d,1) = \gamma V_g^{\lambda}(s_d).
\end{equation}
Therefore, $Q_g^{\lambda}(s_d,0) < Q_g^{\lambda}(s_d,1) \Longleftrightarrow \lambda < 0$. So, the dropout state is indexable with index $\lambda_g(s_d) = 0$. Further, for the value function, if $\lambda < 0$, then $V_g^{\lambda}(s_d) = \max_a Q_g^{\lambda}(s_d, a) = Q_g^{\lambda}(s_d,1) =  \gamma V_g^{\lambda}(s_d) \implies V_g^{\lambda}(s_d) = 0$. If $\lambda \ge 0$, then $V_g^{\lambda}(s_d) = \max_a Q_g^{\lambda}(s_d, a) = Q_g^{\lambda}(s_d, 0) = \lambda + \gamma V_g^{\lambda}(s_d) \implies V_g^{\lambda}(s_d) = \lambda / (1-\gamma)$.
\begin{equation}\label{eq:thm1:g2}
    \text{In summary, } V_g^{\lambda}(s_d) = \begin{cases} 0, &\text{ if $\lambda < 0$,} \\
                                       \frac{\lambda}{1-\gamma}, &\text{ if $\lambda \ge 0$.}               
                         \end{cases}
\end{equation}

Similarly, for the \textbf{engaged state}, the q-values are
\begin{equation}\label{eq:thm1:g3}
    Q_g^{\lambda}(s_e,0) = \lambda + 1 + \gamma V_g^{\lambda}(s_d); \text{ and } \  Q_g^{\lambda}(s_e,1) = 1 + \gamma V_g^{\lambda}(s_d).
\end{equation}
Again, easy to check that $Q_g^{\lambda}(s_e,0) < Q_g^{\lambda}(s_e,1) \Longleftrightarrow \lambda < 0$. So, the engaged state is indexable with index $\lambda_g(s_e) = 0$.

Finally, for the \textbf{start state}, the q-values are
\begin{equation}\label{eq:thm1:g4}
    Q_g^{\lambda}(s_s,0) = \lambda + \gamma V_g^{\lambda}(s_d); \quad Q_g^{\lambda}(s_s,1) = \gamma V_g^{\lambda}(s_e).
\end{equation}
We now need to do a case analysis for $\lambda < 0$ and $\lambda \ge 0$.

\begin{enumerate}
    \item 
First, let us look at the case when $\lambda < 0$. 
This implies $V_g^{\lambda}(s_d) = 0$ from \eqref{eq:thm1:g2}. Also, $V_g^{\lambda}(s_e) = \max_a Q_g^{\lambda}(s_e,a) = Q_g^{\lambda}(s_e,1) = 1 + \gamma V_g^{\lambda}(s_d) = 1$. Therefore, $Q_g^{\lambda}(s_s,1) = \gamma V_g^{\lambda}(s_e) = \gamma > 0 = \gamma V_g^{\lambda}(s_d) > \lambda + \gamma V_g^{\lambda}(s_d) = Q_g^{\lambda}(s_s,0)$.

\item 
Now, let us look at the case when $\lambda \ge 0$. This implies that $V_g^{\lambda}(s_d) = \frac{\lambda}{1-\gamma}$ from \eqref{eq:thm1:g2}. Also, $V_g^{\lambda}(s_e) = Q_g^{\lambda}(s_e,0) = \lambda + 1 + \gamma V_g^{\lambda}(s_d) = 1 + \lambda + \gamma \frac{\lambda}{1-\gamma} = 1 + \frac{\lambda}{1-\gamma}$.
\begin{multline*}
    Q_g^{\lambda}(s_s,1) > Q_g^{\lambda}(s_s,0) \Longleftrightarrow \gamma V_g^{\lambda}(s_e) > \lambda + \gamma V_g^{\lambda}(s_d)  \\ 
    \Longleftrightarrow \gamma + \gamma \frac{\lambda}{1-\gamma} > \lambda + \gamma \frac{\lambda}{1-\gamma} \Longleftrightarrow \lambda < \gamma.
\end{multline*}
\end{enumerate}

Combining the results for $\lambda < 0$ and $\lambda \ge 0$, we get $Q_g^{\lambda}(s_s,1) > Q_g^{\lambda}(s_s,0) \Longleftrightarrow \lambda < \gamma$. So, the start state is indexable and has an index $\lambda_g(s_s) = \gamma$.

So, the MDP corresponding to a greedy arm is indexable.

\paragraph{\textbf{Arm of Type Reliable}} Let us again start with the \textbf{dropout state}. It is easy to check that the dropout state of a reliable arm behaves in exactly the same way as a greedy arm and has the same q-values and v-value. So, we get
\begin{equation}\label{eq:thm1:r1}
    Q_r^{\lambda}(s_d,0) = \lambda + \gamma V_r^{\lambda}(s_d); \quad Q_r^{\lambda}(s_d,1) = \gamma V_r^{\lambda}(s_d).
\end{equation}
\begin{equation}\label{eq:thm1:r2}
    V_r^{\lambda}(s_d) = \begin{cases} 0, &\text{ if $\lambda < 0$,} \\
                                       \frac{\lambda}{1-\gamma}, &\text{ if $\lambda \ge 0$.}               
                         \end{cases}
\end{equation}
Also, the dropout state is indexable with an index of $\lambda_r(s_d) = 0$ as before.

The \textbf{engaged state} of the reliable arm behaves differently from the greedy arm. The q-values have a formula
\begin{equation}\label{eq:thm1:r3}
    Q_r^{\lambda}(s_e,0) = \lambda + (1-\epsilon) + \gamma V_r^{\lambda}(s_d); \quad Q_r^{\lambda}(s_e,1) = (1-\epsilon) + \gamma V_r^{\lambda}(s_e).
\end{equation}
Notice that $V_r^{\lambda}(s_e)$ has a lower bound of
\begin{multline}\label{eq:thm1:r4}
    V_r^{\lambda}(s_e) = \max_a Q_r^{\lambda}(s_e,a) \ge Q_r^{\lambda}(s_e,1) = (1-\epsilon) + \gamma V_r^{\lambda}(s_e) \\
    \implies V_r^{\lambda}(s_e) \ge \frac{1-\epsilon}{1-\gamma}.
\end{multline}
To prove indexability, let us again do a case analysis for $\lambda < 0$ and $\lambda \ge 0$.

\begin{enumerate}
    \item 
If $\lambda < 0$. Then $V_r^{\lambda}(s_d) = 0$ from \eqref{eq:thm1:r2}. We get $Q_r^{\lambda}(s_e,0) = \lambda + (1-\epsilon) + \gamma V_r^{\lambda}(s_d) = \lambda + (1-\epsilon) < (1-\epsilon)$, while $Q_r^{\lambda}(s_e,1) = (1-\epsilon) + \gamma V_r^{\lambda}(s_e) > (1-\epsilon)$. So, $Q_r^{\lambda}(s_e,1) > Q_r^{\lambda}(s_e,0)$.

    \item
If $\lambda \ge 0$. Then $V_r^{\lambda}(s_d) = \frac{\lambda}{1-\gamma}$. We get $Q_r^{\lambda}(s_e,0) = \lambda + (1-\epsilon) + \gamma\frac{\lambda}{1-\gamma} = (1-\epsilon) + \frac{\lambda}{1-\gamma}$. Now, $Q_r^{\lambda}(s_e,1) = (1-\epsilon) + \gamma V_r^{\lambda}(s_e) = (1-\epsilon) + \gamma \max(Q_r^{\lambda}(s_e,0), Q_r^{\lambda}(s_e,1))$, so we have
\begin{itemize}
    \item $Q_r^{\lambda}(s_e,1) > Q_r^{\lambda}(s_e,0) \implies Q_r^{\lambda}(s_e,1) = (1-\epsilon) + \gamma Q_r^{\lambda}(s_e,1) \implies Q_r^{\lambda}(s_e,1) = \frac{1-\epsilon}{1-\gamma}$. As $Q_r^{\lambda}(s_e,0) = (1-\epsilon) + \frac{\lambda}{1-\gamma}$, we get the condition on $\lambda$ as $Q_r^{\lambda}(s_e,1) > Q_r^{\lambda}(s_e,0) \implies  \frac{1-\epsilon}{1-\gamma} > (1-\epsilon) + \frac{\lambda}{1-\gamma} \implies \lambda < \gamma (1-\epsilon)$.
    
    \item $Q_r^{\lambda}(s_e,1) \le Q_r^{\lambda}(s_e,0) \implies Q_r^{\lambda}(s_e,1) = (1-\epsilon) + \gamma Q_r^{\lambda}(s_e,0)$. We get the condition on $\lambda$ as $Q_r^{\lambda}(s_e,1) \le Q_r^{\lambda}(s_e,0) \implies (1-\epsilon) + \gamma Q_r^{\lambda}(s_e,0) \le Q_r^{\lambda}(s_e,0) \implies  Q_r^{\lambda}(s_e,0) \ge \frac{1-\epsilon}{1-\gamma} \implies  (1-\epsilon) + \frac{\lambda}{1-\gamma} \ge \frac{1-\epsilon}{1-\gamma} \implies \lambda \ge \gamma (1-\epsilon)$.
\end{itemize}
\end{enumerate}

Combining the above two results, and also the result for $\lambda < 0$, we get that $Q_r^{\lambda}(s_e,1) > Q_r^{\lambda}(s_e,0) \Longleftrightarrow \lambda < \gamma (1-\epsilon)$. So, this state is indexable with index $\lambda_r(s_e) = \gamma (1-\epsilon)$.

Let us now look at the \textbf{start state} of the reliable arm. We can write the equation for the q-values as
\begin{equation}\label{eq:thm1:r5}
    Q_r^{\lambda}(s_s,0) = \lambda + (1-\epsilon) + \gamma V_r^{\lambda}(s_d); \quad Q_r^{\lambda}(s_s,1) = (1-\epsilon) + \gamma V_r^{\lambda}(s_e).
\end{equation}
Notice that the right hand side of the equations in \eqref{eq:thm1:r5} for the start state match exactly with equations in \eqref{eq:thm1:r3} for the engaged state. In other words, for every $\lambda \in \bR$ and $a \in \{0,1\}$, $Q_r^{\lambda}(s_s,a) = Q_r^{\lambda}(s_e,a)$. So, the start state is also indexable with the same index as the engaged state and equal to $\lambda_r(s_s) = \lambda_r(s_e) = \gamma (1-\epsilon)$.

So, the MDP corresponding to a reliable arm is also indexable. \textit{Overall, the RMAB is thus indexable}.

For any $\epsilon > 0$, at the state state, the Whittle index for a greedy arm $\lambda_g(s_s) = \gamma$ is strictly greater than a reliable arm $\lambda_r(s_s) = \gamma (1-\epsilon)$. Therefore, the Whittle index policy prefers to play the active action for a greedy arm compared to a reliable arm when budget is limited. The discounted sum of rewards for the Whittle index policy is $0 + n\gamma + 0 + \ldots = n\gamma$. On the other hand, we can observe that by always playing the active action for the reliable arms, we get a discounted sum of rewards $0 + n\gamma(1-\epsilon) + n\gamma^2 (1-\epsilon) + n\gamma^3 (1-\epsilon) + \ldots = \frac{n\gamma (1-\epsilon)}{1-\gamma}$. So, the multiplicative loss in reward for the Whittle index policy is $\frac{1-\epsilon}{1-\gamma} \rightarrow \infty$ for $0 < \epsilon < 1$ and $\gamma \rightarrow 1$. Similarly, the additive loss is $\frac{n\gamma (1-\epsilon)}{1-\gamma} - n\gamma \rightarrow \infty$ for $0 < \epsilon < 1$ and $\gamma \rightarrow 1$.
\end{proof}

\begin{proof}[Proof of Indexability of Example~\ref{ex:hard3}]
As described in Example~\ref{ex:hard3}, we prove the indexability of Example 3 graphically. We plot the difference in q-values of active and passive passive actions in Figure~\ref{fig:hard3:1}. Figure~\ref{fig:hard3:2} is a zoomed-in version of Figure~\ref{fig:hard3:1} to show that the negative slope plots cross the $x$-axis at unique points and that the Whittle index (the $x$-intercepts) of the greedy-start state is more than that of the reliable-start state.
\end{proof}
\begin{figure}[ht]
  \centering
  \subfloat[]{
  \includegraphics[width=0.9\linewidth]{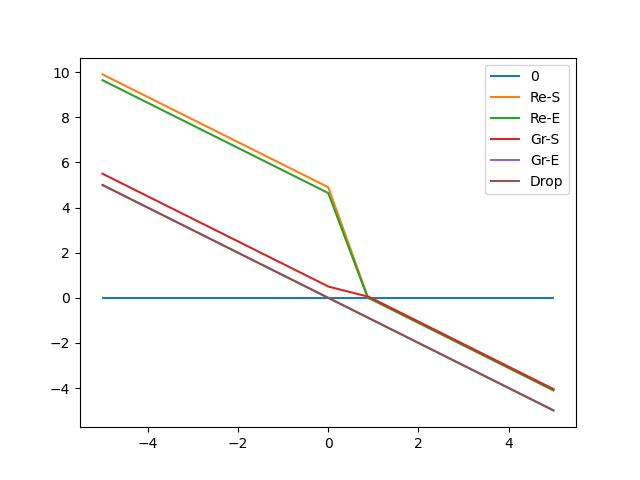}
  \label{fig:hard3:1}
  }\\
  \subfloat[]{
  \includegraphics[width=0.9\linewidth]{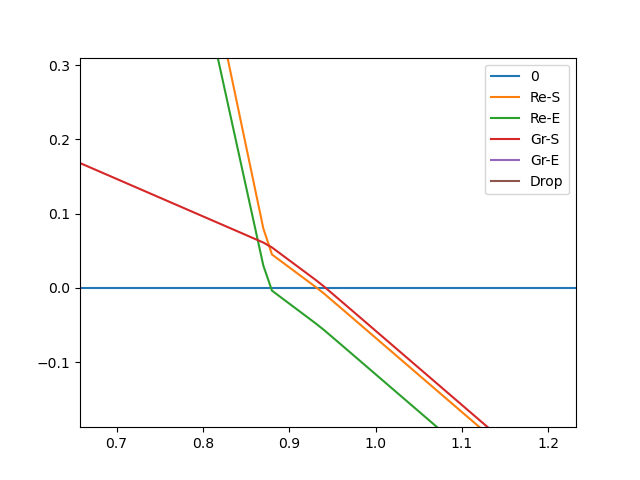}
  \label{fig:hard3:2}
  }
\caption{
Figure~\ref{fig:hard3:1} provides the indexability proof for Example~\ref{ex:hard3} (with $\epsilon = 0.01$, $\eta_s = 0.05$, $\eta_r = 0.1$, $\eta_d = 0.1$ for non-discounted average reward).
Figure~\ref{fig:hard3:2} is a zoomed-in version of Figure~\ref{fig:hard3:1} to see the uniqueness of the $x$-intercept. These figure are best seen in color.
}
\label{fig:hard3:12combined}
\end{figure}





\subsection{Sub-optimality of Whittle with respect to the Mixing Time and the Horizon}\label{sec:fail_case:app:mixing}
In Example~\ref{ex:hard3}, we saw that the Whittle index policy is sub-optimal for finite time or discounted reward even though the MDP satisfied all conditions of \cite{weber1990index}. We also mentioned that the Whittle index policy goes towards optimality as the effective horizon ($T$ for finite time non-discounted reward or $1/(1-\gamma)$ for infinite time discounted reward) increases or the mixing time decreases. Here, we provide the plots for this observation.

We compare the Whittle index policy (\texttt{Whittle}) with an alternate policy (\texttt{Alternate}). \texttt{Alternate} prioritizes the arms in the states in the following preference order: $s^r_e > s^r_s > s^g_s > s_d > s^g_e$. We plot the reward of \texttt{Whittle} and \texttt{Alternate}, and their ratio \texttt{Whittle}/\texttt{Alternate}, with respect to
\begin{enumerate}
    \item The discount factor $\gamma$. As $\gamma$ increases, the effective time horizon for the infinite time discounted reward, $1/(1-\gamma)$, also increases, so Whittle's reward gets close to optimal. Figure~\ref{fig:hard3:gamma}.
    \item The horizon $T$. As the horizon $T$ for finite time non-discounted reward increases, Whittle's reward gets close to optimal. Figure~\ref{fig:hard3:horizon}.
    \item The ergodic parameter $\eta_s$, $\eta_r$, and $\eta_d$ are as defined in Example~\ref{ex:hard3}. We set $\eta_s = \eta_r = \eta_d = \eta$. As $\eta$ increases, the mixing time decreases, so Whittle's reward gets close to optimal. Figure~\ref{fig:hard3:eta}.
\end{enumerate}
When we vary either of these parameters, we keep the other parameters fixed at $\gamma = 0.95$ (except when we have finite horizon, then $\gamma = 1$), $T = \infty$, and $\eta = 0.01$.

\begin{figure*}[ht]
  \centering
  \subfloat[Reward Plot]{
  \includegraphics[width=0.45\textwidth]{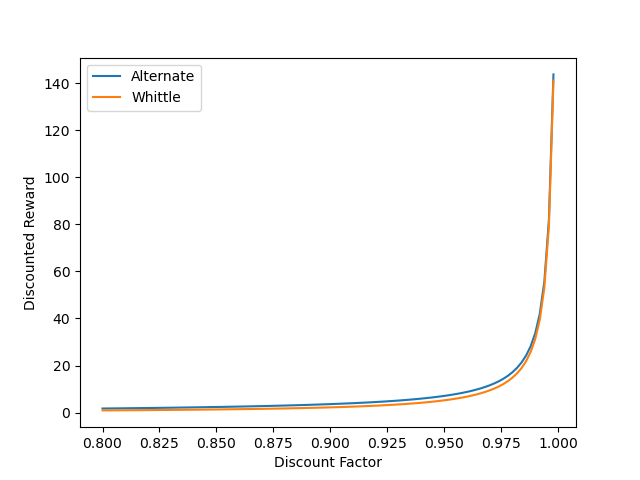}
  }
  \subfloat[Ratio of Rewards Plot]{
  \includegraphics[width=0.45\textwidth]{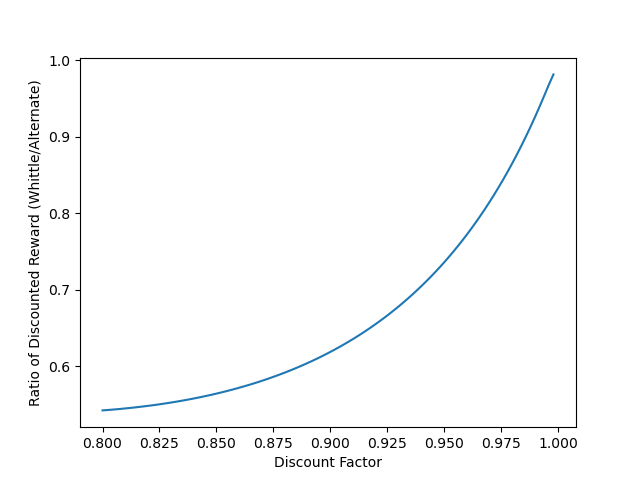}
  }

\caption{Reward vs Discount Factor $\gamma$ (Example~\ref{ex:hard3}).}
\label{fig:hard3:gamma}
\end{figure*}

\begin{figure*}[ht]
  \centering
  \subfloat[Reward Plot]{
  \includegraphics[width=0.45\textwidth]{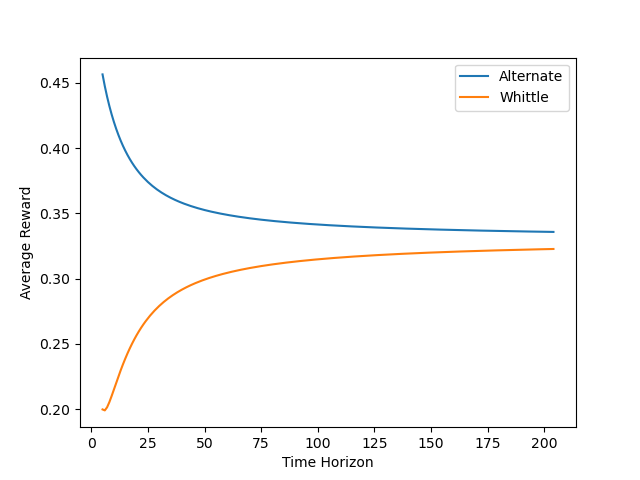}
  }
  \subfloat[Ratio of Rewards Plot]{
  \includegraphics[width=0.45\textwidth]{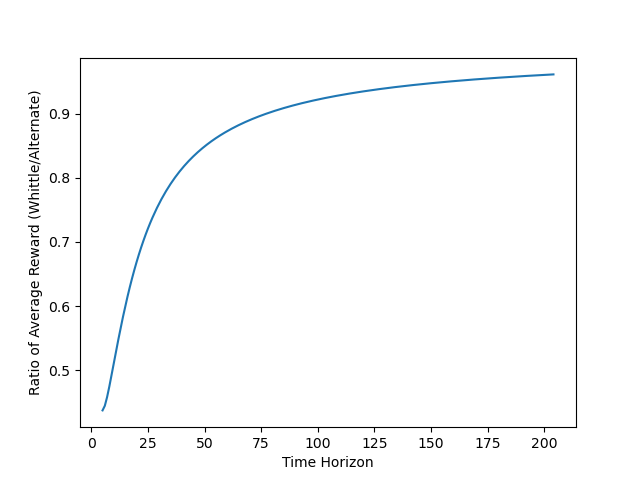}
  }

\caption{Reward vs Finite Time Horizon $T$ (Example~\ref{ex:hard3}).}
\label{fig:hard3:horizon}
\end{figure*}

\begin{figure*}[ht]
  \centering
  \subfloat[Reward Plot]{
  \includegraphics[width=0.45\textwidth]{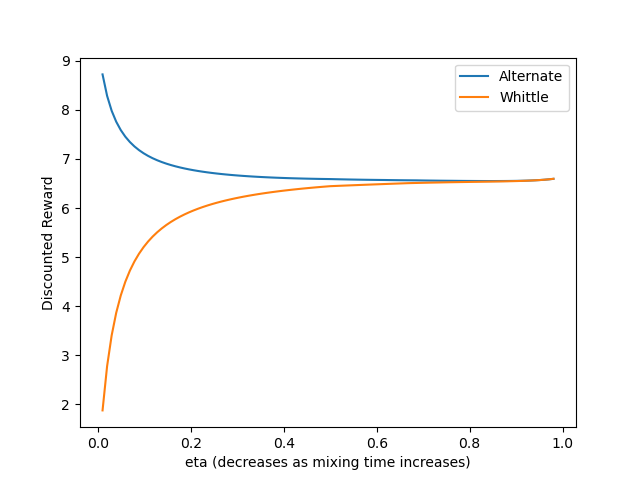}
  }
  \subfloat[Ratio of Rewards Plot]{
  \includegraphics[width=0.45\textwidth]{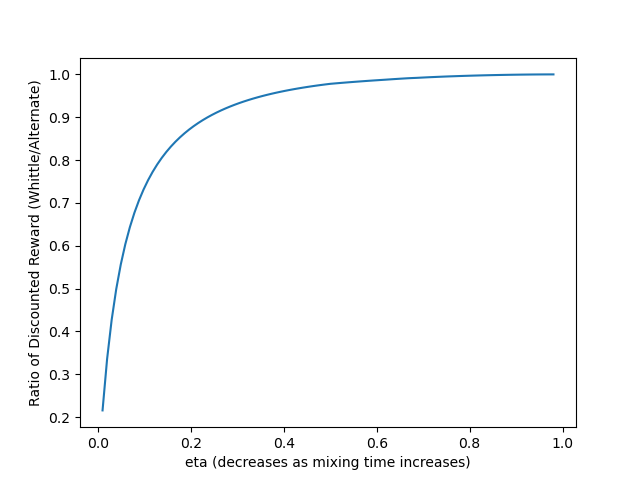}
  }

\caption{Reward vs Ergodic Parameter $\eta$ (Example~\ref{ex:hard3}).}
\label{fig:hard3:eta}
\end{figure*}

\section{Appendix for Section~\ref{sec:mf:analysis}}\label{sec:mf:analysis:app}

\subsection{Appendix for Section~\ref{sec:mf:prelim}}\label{sec:mf:prelim:app}
\begin{lemma}[Hoeffding lemma. \cite{van2014probability} Lemma~3.6]\label{lm:hoeffding}
Let $a \le X \le b$ almost surely for some $a, b \in \bR$. Then $\Exp[e^{\lambda(X-\Exp[X])}] \le e^{\lambda^2(b-a)^2/8}$, i.e., $X$ is $(b-a)^2/4$-subgaussian.
\end{lemma}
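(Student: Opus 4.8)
The plan is to reduce to the centered case, use convexity of the exponential, and then control an auxiliary function via a second-order Taylor expansion.

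\textbf{Step 1 (centering).} I would first observe that it suffices to treat the case $\Exp[X]=0$: replacing $X$ by $X-\Exp[X]$ leaves $\Exp[e^{\lambda(X-\Exp[X])}]$ unchanged and replaces $[a,b]$ by the interval $[a-\Exp[X],\,b-\Exp[X]]$, which has the same width $b-a$. In the centered picture we automatically have $a\le 0\le b$.

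\textbf{Step 2 (convexity).} For any $x\in[a,b]$, write $x$ as the convex combination $x=\tfrac{b-x}{b-a}\,a+\tfrac{x-a}{b-a}\,b$ and apply convexity of $t\mapsto e^{\lambda t}$ to get $e^{\lambda x}\le \tfrac{b-x}{b-a}e^{\lambda a}+\tfrac{x-a}{b-a}e^{\lambda b}$. Taking expectations and using $\Exp[X]=0$ to kill the terms linear in $x$ leaves $\Exp[e^{\lambda X}]\le \tfrac{b}{b-a}e^{\lambda a}-\tfrac{a}{b-a}e^{\lambda b}$.

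\textbf{Step 3 (reparametrize).} Set $p=-a/(b-a)\in[0,1]$ and $h=\lambda(b-a)$. A short algebraic manipulation rewrites the right-hand side as $e^{-ph}\bigl(1-p+p e^{h}\bigr)=e^{\phi(h)}$, where $\phi(h):=-ph+\log\!\bigl(1-p+p e^{h}\bigr)$. Thus it remains to prove $\phi(h)\le h^{2}/8$ for all $h\in\bR$, since then $\Exp[e^{\lambda X}]\le e^{\phi(h)}\le e^{h^2/8}=e^{\lambda^2(b-a)^2/8}$, and the $(b-a)^2/4$-subgaussianity is just a restatement of this.

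\textbf{Step 4 (Taylor bound).} I would compute $\phi(0)=0$, $\phi'(0)=0$, and $\phi''(h)=\dfrac{(1-p)p e^{h}}{\bigl(1-p+p e^{h}\bigr)^{2}}$. Writing $u=pe^h\ge 0$ and $v=1-p\ge 0$, this is $\dfrac{uv}{(u+v)^2}\le \tfrac14$ by AM--GM (equivalently, $\phi''(h)=q(1-q)$ with $q=pe^h/(1-p+pe^h)\in[0,1]$). Taylor's theorem with Lagrange remainder then gives $\phi(h)=\tfrac{h^2}{2}\phi''(\xi)\le h^2/8$ for some $\xi$ between $0$ and $h$, completing the proof.

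The only nontrivial point is the uniform bound $\phi''\le 1/4$ in Step 4 — recognizing $\phi''(h)$ as a variance-like quantity $q(1-q)$ maximized at $1/4$; the remaining steps are routine bookkeeping.
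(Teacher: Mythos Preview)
Your proof is correct and is the standard argument for Hoeffding's lemma (convexity plus the $\phi''\le 1/4$ Taylor bound). The paper does not supply its own proof of this statement; it simply quotes the lemma from the cited reference \cite{van2014probability}, so there is nothing to compare against beyond noting that your argument is essentially the one appearing in that source.
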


\begin{lemma}[Azuma--Hoeffding. \cite{van2014probability} Lemma~3.7]\label{lm:azuma}
Let $(\cF_{k})_{k \le n}$ be any filtration, and let $X_1, \ldots, X_n$ be random variables that satisfy the following properties for $k = 1, \ldots, n$:
\begin{enumerate}
    \item Martingale difference property: $X_k$ is $\cF_k$-measurable and $\Exp[X_k \mid \cF_{k-1}] = 0$.
    \item Conditional subgaussian property: $\Exp[e^{\lambda X_k} \mid \cF_{k-1}] \le e^{\lambda^2 \sigma_k^2 / 2}$ almost surely.
\end{enumerate}
Then the sum $\sum_{k=1}^n X_k$ is subgaussian with variance proxy $\sum_{k=1}^n \sigma_k^2$, which can be used to obtain, for every $\epsilon \ge 0$, the tail bound
\[
    \Prob[\sum_{k=1}^n X_k \ge \epsilon] \le \exp{\left( - \frac{ \epsilon^2}{2 \sum_{k=1}^n \sigma_k^2} \right)}.
\]
The same probability bound holds for $\Prob[\sum_{k=1}^n X_k \le -\epsilon]$ as well (replace $X_k$ with $-X_k$).
\end{lemma}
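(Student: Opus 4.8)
The plan is to run the standard moment-generating-function (Chernoff) argument for martingale sums. Fix $\lambda \in \bR$ and set $S_k := \sum_{j=1}^{k} X_j$ with $S_0 := 0$. First I would establish the one-step recursion: conditioning on $\cF_{n-1}$ and using that $S_{n-1}$ is $\cF_{n-1}$-measurable together with the assumed conditional subgaussian bound on $X_n$,
\[
\Exp\bigl[e^{\lambda S_n}\bigr] = \Exp\bigl[e^{\lambda S_{n-1}}\,\Exp[e^{\lambda X_n}\mid \cF_{n-1}]\bigr] \le e^{\lambda^2 \sigma_n^2/2}\,\Exp\bigl[e^{\lambda S_{n-1}}\bigr].
\]
Iterating this down to $S_0$, for which $\Exp[e^{\lambda S_0}] = 1$, yields $\Exp[e^{\lambda S_n}] \le \exp\bigl(\tfrac{\lambda^2}{2}\sum_{k=1}^{n}\sigma_k^2\bigr)$ for every real $\lambda$; that is, $S_n$ is subgaussian with variance proxy $\sum_{k=1}^{n}\sigma_k^2$, which is the first assertion of the lemma.

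Next I would convert this into the tail bound in the usual way. For $\epsilon \ge 0$ and any $\lambda > 0$, Markov's inequality applied to $e^{\lambda S_n}$ gives $\Prob[S_n \ge \epsilon] \le e^{-\lambda \epsilon}\,\Exp[e^{\lambda S_n}] \le \exp\bigl(-\lambda\epsilon + \tfrac{\lambda^2}{2}\sum_{k}\sigma_k^2\bigr)$. Minimizing the exponent over $\lambda > 0$ (the optimizer is $\lambda = \epsilon/\sum_k \sigma_k^2$) produces exactly $\Prob[S_n \ge \epsilon] \le \exp\bigl(-\epsilon^2/(2\sum_k \sigma_k^2)\bigr)$. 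If $\sum_k \sigma_k^2 = 0$ the MGF bound forces $S_n = 0$ almost surely, so the inequality holds trivially and there is no degenerate case to handle separately.

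Finally, for the lower tail I would observe that the sequence $(-X_k)_{k \le n}$ satisfies the two hypotheses with respect to the \emph{same} filtration: $-X_k$ is $\cF_k$-measurable with $\Exp[-X_k \mid \cF_{k-1}] = 0$, and $\Exp[e^{\lambda(-X_k)} \mid \cF_{k-1}] = \Exp[e^{(-\lambda)X_k} \mid \cF_{k-1}] \le e^{(-\lambda)^2 \sigma_k^2/2} = e^{\lambda^2 \sigma_k^2/2}$, where the inequality is the assumed bound applied with $-\lambda$ in place of $\lambda$. Applying the upper-tail statement already proven to $-S_n = \sum_k (-X_k)$ then gives $\Prob[S_n \le -\epsilon] \le \exp\bigl(-\epsilon^2/(2\sum_k \sigma_k^2)\bigr)$.

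This argument is entirely routine and I do not anticipate a real obstacle. The only steps that warrant a little care are the measurability justification when factoring $e^{\lambda S_{n-1}}$ out of the conditional expectation in the recursion (valid since $S_{n-1}$ is $\cF_{n-1}$-measurable and all integrands are nonnegative, with finiteness supplied by the inductive bound), and the verification that the two hypotheses are preserved under $X_k \mapsto -X_k$ for the lower-tail half.
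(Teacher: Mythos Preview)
Your proof is correct and is exactly the standard Chernoff/MGF argument for Azuma--Hoeffding. The paper does not give its own proof of this lemma; it simply cites it from \cite{van2014probability} (Lemma~3.7), so there is nothing further to compare.
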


\begin{proof}[Proof of Lemma~\ref{lm:multinomial}]
Let $p_{ij} = \Prob[X_i = e_j]$, and therefore, $\Exp[\langle X_i,e_j\rangle] = p_{ij}$.

Therefore,
\begin{align}
    &\Exp[|| \sum_{i \in [n]} (X_i - \Exp[X_i]) ||_1] = \Exp[\sum_{j=1}^{k}\bigr| \sum_{i \in [n]} \langle e_j,(X_i - \Exp[X_i])\rangle \bigr|] \nonumber \\
    &\leq \sum_{j=1}^{k}\sqrt{\Exp[\langle\sum_{i\in [n]}X_i - \Exp[X_i],e_j\rangle^2]} \\
    &= \sum_{j=1}^{k}\sqrt{  \sum_{i\in [n]} \Exp[(\langle X_{i},e_j\rangle - p_{ij})^2]}
    \nonumber \\&= \sum_{j=1}^{k}\sqrt{  \sum_{i\in [n]}p_{ij}(1-p_{ij})} \leq \sum_{j=1}^{k}\sqrt{  \sum_{i\in [n]}p_{ij}}.
\end{align}

In the first step, we have used the definition of $\ell_1$ norm. 
In the second step, we have used Jensen's inequality to bound the first moment using the second moment. 
In the third step, we have computed the variance of the Bernoulli random variable $\langle X_{i},e_j\rangle$. 
Note that $\sum_{j\in [k]}\sum_{i\in [n]} p_{ij} = n$. Using the fact that $\sum_{j=1}^{k}\sqrt{n_j}$ subject to the constraint $\sum_j n_j = n$ is maximized when $n_j = \frac{n}{k}$, we conclude from the display above that: 
\[
    \Exp[|| \sum_{i \in [n]} (X_i - \Exp[X_i]) ||_1] \leq \sqrt{kn}.
\]

Note that since $X_{i}$ is almost surely a standard basis vector, therefore we must have $\sum_j p_{ij} = 1$ and $ \sum_j \langle X_i,e_j \rangle = 1$ almost surely. Therefore, given any arbitrary $\alpha := (\alpha_1,\dots, \alpha_k) \in \{-1,1\}^k$,
$a_i(\alpha) \leq \sum_{j \in [k]} (\langle X_i,e_j\rangle - p_{ij})\alpha_j \leq a_i(\alpha)+2 $ for some $a \in \mathbb{R}$ almost surely. Therefore, applying 
Lemmas~\ref{lm:hoeffding} and~\ref{lm:azuma} to the random variables $Y_i(\alpha) := \sum_{j \in [k]} (\langle X_i,e_j\rangle - p_{ij})\alpha_j$ for $i \in [n]$, we conclude:

$$\Prob[\sum_{i \in [n]} Y_i(\alpha) \geq \epsilon] \leq \exp(-\tfrac{\epsilon^2}{2n})$$

Now, observe that $$|| \sum_{i \in [n]} (X_i - \Exp[X_i]) ||_1 = \sup_{\alpha \in \{-1,1\}^k} \sum_{i\in [n]}Y_i(\alpha)  $$
Applying the union bound, we conclude:

\begin{align}
    &\Prob[|| \sum_{i \in [n]} (X_i - \Exp[X_i]) ||_1 \ge \epsilon] = \Prob[\cup_{\alpha \in \{-1,1\}^k}\{\sum_{i\in [n]}Y_i(\alpha) \ge \epsilon\}] \nonumber \\
    &\leq \sum_{\alpha \in \{-1,1\}^k} \Prob[\{\sum_{i\in [n]}Y_i(\alpha) \ge \epsilon\}] \nonumber \\
    &\leq 2^{k}\exp(-\tfrac{\epsilon^2}{2n})
\end{align}

Taking $\epsilon = \sqrt{2\log(2)kn + 2n\log(\tfrac{1}{\delta})}$ above, concludes the statement of Lemma~\ref{lm:multinomial}. 
\end{proof}

\subsection{Sub-Optimality Upper Bound for Finite Horizon}
\label{sec:mf:expecationF:app}
\begin{proof}[Proof of Lemma~\ref{lm:lipsMF}]
Let us prove Lemma~\ref{lm:lipsMF} for $t=1$ given $\mu_1$ and $\mu_1'$. The proof trivially generalizes to all $t \in [T]$.

The lemma statement, which we need to prove, says that $|| \mu_1 - \mu_1' ||_1 = \delta \implies |V_{1:T}(\mu_1) - V_{1:T}(\mu_1')| \le R_{max} T \delta/2$. Notice that $|| \mu_1 - \mu_1' ||_1 = \sum_{i,s} |\mu_{1,i}(s) - \mu_{1,i}'(s)| = \sum_{i,s} \max(0,\mu_{1,i}(s) - \mu_{1,i}'(s)) + \sum_{i,s} \max(0, \mu_{1,i}'(s) - \mu_{1,i}(s))$. As $\mu_1$ and $\mu_1'$ both sum up to the same value, the number of arms $N$, i.e., $\sum_{i,s} \mu_{1,i}(s) = \sum_{i,s} \mu_{1,i}'(s) = N$, we can see that $\sum_{i,s} \max(0, \mu_{1,i}(s) - \mu_{1,i}'(s)) = \sum_{i,s} \max(0, \mu_{1,i}'(s) - \mu_{1,i}(s)) = \delta/2$. 

Notice that it suffices to prove $\sum_{i,s} \max(0, \mu_{1,i}(s) - \mu_{1,i}'(s)) = \delta/2 \implies V_{1:T}(\mu_1) - V_{1:T}(\mu_1') \le T R_{max} \delta/2 \Longleftrightarrow V_{1:T}(\mu_1') \ge V_{1:T}(\mu_1) - T R_{max} \delta/2$. Because if we swap the variables $\mu_1$ and $\mu_1'$, by symmetry, we get other side of the required inequality: $\sum_{i,s} \max(0, \mu_{1,i}'(s) - \mu_{1,i}(s)) = \delta/2 \implies V_{1:T}(\mu_1') - V_{1:T}(\mu_1) \le T R_{max} \delta/2$

Let $\mu = (\mu_t)_{t \in [T]}$ and $\alpha = (\alpha_t)_{t \in [T]}$ be the optimal solution of the LP given start state $\mu_1$. Notice that by definition $V_{1:T}(\mu_1) = \sum_t R_t(\alpha_t)$. We shall construct a feasible solution $(\mu', \alpha')$ when the start state is $\mu_1'$ that has objective value $\sum_t R_t(\alpha_t') \ge V_{1:T}(\mu_1) - T R_{max} \delta/2$, which implies that $V_{1:T}(\mu_1') \ge \sum_t R_t(\alpha_t') \ge V_{1:T}(\mu_1) - T R_{max} \delta/2$, as required.

Let us define the feasible solution $(\mu', \alpha')$ by decomposing it into two parts: $(\mu^{A}, \alpha^{A})$ and $(\mu^{B}, \alpha^{B})$, where $\mu' = \mu^{A} + \mu^{B}$ and $\alpha' = \alpha^{A} + \alpha^{B}$. 
Let us first focus on $(\mu^{A}, \alpha^{A})$. Let $\mu^{A}_1 = \min(\mu_1, \mu_1')$ (coordinate wise). Observe that by definition, $\mu^{A}_1 \le_{all} \mu_1$ and $\sum_{i,s} (\mu_{1,i}(s) - \mu^A_{1,i}(s)) = \delta/2$ because:
\begin{multline*}
    \mu^{A}_{1,i}(s) = \min(\mu_{1,i}(s), \mu_{1,i}'(s)) \\
    = \mu_{1,i}(s) + \min(0, \mu_{1,i}'(s) - \mu_{1,i}(s)) \\
    = \mu_{1,i}(s) - \max(0, \mu_{1,i}(s) - \mu_{1,i}'(s)) \\
    \implies \sum_{i,s}( \mu_{1,i}(s) - \mu^{A}_{1,i}(s) ) = \sum_{i,s} \max(0, \mu_{1,i}(s) - \mu_{1,i}'(s)) = \delta/2.
\end{multline*}

We define $\alpha^{A}_1$ as $\alpha^{A}_{1,i}(s,a) = \alpha_{1,i}(s,a) \frac{\mu^{A}_{1,i}(s)}{\mu_{1,i}(s)}$. As $\mu^{A}_1 \le_{all} \mu_1$ therefore $\alpha^{A}_1 \le_{all} \alpha_1$. So, at $t=1$, the total cost for playing $\alpha^{A}_1$ is at most the cost for playing $\alpha_1$, which is at most $B_1$. Now, for the reward at time $t=1$, as $\sum_{i,s} (\mu_{1,i}(s) - \mu^A_{1,i}(s)) = \delta/2$, therefore $R_1(\alpha^A_1) \ge R_1(\alpha_1) - R_{max}\delta/2$ as shown below
\begin{align*}
    R_1(\alpha^A_1) &= \sum_{i,s,a} R_{1,i}(s,a) \alpha_{1,i}^A(s,a) \\ 
    &= \sum_{i,s,a} R_{1,i}(s,a) \alpha_{1,i}(s,a) \frac{\mu^{A}_{1,i}(s)}{\mu_{1,i}(s)} \\
    &= \sum_{i,s,a} R_{1,i}(s,a) \alpha_{1,i}(s,a) \frac{\mu_{1,i}(s) - (\mu_{1,i}(s) - \mu^A_{1,i}(s))}{\mu_{1,i}(s)} \\
    &= \sum_{i,s,a} R_{1,i}(s,a) \alpha_{1,i}(s,a) \\
    &\qquad- \sum_{i,s,a} R_{1,i}(s,a) \alpha_{1,i}(s,a) \frac{\mu_{1,i}(s) - \mu^A_{1,i}(s)}{\mu_{1,i}(s)} \\
    &\ge R_1(\alpha_1) - \sum_{i,s,a} R_{max} \alpha_{1,i}(s,a) \frac{\mu_{1,i}(s) - \mu^A_{1,i}(s)}{\mu_{1,i}(s)} \\
    &= R_1(\alpha_1) - \sum_{i,s} R_{max} (\mu_{1,i}(s) - \mu^A_{1,i}(s)) \\
    &= R_1(\alpha_1) - R_{max} \delta/2.
\end{align*}

Let us now look at time $t=2$. The mean-field LP computes $\mu_2$ from $\alpha_1$, and $\mu^{A}_2$ from $\alpha^{A}_1$, using the linear equation \eqref{eq:mflp:2}. As $\alpha^{A}_1 \le_{all} \alpha_1$, it is easy to check from \eqref{eq:mflp:2} that $\mu^A_2 \le_{all} \mu_2$. Also, as $\sum_{i,s} (\mu_{1,i}(s) - \mu^A_{1,i}(s)) = \delta/2$, using \eqref{eq:mflp:2}, we also have $\sum_{i,s} (\mu_{2,i}(s) - \mu^A_{2,i}(s)) = \delta/2$ as shown below
\begin{multline*}
    \sum_{i,s'} (\mu_{2,i}(s') - \mu^A_{2,i}(s')) = \sum_{i,s,a,s'} (\alpha_{1,i}(s,a) - \alpha^A_{1,i}(s,a)) P_{1,i}(s'|s,a) \\
    = \sum_{i,s,a} (\alpha_{1,i}(s,a) - \alpha^A_{1,i}(s,a)) = \sum_{i,s} (\mu_{1,i}(s,a) - \mu^A_{1,i}(s)) = \delta/2.
\end{multline*}
As we did for time $t=1$, if we define $\alpha^{A}_{2,i}(s,a) = \alpha_{2,i}(s,a) \frac{\mu^{A}_{2,i}(s)}{\mu_{2,i}(s)}$, using $\mu^A_2 \le_{all} \mu_2$ we can show that the budget constraint is satisfied, and using $\sum_{i,s} (\mu_{2,i}(s) - \mu^A_{2,i}(s)) = \delta/2$ we can show that the reward lower bound of $R_2(\alpha^A_2) \ge R_2(\alpha_2) - R_{max}\delta/2$ is satisfied. Repeating this process for all time steps, we can show that for every $t$, the budget constraint is satisfied $\sum_{i,s,a} \alpha^A_{t,i}(s,a) C_{t,i}(s,a) \le \sum_{i,s,a} \alpha_{t,i}(s,a) C_{t,i}(s,a) \le B_t$, and the reward is lower bounded $R_t(\alpha^A_t) \ge R_t(\alpha_t) - R_{max}\delta/2$. So, the total reward for $\alpha^A$ is also lower bounded as: $\sum_t R_t(\alpha^A_t) \ge \sum_t R_t(\alpha_t) - T R_{max} \delta/2$.

Let us now look at $(\mu^B, \alpha^B)$. Let $\mu^B_1 = \mu_1 - \mu^A_1 = \mu_1 - \min(\mu_1, \mu_1')$. As per our assumptions, for every time $t$, cluster $i$, and state $s$, there is a $0$ cost action; let us denote it by $a^0_{t,i,s}$. We set $\alpha^B_{1,i}(s,a^0_{t,i,s}) = \mu_{t,i}(s)$ and $\alpha^B_{1,i}(s,a) = 0$ for $a \neq a^0_{t,i,s}$. Then, we let the system evolve as per \eqref{eq:mflp:2} to $\mu^B_2$, play an $\alpha^B_2$ similar to $\alpha^B_1$ by selecting only the $0$ cost actions, and repeat this process. The cost incurred by $\alpha^B_t$ is $0$ for each $t$ (and the reward in non-negative as per our assumption on the reward function). By setting $\mu' = \mu^{A} + \mu^{B}$ and $\alpha' = \alpha^{A} + \alpha^{B}$ we get our required result of $V_{1:T}(\mu_1') \ge V_{1:T}(\mu_1) - T R_{max} \delta/2$, which completes the proof.
\end{proof}

\begin{proof}[Proof of Lemma~\ref{lm:rounding}]
In step (2) of the mean-field policy (Section~\ref{sec:mf:policy}), we play the action $a$ for at least $\lfloor \malpha^{(t)}_{t,i}(s,a) \rfloor \ge \malpha^{(t)}_{t,i}(s,a) - 1$ arms in cluster $i$ and state $s$ at time $t$. So, the total reward for the mean-field policy is at least $\sum_{t,i,s,a} (\malpha^{(t)}_{t,i}(s,a) - 1) R_{t,i}(s,a) \ge \sum_t R_t(\malpha^{(t)}_t) - TK|S||A|R_{max}$. With a similar argument we can also show that the real-life reward is $\le \sum_{t,i,s,a} \lfloor \malpha^{(t)}_{t,i}(s,a) \rfloor R_{t,i}(s,a) + \sum_{t,i,s} |A| R_{max}  \le \sum_t R_t(\malpha^{(t)}_t) + TK|S||A|R_{max}$.
\end{proof}

\begin{proof}[Proof of Lemma~\ref{lm:real2lp}]
$\sum_t R_t(\malpha^{(t)}_t)$ can be written as
\begin{align*}
    \sum_{t} R_t(\malpha^{(t)}_t) &= \sum_{t \in [T]} \left( R_t(\malpha^{(t)}_t) + \sum_{\tau = t+1}^{T} R_{\tau}(\malpha^{(t)}_{\tau}) - \sum_{\tau = t+1}^{T} R_{\tau}(\malpha^{(t)}_{\tau}) \right) \\
    &= \sum_{t \in [T]} \left( \sum_{\tau = t}^{T} R_{\tau}(\malpha^{(t)}_{\tau}) - \sum_{\tau = t+1}^{T} R_{\tau}(\malpha^{(t)}_{\tau}) \right) 
\end{align*}
Notice that $\sum_{\tau = t}^{T} R_{\tau}(\malpha^{(t)}_{\tau})$ is the objective value of the mean-field LP given start state $\mmu^{(t)}_t$ for the time steps $[t:T]$, which is denoted by $\mV_{t:T}(\mmu^{(t)}_t)$. Similarly, $\sum_{\tau = t+1}^{T} R_{\tau}(\malpha^{(t)}_{\tau})$ the objective value of the mean-field LP given start state $\mmu^{(t)}_{t+1}$ for the time steps $[t+1:T]$, which is denoted by $\mV_{t+1:T}(\mmu^{(t)}_{t+1})$. Substituting this, we get 
\begin{align*}
    \sum_{t} R_t(\malpha^{(t)}_t) &= \sum_{t \in [T-1]} \left( \mV_{t:T}(\mmu^{(t)}_t) - \mV_{t+1:T}(\mmu^{(t)}_{t+1}) \right) + \mV_{T:T}(\mmu^{(T)}_T) \\
    &= \mV_{1:T}(\mmu^{(1)}_1) + \sum_{t \in [2:T]} \left( \mV_{t:T}(\mmu^{(t)}_t) - \mV_{t:T}(\mmu^{(t-1)}_{t})  \right).
\end{align*}
Using Lemma~\ref{lm:lipsMF}, we get 
\begin{multline*}
    | \sum_{t} R_t(\malpha^{(t)}_t) - \mV_{1:T}(\mmu^{(1)}_1) | \le \sum_{t \in [2:T]} \left| \mV_{t:T}(\mmu^{(t)}_t) - \mV_{t:T}(\mmu^{(t-1)}_{t})  \right| \\
    \le \sum_{t \in [2:T]} \frac{(T-t+1) R_{max} || \mmu^{(t)}_t - \mmu^{(t-1)}_t ||_1 }{2} .
\end{multline*}
\end{proof}

\begin{proof}[Proof of Lemma~\ref{lm:mubound}]
Notice that $\mmu^{(t)}_t$ is the state at time $t$ when playing the mean-field policy and $\mmu^{(t-1)}_t$ is the mean-field estimate of the state at time $t$ as estimated by the mean-field LP at time $t-1$. There are two reasons for $\mmu^{(t)}_t$ to differ from $\mmu^{(t-1)}_t$: (i) bias due to rounding of the actions at time $t-1$; (ii) variance due to the randomness in transitions. As we play the action prescribed by $\malpha^{(t-1)}_{t-1}$ for $\sum_{i,s,a}\lfloor \malpha^{(t-1)}_{t-1,i}(s,a) \rfloor \ge N - K|S||A|$ arms, the bias can be at most $K|S||A|$. 

Now, for the $\ge N - K|S||A|$ arms for which we do take an action as recommended by $\malpha^{(t-1)}_{t-1}$, the next state of the arm follows a categorical distribution (multinoulli distribution) described in Lemma~\ref{lm:multinomial}. Formally, for an arm $j$ that is in cluster $i_j$ and state $s_j$ at time $t-1$, and for which we play the action $a_j$ as recommended by $\malpha^{(t-1)}_{t-1}$, the state of this arm at time $t$ is a random variable that follows a categorical distribution with parameter $P_{t-1,i_j}(\cdot|s_j, a_j)$. Now, as we have $\le N_i$ independent samples of such type for cluster $i$, using Lemma~\ref{lm:multinomial}, we can bound the expected value of the $\ell_1$-distance for a cluster $i$ as:
\[
    \Exp[|| \mmu^{(t)}_{t,i} - \mmu^{(t-1)}_{t,i} ||_1] \le  \sqrt{|S| N_i} + |S||A|.
\]
Putting this together for all clusters, we get
\begin{multline*}
    \Exp[|| \mmu^{(t)}_{t} - \mmu^{(t-1)}_{t} ||_1] = \sum_i \Exp[|| \mmu^{(t)}_{t,i} - \mmu^{(t-1)}_{t,i} ||_1] \\
    \le K|S||A| +  \sum_i \sqrt{|S| N_i} \le  K|S||A| +  \sqrt{K|S|N},
\end{multline*}
where the last inequality is because $\sum_i \sqrt{N_i}$ given $\sum_i N_i = N$ is maximized when $N_i = N/K$ for all $i$. Notice that we could have directly applied Lemma~\ref{lm:multinomial} to $|| \mmu^{(t)}_{t} - \mmu^{(t-1)}_{t} ||_1$ by assuming that a state is a tuple $(i,s) \in [K] \times S$, which should give the same result. In a similar fashion, we can provide a high-probability bound for the $\ell_1$-distance as:
\begin{align*}
    \Prob\left[|| \mmu^{(t)}_{t} - \mmu^{(t-1)}_{t} ||_1 \ge K|S||A| + \sqrt{2 \log(2) K|S|N + 2 N \log(\frac{1}{\delta}) } \right]& \\
    \le \delta&,
\end{align*}
for all $0 < \delta < 1$.
\end{proof}

\begin{proof}[Proof of Lemma~\ref{lm:expected2lp}]
Let $\rmu_t$ and $\ralpha_t$ denote the (random) state and action at time $t$ when using the policy $\rpi$. We show that $(\Exp[\rmu_t], \Exp[\ralpha_t])_{t \in [T]}$ is a feasible solution of the mean-field LP with objective value $\Exp[\rV_{1:T}^{\rpi}(\rmu_1)]$. So, the optimal objective value of the LP, $\mV_{1:T}(\mmu_1)$, is at least as good $\Exp[\rV_{1:T}^{\rpi}(\rmu_1)]$.

Notice that $\ralpha_t$ satisfies the budget constraint (equation \eqref{eq:mflp:4}) almost surely, therefore, $\Exp[\ralpha_t]$ also satisfies the budget constraint. Also notice that the reward at time $t$, $R_t(\ralpha_t)$, is a linear function of $\ralpha_t$, so by linearity of expectation, $\Exp[R_t(\ralpha_t)] = R_t(\Exp[\ralpha_t])$. Further, given $\ralpha_t$, $\rmu_{t+1}$ follows the transition probabilities $P_t$, and we can write its expected value $\Exp[\rmu_{t+1}]$ using $\Exp[\ralpha_t]$ as given in equation \eqref{eq:mflp:2}. 
\end{proof}

\subsection{Sub-Optimality Upper Bound for Infinite Horizon}
\label{sec:mf:expecationIF:app}

\begin{proof}[Proof of Theorem~\ref{thm:expecationIF}]
For an infinite horizon planning problem, we solve the problem for a truncated finite horizon. Let us denote this truncated finite horizon by $T$.
The proof follows the same sequence of steps as the proof for Theorem~\ref{thm:expecationF}, with adjustments for the discount factor. First, we prove a Lipschitz condition, similar to Lemma~\ref{lm:lipsMF}.
\begin{lemma}\label{lm:lipsMFIF}
Let $\mu_t$ and $\mu_t'$ be two arbitrary states at time $t$ with $|| \mu_t - \mu_t' ||_1 = \delta$. Then the optimal objective value of the mean-field policy for time steps $t$ to $T$ starting from $\mu_t$ and $\mu_t'$ is bounded by 
\[
    |V_{t:T}(\mu_t) - V_{t:T}(\mu_t')| \le \frac{ (\gamma^{t-1} - \gamma^{T}) R_{max} \delta}{2 (1 - \gamma)}. 
\]
\end{lemma}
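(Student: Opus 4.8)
The plan is to adapt the proof of Lemma~\ref{lm:lipsMF} essentially verbatim, carrying the discount weight $\gamma^{\tau-1}$ through each time step $\tau$. As in that proof, by the symmetry argument (swapping the roles of $\mu_t$ and $\mu_t'$) it suffices to establish
\[
  V_{t:T}(\mu_t') \ge V_{t:T}(\mu_t) - \frac{(\gamma^{t-1}-\gamma^{T})R_{max}\delta}{2(1-\gamma)},
\]
and, since $\mu_t$ and $\mu_t'$ have the same total mass $N$, to record that $\sum_{i,s}\max(0,\mu_{t,i}(s)-\mu_{t,i}'(s)) = \sum_{i,s}\max(0,\mu_{t,i}'(s)-\mu_{t,i}(s)) = \delta/2$.

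Next I would take an optimal solution $(\mu,\alpha)=(\mu_\tau,\alpha_\tau)_{\tau\in[t:T]}$ of the mean-field LP with start state $\mu_t$, so that $V_{t:T}(\mu_t)=\sum_{\tau=t}^{T}\gamma^{\tau-1}R_\tau(\alpha_\tau)$, and construct a feasible solution $(\mu',\alpha')=(\mu^A+\mu^B,\alpha^A+\alpha^B)$ for start state $\mu_t'$ exactly as in Lemma~\ref{lm:lipsMF}: set $\mu^A_t=\min(\mu_t,\mu_t')$ coordinatewise, $\alpha^A_{t,i}(s,a)=\alpha_{t,i}(s,a)\,\mu^A_{t,i}(s)/\mu_{t,i}(s)$, and let $\mu^A$ evolve under~\eqref{eq:mflp:2}. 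The three structural observations from that proof carry over unchanged, since none of them involves the objective: (i) $\alpha^A_\tau \le_{all} \alpha_\tau$, so the budget constraint holds at every $\tau$; (ii) the deficit mass is conserved, i.e. $\sum_{i,s}(\mu_{\tau,i}(s)-\mu^A_{\tau,i}(s))=\delta/2$ for all $\tau\in[t:T]$, because~\eqref{eq:mflp:2} is linear and mass-preserving; (iii) hence $R_\tau(\alpha^A_\tau)\ge R_\tau(\alpha_\tau)-R_{max}\delta/2$ for every $\tau$. The $(\mu^B,\alpha^B)$ part uses only the zero-cost actions, so it adds $0$ to the cost at every step and a non-negative amount to the reward.

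Finally I would assemble the bound with discounting: the discounted objective of the feasible solution $(\mu',\alpha')$ with start state $\mu_t'$ is at least
\begin{align*}
  \sum_{\tau=t}^{T}\gamma^{\tau-1}R_\tau(\alpha^A_\tau)
  &\ge \sum_{\tau=t}^{T}\gamma^{\tau-1}\Bigl(R_\tau(\alpha_\tau)-\tfrac{R_{max}\delta}{2}\Bigr) \\
  &= V_{t:T}(\mu_t) - \frac{R_{max}\delta}{2}\sum_{\tau=t}^{T}\gamma^{\tau-1},
\end{align*}
and since $\sum_{\tau=t}^{T}\gamma^{\tau-1}=\frac{\gamma^{t-1}-\gamma^{T}}{1-\gamma}$, this equals $V_{t:T}(\mu_t)-\frac{(\gamma^{t-1}-\gamma^{T})R_{max}\delta}{2(1-\gamma)}$; combined with optimality of $V_{t:T}(\mu_t')$ this gives the claimed inequality.

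I do not expect a genuine obstacle here: the construction is literally the one from Lemma~\ref{lm:lipsMF}, and the only new element is replacing the uniform count $(T-t+1)$ of ``lossy'' steps by the geometric sum of discount weights $\sum_{\tau=t}^{T}\gamma^{\tau-1}$. The one point I would be careful about is the discount convention, namely that $V_{t:T}$ must be read with the \emph{global} time index (weight $\gamma^{\tau-1}$ at time $\tau$, matching the LP objective~\eqref{eq:mflp:1}), so that the Lipschitz/telescoping chain used afterwards in the proof of Theorem~\ref{thm:expecationIF} remains consistent.
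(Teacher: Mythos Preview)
Your proposal is correct and follows exactly the approach the paper takes: the paper's own proof of Lemma~\ref{lm:lipsMFIF} simply says that the argument of Lemma~\ref{lm:lipsMF} carries over verbatim, with the per-step error $R_{max}\delta/2$ replaced by its discounted version $\gamma^{\tau-1}R_{max}\delta/2$ and then summed. Your write-up is in fact more detailed than the paper's sketch, and your remark about the global-time discount convention is the right consistency check.
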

The proof for Lemma~\ref{lm:lipsMFIF} is a trivial extension of the proof of Lemma~\ref{lm:lipsMF}, all arguments follow exactly. The only difference is that in Lemma~\ref{lm:lipsMF} we had an error of $R_{max} \delta / 2$ for a given time step $\tau$, now we have a discounted value of this error $\gamma^{\tau-1} R_{max} \delta / 2$. Summing these error terms for each time step, we get our result. The same idea extends Lemma~\ref{lm:rounding} into Lemma~\ref{lm:roundingIF} and Lemma~\ref{lm:real2lp} to Lemma~\ref{lm:real2lpIF} given below.

\begin{lemma}\label{lm:roundingIF}
$\rV_{1:T}^{\mpi}(\rmu_1)$ is bounded as $| \rV_{1:T}^{\mpi}(\rmu_1) - \sum_{t \in [T]} R_t(\malpha^{(t)}_t)| \le K|S||A|R_{max} \frac{1 - \gamma^T}{1 - \gamma}$.
\end{lemma}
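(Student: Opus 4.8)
The plan is to replay the counting argument behind Lemma~\ref{lm:rounding} term by term, attaching the discount factor $\gamma^{t-1}$ to the $t$-th term. Recall from Section~\ref{sec:mf:policy} that at time $t$ the mean-field policy plays action $a$ on exactly $\lfloor \malpha^{(t)}_{t,i}(s,a) \rfloor$ arms in cluster $i$ and state $s$, and plays the $0$-cost action on the remaining $\rmu_{t,i}(s) - \sum_a \lfloor \malpha^{(t)}_{t,i}(s,a) \rfloor$ arms; since each fractional part is below $1$, this leftover count is at most $|A|$ per pair $(i,s)$, hence at most $K|S||A|$ in total, and the fallback keeps the rounded action within the budget $B_t$, so the policy is well defined. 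Because rewards are nonnegative and bounded by $R_{max}$, the realized (undiscounted) reward at time $t$ differs from $R_t(\malpha^{(t)}_t)$ by at most $K|S||A|R_{max}$ in either direction: using $\lfloor x\rfloor \ge x-1$ on the at most $K|S||A|$ triples $(i,s,a)$ gives the lower bound $\sum_{i,s,a}(\malpha^{(t)}_{t,i}(s,a)-1)R_{t,i}(s,a) \ge R_t(\malpha^{(t)}_t) - K|S||A|R_{max}$, while discarding the leftover arms' contribution of at most $K|S||A|R_{max}$ gives the matching upper bound $\sum_{i,s,a}\lfloor \malpha^{(t)}_{t,i}(s,a)\rfloor R_{t,i}(s,a) + K|S||A|R_{max} \le R_t(\malpha^{(t)}_t) + K|S||A|R_{max}$.

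Next I would assemble the discounted reward: the time-$t$ term contributes with the discount weight $\gamma^{t-1}$, exactly as in the LP objective~\eqref{eq:mflp:1}, so by the two inequalities above the per-step rounding discrepancy at time $t$ is at most $\gamma^{t-1}K|S||A|R_{max}$. Summing over $t \in [T]$ and evaluating the geometric series $\sum_{t=1}^{T}\gamma^{t-1} = \frac{1-\gamma^T}{1-\gamma}$ then yields $|\rV^{\mpi}_{1:T}(\rmu_1) - \sum_{t\in[T]}R_t(\malpha^{(t)}_t)| \le K|S||A|R_{max}\frac{1-\gamma^T}{1-\gamma}$, which is exactly the claimed bound.

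I do not expect a genuine obstacle here: the argument is identical to that of Lemma~\ref{lm:rounding}, and the only thing needing care is the discount bookkeeping, namely that the per-step error $K|S||A|R_{max}$ acquires a factor $\gamma^{t-1}$ and that the resulting series telescopes to $\frac{1-\gamma^T}{1-\gamma}$ rather than to $T$. The one modeling point worth stating explicitly is that the $0$-cost fallback action, guaranteed by the paper's standing assumption of a zero-cost action, is what makes the rounded policy feasible and what lets us treat the leftover arms' reward as a nonnegative quantity bounded by $R_{max}$ per arm.
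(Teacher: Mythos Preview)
Your proposal is correct and matches the paper's approach exactly: the paper states that Lemma~\ref{lm:roundingIF} follows from Lemma~\ref{lm:rounding} by the same idea used for Lemma~\ref{lm:lipsMFIF}, namely attaching the discount factor $\gamma^{t-1}$ to the per-step error and summing the resulting geometric series. Your write-up makes this explicit and adds nothing the paper would disagree with.
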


\begin{lemma}\label{lm:real2lpIF}
We have the following bound: 
$| \sum_{t \in [T]} R_t(\malpha^{(t)}_t) - \mV_{1:T}(\mmu_1) | \le \sum_{t \in [2:T]} \frac{(\gamma^{t-1} - \gamma^{T}) R_{max} || \mmu^{(t)}_t - \mmu^{(t-1)}_t ||_1 }{2 (1 - \gamma)}$.
\end{lemma}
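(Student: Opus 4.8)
The plan is to carry over, essentially verbatim, the telescoping argument from the proof of Lemma~\ref{lm:real2lp}, with two bookkeeping changes: every per-step reward now carries its discount weight $\gamma^{\tau-1}$, and the undiscounted Lipschitz estimate of Lemma~\ref{lm:lipsMF} is replaced by the discounted one of Lemma~\ref{lm:lipsMFIF}. Throughout I use the same \emph{global}-discount convention as Lemma~\ref{lm:lipsMFIF}, under which the LP re-solved at time $t$ with start state $\rmu_t$ weights the reward at a later time $\tau$ by $\gamma^{\tau-1}$; then $\mV_{t:T}(\mmu^{(t)}_t) = \sum_{\tau=t}^{T} \gamma^{\tau-1} R_\tau(\malpha^{(t)}_\tau)$, and the sum $\sum_{t\in[T]} R_t(\malpha^{(t)}_t)$ in the statement is read — consistently with Lemma~\ref{lm:roundingIF} — as $\sum_{t\in[T]} \gamma^{t-1} R_t(\malpha^{(t)}_t)$.

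First I would record the principle-of-optimality fact (used implicitly in the proof of Lemma~\ref{lm:real2lp}): the restriction $(\malpha^{(t)}_\tau,\mmu^{(t)}_\tau)_{\tau\in[t+1:T]}$ of the solution computed by the time-$t$ LP is itself optimal for the mean-field LP over $[t+1:T]$ with start state $\mmu^{(t)}_{t+1}$. This holds because the restriction is feasible for that sub-LP (each of its constraints is a constraint of the time-$t$ LP, and the start-state constraint holds by definition of $\mmu^{(t)}_{t+1}$), and any strictly better sub-solution could be spliced back in — leaving $(\malpha^{(t)}_t,\mmu^{(t)}_t)$ unchanged, so the time-$t$ budget and evolution constraints are untouched — contradicting optimality of the time-$t$ LP. Consequently $\sum_{\tau=t+1}^{T}\gamma^{\tau-1}R_\tau(\malpha^{(t)}_\tau) = \mV_{t+1:T}(\mmu^{(t)}_{t+1})$, hence $\gamma^{t-1}R_t(\malpha^{(t)}_t) = \mV_{t:T}(\mmu^{(t)}_t) - \mV_{t+1:T}(\mmu^{(t)}_{t+1})$ for every $t$ (with the convention $\mV_{T+1:T}\equiv 0$).

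Summing this identity over $t\in[T]$ and re-indexing the subtracted terms telescopes everything except the endpoints, exactly as in Lemma~\ref{lm:real2lp}:
\[
    \sum_{t\in[T]}\gamma^{t-1}R_t(\malpha^{(t)}_t) = \mV_{1:T}(\mmu^{(1)}_1) + \sum_{t\in[2:T]}\bigl(\mV_{t:T}(\mmu^{(t)}_t) - \mV_{t:T}(\mmu^{(t-1)}_t)\bigr).
\]
Since $\mmu^{(1)}_1 = \mmu_1$, I then move $\mV_{1:T}(\mmu_1)$ to the left, take absolute values, apply the triangle inequality over $t\in[2:T]$, and bound each difference $|\mV_{t:T}(\mmu^{(t)}_t) - \mV_{t:T}(\mmu^{(t-1)}_t)|$ using Lemma~\ref{lm:lipsMFIF} with $\delta = \|\mmu^{(t)}_t - \mmu^{(t-1)}_t\|_1$, which contributes $\tfrac{(\gamma^{t-1}-\gamma^T)R_{max}}{2(1-\gamma)}\|\mmu^{(t)}_t - \mmu^{(t-1)}_t\|_1$ per term. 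This is precisely the asserted bound.

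The only step that needs care — and the one I would flag as the main (mild) obstacle — is the discount-factor bookkeeping: one must commit to the global-discount convention for the re-solved LPs so that the per-step value identity telescopes cleanly and so that the weight $\sum_{\tau=t}^{T}\gamma^{\tau-1} = \tfrac{\gamma^{t-1}-\gamma^T}{1-\gamma}$ appearing in Lemma~\ref{lm:lipsMFIF} matches the telescoped terms exactly; with the alternative ``local'' convention $\gamma^{\tau-t}$ one picks up stray $\gamma^{t-1}$ prefactors that would have to be tracked separately. Everything else — feasibility of the spliced solution, non-negativity of the rewards, and the identity $\mmu^{(1)}_1 = \mmu_1$ — is identical to the undiscounted argument and requires nothing new.
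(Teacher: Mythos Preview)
Your proposal is correct and follows precisely the approach the paper intends: the paper does not give a separate proof of Lemma~\ref{lm:real2lpIF} but simply states that ``the same idea extends Lemma~\ref{lm:real2lp} to Lemma~\ref{lm:real2lpIF},'' replacing the undiscounted Lipschitz bound of Lemma~\ref{lm:lipsMF} with the discounted one of Lemma~\ref{lm:lipsMFIF}. Your write-up is in fact more detailed than the paper's --- you make explicit the principle-of-optimality step and the global-discount convention needed for the telescoping to line up with the factor $(\gamma^{t-1}-\gamma^T)/(1-\gamma)$ --- but the underlying argument is identical.
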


Further notice that Lemma~\ref{lm:mubound} and Lemma~\ref{lm:expected2lp} hold directly as they had nothing to do with whether we were using discounted or non-discounted reward. 

\paragraph{Mean-Field Expected Reward}
As we did for Theorem~\ref{thm:expecationF}, we combine the lemmas above to get the following bound on the error incurred for the truncated time horizon $[1:T]$
\begin{align} \label{eq:thm:expecationIF:1}
    &\rV_{1:T}(\rmu_1) - \Exp[\rV^{\mpi}_{1:T}(\rmu_1)] \le  K|S||A|R_{max} \frac{1 - \gamma^T}{1 - \gamma} \nonumber \\
    & \qquad +  \sum_{t \in [2:T]} \frac{(\gamma^{t-1} - \gamma^{T}) R_{max} (K|S||A| +  \sqrt{K|S|N}) }{2 (1 - \gamma)} \nonumber \\
    &\le \frac{R_{max}((2-\gamma)K|S||A| + \gamma  \sqrt{K|S|N)})}{2(1-\gamma)^2} \nonumber \\
    & \qquad- \frac{ \gamma^T (T-1) R_{max} (K|S||A| +  \sqrt{K|S|N})}{2(1-\gamma)}.
\end{align}
If we truncate the horizon at $T$, the maximum total loss in reward for $t > T$ is bounded above by $N R_{max} \gamma^T / (1- \gamma)$. If we select the truncation $T \ge \frac{2 \sqrt{N}}{ \sqrt{K|S|}}  + 1$, then
\begin{align*}
    &T \ge \frac{2 \sqrt{N}}{ \sqrt{K|S|}} + 1 \Longleftrightarrow (T-1)  \sqrt{K|S|N} \ge 2 N \\
    &\implies (T-1) (K|S||A| +  \sqrt{K|S|N})\ge 2 N \\
    &\Longleftrightarrow \frac{ \gamma^T (T-1) R_{max} (K|S||A| +  \sqrt{K|S|N})}{2(1-\gamma)} \ge \frac{\gamma^T R_{max} N}{1- \gamma}.
\end{align*}
Putting this in \eqref{eq:thm:expecationIF:1} we get
\begin{equation*}
    \rV_{1:\infty}(\rmu_1) \le \Exp[\rV^{\mpi}_{1:\infty}(\rmu_1)] + \frac{R_{max}((2-\gamma)K|S||A| + \gamma  \sqrt{K|S|N)})}{2(1-\gamma)^2}.
\end{equation*}

\paragraph{High-Probability Bound for Mean-Field Reward}
The proof is combines the ideas used in the proof for expected reward above and the proof for the high-probability bound in Theorem~\ref{thm:expecationF}. Let $T$ be the truncation time. Let the probability of failure for time $t \in [T]$ be $\delta/T$. By union bound, the total probability of failure is $\delta$. 

Following the steps used to derive \eqref{eq:thm:expecationIF:1}, but using the high-probability bound of Lemma~\ref{lm:mubound}, with probability at least $(1-\delta)$, for every $0 < \delta < 1$, we have 
\begin{align} \label{eq:thm:expecationIF:2}
    &\rV_{1:T}(\rmu_1) - \rV^{\mpi}_{1:T}(\rmu_1) \nonumber \\
    &\le \frac{R_{max}((2-\gamma)K|S||A| + \gamma  \sqrt{2 \log(2) K|S|N + 2 N \log(T/\delta) })}{2(1-\gamma)^2} \nonumber \\
    & \qquad - \frac{ \gamma^T (T-1) R_{max} (K|S||A| +  \sqrt{2 \log(2) K|S|N + 2 N \log(T/\delta) }}{2(1-\gamma)}.
\end{align}
As the total loss in reward for $t > T$ is bounded above by $N R_{max} \gamma^T / (1- \gamma)$. If we select the truncation $T = \left\lceil \frac{\sqrt{2N}}{ \sqrt{\log(2)K|S| + \log(1/\delta)}} \right\rceil  + 1$, then
\begin{align*}
    &T \ge \frac{\sqrt{2N}}{ \sqrt{\log(2)K|S| + \log(1/\delta)}} + 1 \\
    &\Longleftrightarrow (T-1)  \sqrt{2\log(2)K|S|N + 2N\log(T/\delta)} \ge 2 N \\
    &\implies (T-1) (K|S||A| +  \sqrt{2\log(2)K|S|N + 2N\log(T/\delta)})\ge 2 N \\
    &\Longleftrightarrow \frac{ \gamma^T (T-1) R_{max} (K|S||A| +  \sqrt{2 \log(2) K|S|N + 2 N \log(T/\delta) }}{2(1-\gamma)} \\
    & \qquad \qquad \qquad \qquad \qquad \qquad \qquad \qquad \qquad \qquad \ge \frac{\gamma^T R_{max} N}{1- \gamma}.
\end{align*}
Further, as $T = \left\lceil \frac{\sqrt{2N}}{ \sqrt{\log(2)K|S| + \log(1/\delta)}} \right\rceil + 1 \le \sqrt{2N} \le N$ for large enough $N$, which implies $\log(T) \le \log(N)$.\footnote{If $T$ is selected to be larger than $N$, which is allowed as per the theorem statement, we can do a similar analysis by bounding the loss for time $\le N$, and separately (almost surely) bounding for $t > N$ as we are doing for $t > T$ here.}
Putting this in equation \eqref{eq:thm:expecationIF:2} we get
\begin{multline*}
    \rV_{1:\infty}(\rmu_1) - \rV^{\mpi}_{1:\infty}(\rmu_1) \\
    \le \frac{R_{max}((2-\gamma)K|S||A| + \gamma  \sqrt{2 \log(2) K|S|N + 2N \log(N/\delta) })}{2(1-\gamma)^2},
\end{multline*}
with probability at least $1 - \delta$ for every $0 < \delta < 1$. 
\end{proof}

\subsection{Sub-Optimality Lower Bound}\label{sec:mf:lowerbound:app}
\begin{theorem}\label{thm:mf:lowerbound}
There are RMAB instances where the mean-field algorithm has an expected reward that is $\Omega(T\sqrt{N})$ less that the optimal expected reward.
\end{theorem}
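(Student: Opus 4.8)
The plan is to exhibit a small, time-homogeneous RMAB (with $K=1$ cluster and $|S|,|A|,R_{max}$ all constant, and a per-step budget of order $N$) on which the mean-field LP is \emph{forced} onto a ``fragile'' optimal trajectory that keeps a budget constraint exactly tight at every step, while a simple adaptive policy that keeps a tiny slack does almost as well and is immune to the stochastic fluctuations. Concretely I would build a ``harvest/reservoir'' gadget: a productive state $s_1$ whose arms yield reward only when harvested (cost $1$) and a reservoir state $s_2$; the transitions and the budget $B=\Theta(N)$ are arranged so that (i) the deterministic mean-field optimum keeps exactly $B$ arms in $s_1$ and harvests all of them each step, so that the value-to-go $\mV_{t:T}(\cdot)$ --- which, being the optimal value of the LP as a function of its (linearly embedded) start state, is concave and piecewise linear --- has a kink with slope gap $\Theta(1)$ precisely at the predicted state $\mmu^{(t-1)}_{t}$; (ii) the harvest transition is genuinely random, so that $\rmu_t$ given $\mmu^{(t-1)}_{t}$ is a binomial-type quantity centered at $\mmu^{(t-1)}_{t}$ with $\ell_1$-fluctuation $\Theta(\sqrt N)$; and (iii) there is a \emph{deterministic} ``hedged'' action that the LP strictly disprefers (it carries slightly less reward), so the mean-field policy never takes it, yet whose total deterministic reward is within $O(T)$ of the optimum.

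Given such an instance I would proceed in three steps. First, for the mean-field algorithm (which re-solves the LP at each $t$ with the realized $\rmu_t$ and plays the first action of the deterministic-optimal continuation): since that continuation is unique and uses the fragile harvest action, the telescoping identity from the proof of Lemma~\ref{lm:real2lp} gives $\sum_t R_t(\malpha^{(t)}_t) = \mV_{1:T}(\rmu_1) - \sum_{t\in[2:T]}\big(\mV_{t:T}(\mmu^{(t-1)}_{t}) - \mV_{t:T}(\rmu_t)\big)$, and because $\mathbb{E}[\rmu_t \mid \mmu^{(t-1)}_{t}]$ equals $\mmu^{(t-1)}_{t}$ up to the $O(1)$ rounding bias while $\mV_{t:T}$ is concave with a $\Theta(1)$ kink at $\mmu^{(t-1)}_{t}$, Jensen's inequality bounds each summand below by $\Omega(\sqrt N)$; combined with Lemma~\ref{lm:rounding} this yields $\mathbb{E}[\rV^{\mpi}_{1:T}(\rmu_1)] \le \mV_{1:T}(\rmu_1) - \Omega(T\sqrt N)$. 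Second, I would analyze the hedged policy: its transitions are deterministic, so it incurs no fluctuation penalty, and its expected reward is $\mV_{1:T}(\rmu_1) - O(T)$; in particular $\rV_{1:T}(\rmu_1) \ge \mV_{1:T}(\rmu_1) - O(T)$. Third, combining the two bounds gives $\rV_{1:T}(\rmu_1) - \mathbb{E}[\rV^{\mpi}_{1:T}(\rmu_1)] = \Omega(T\sqrt N)$, proving the finite-horizon claim; the infinite-horizon version follows by the same truncation argument used for Theorem~\ref{thm:expecationIF}.

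The main obstacle is step (iii) of the construction. To get the \emph{full} $\Omega(T\sqrt N)$ gap against the optimum (rather than merely against $\mV_{1:T}$) the LP must \emph{strictly} prefer the fragile trajectory --- so the mean-field policy is genuinely forced onto it and does not land on the hedged one through an arbitrary choice of optimal vertex --- while the hedged detour must cost only $o(\sqrt N)$ in deterministic reward, so that $\rV_{1:T}$ stays within $o(T\sqrt N)$ of $\mV_{1:T}$. A clean way to reconcile these is to give the hedged action a reward smaller by an additive $\Theta(1/N)$ per arm, so the deterministic penalty of hedging is $\Theta(1)$ per step; the remaining technical checks are that this perturbation neither moves nor flattens the kink, and that the binomial fluctuations straddle the kink on a constant fraction of sample paths at every step. (For the ``solve once'' variant of Appendix~\ref{sec:mf:alternate:app} the same gadget works with a simpler analysis, since there the stale time-$1$ plan abandons the $\Theta(\sqrt N)$ surplus reservoir arms on \emph{every} subsequent step, so one does not even need the hedged action.)
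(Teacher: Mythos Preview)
Your high-level mechanism is exactly right and matches the paper's: force the mean-field LP onto a ``fragile'' trajectory that looks optimal in the deterministic limit but loses $\Theta(\sqrt{N})$ to stochastic clipping at a tight budget constraint, while a slightly cheaper deterministic ``hedged'' alternative certifies that the true optimum is within $o(T\sqrt{N})$ of the LP value. The concavity/Jensen framing is also correct: $\mV_{t:T}(\cdot)$ is indeed concave and piecewise linear in the start state, and the $\min(\cdot,B)$ clipping is precisely a kink.

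Where you and the paper diverge is in how the $T$ factor is produced. You aim for $T$ \emph{separate} per-step Jensen losses, each of order $\sqrt{N}$, summed through the telescoping identity. The paper instead uses a \emph{single} random bottleneck whose consequence then propagates for $T$ steps: $2n$ arms pass through one coin-flip transition, a binomial $X\sim\mathrm{Bin}(2n,1/2)$ of them land in the good state, but the next step's budget lets only $\min(n,X)$ advance; a direct calculation gives $\mathbb{E}[\min(n,X)]\le n-\sqrt{n/(6\pi)}$. The surviving arms then sit in a recurrent reward-$1$ state for all remaining steps, so the one-time $\Theta(\sqrt{n})$ shortfall is paid at every subsequent step, yielding $(T-3)\sqrt{n/(6\pi)}$. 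The hedged alternative is a disjoint pool of $n$ arms that deterministically pay $1-\epsilon$ when acted on, with $\epsilon=\Theta(1/T)$ tuned so the LP strictly prefers the fragile branch while the hedged branch still achieves $(T-3)n-\delta$.

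Your route is workable but the obstacle you flag at the end is genuinely the hard part, and it is more serious than a ``remaining technical check.'' For the per-step Jensen bound you need the kink of $\mV_{t:T}$ to sit at the \emph{predicted} state $\mmu^{(t-1)}_{t}$ for every $t$; but since the LP is re-solved at the random $\rmu_{t-1}$, the predicted state drifts with the sample path, and you must ensure that from \emph{every} reachable realized state the LP's optimal continuation again drives the next prediction onto the budget boundary, and that the subsequent transition again has $\Theta(\sqrt{N})$ binomial fluctuation straddling that boundary. Engineering a stationary gadget with this property for all sample paths simultaneously is delicate. The paper's one-shot bottleneck sidesteps this entirely: there is exactly one random step, the LP's preference is decided once at $t=1$, and everything after the bottleneck is deterministic.
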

\begin{proof} 


We shall prove a lower bound of $\Omega(T\sqrt{N})$ for the sub-optimality of the mean-field policy using the following example.
\begin{example}\label{ex:mflowerbound1}
Let there be only one cluster with $8$ states as shown in Figure~\ref{fig:mflowerbound1}. The solid arrows in Figure~\ref{fig:mflowerbound1} denote the \textit{active} actions and the dotted arrows the \textit{passive} actions; the transitions not shown lead to state $s_5$. All the transitions are deterministic except the transitions from $s_2$ to $s_4$ and $s_2$ to $s_5$ for the passive action and $s_3$ to $s_4$ and $s_3$ to $s_5$ for the active action, each happens with probability $1/2$. The reward for state $s_6$ is $1$ and for $s_8$ is $1-\epsilon$ for an appropriately set $\epsilon > 0$, all other states have reward of $0$. We have $N = 3n$ arms, $2n$ of which start from state $s_1$ and $n$ of which start from $s_7$. 
\end{example}

\begin{figure}[htbp]
\includegraphics[width=0.8\columnwidth]{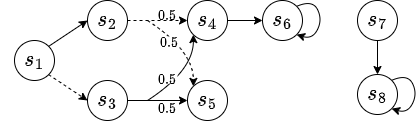}
\caption{Example~\ref{ex:mflowerbound1} - Mean-Field Sub-Optimality}\label{fig:mflowerbound1}
\end{figure}

For the instance described in Example~\ref{ex:mflowerbound1}, for $T \ge 4$ and $\epsilon = \frac{2+(\delta/n)}{T-1}$, where $\delta > 0$ and $\delta \rightarrow 0$, we shall show that the mean-field policy (\mfp) has an expected loss of reward of $\ge (T-3) \sqrt{n/(6\pi)} - \delta$.

Let us first focus on the case $T=4$ and prove a bound of $\Omega(\sqrt{N})$. Later we extend the result to $T \ge 4$.

\paragraph{\textbf{Proof for} $T=4$} By solving the mean-field LP, we can see that it allocates all its budget to the arms that start in state $s_1$, because it overestimates the expected reward for doing so. In particular, the \mfp estimates the following outcome:
\begin{itemize}
    \item $t=1$. \\
    \textbf{State.} There are $2n$ arms in $s_1$ and $n$ arms in $s_7$, which is the initial state. \\
    \textbf{Action.} The \mfp plays active action for $n$ arms (equal to the total budget) out of the $2n$ arms in state $s_1$.
    
    \item $t=2$. \\
    \textbf{State.} The $n$ arms in $s_1$ for which we played the active action at time $t=1$ end up in state $s_2$, while the $n$ arms in $s_1$ for which we played the passive action end up in state $s_3$. All the arms in state $s_7$ end up in the dropout state $s_5$ because we played the passive action for them at $t=1$. \\
    \textbf{Action.} Play the active action for the $n$ arms in state $s_3$.
    
    \item $t=3$. \\
    \textbf{State.} As per \mfp's estimate: half of the arms in state $s_2$ end up in state $s_4$ and half in state $s_5$; similarly, half of the arms in state $s_3$ end up in state $s_4$ and half in state $s_5$. So, we have $0.5n + 0.5n = n$ arms in state $s_4$ and $0.5n + 0.5n + n$ arms in state $s$ ($n$ of these are from the $s_7$ that reached this dropout state at $t=2$). \\
    \textbf{Action.} Play the active action for the $n$ arms in $s_4$.
    
    \item $t=4$. \\
    \textbf{State.} $n$ arms reach $s_6$ from $s_4$. These arms give a total reward of $n$. All other arms are in the dropout state $s_5$. \\
    \textbf{Action.} Not relevant because the reward is homogeneous across actions and the time horizon is $T=4$ (so next state does not matter).
\end{itemize}
As described above, the \mfp estimates that it can get a reward of $n$ from spending its budget on the arms that start from state $s_1$. On the other hand, if it were to select active actions for all the arms in state $s_7$, it would have received a reward of $n(1-\epsilon)$ for time steps $t = 2, 3, 4$, so a total of $3n(1-\epsilon) = 3n(1-\frac{2+(\delta/n)}{3}) = n - \delta$. As $\delta > 0$, \mfp prefers spending its budget on the arms in $s_1$. Using the same argument, we can also see that \mfp would prefer to allocate all its budget to the arms in $s_1$ rather than distributing it among arms in $s_1$ and $s_7$.

Now, let us estimate the expect reward of \mfp. Although \mfp makes a deterministic estimate that $n$ arms reach state $s_6$ at $t = 4$, in reality the number of arms that reach state $s_6$ is a random variable. Let us denote this random variable by $Z$. Notice that over the transition from $t=2$ to $t=3$, $2n$ arms ($n$ from $s_2$ and $n$ from $s_3$) move to state either $s_4$ or $s_5$ with probability $1/2$ each. Let $X$ be the random variable that denotes the number of arms that reach $s_4$. This is a binomial random variable with $2n$ unbiased coin tosses. Now, from $t=3$ to $t=4$, at most $n$ arms can move from $s_4$ to $s_6$ because of the budget constraint of $n$. So, $Z = \min(n, X)$ is the number of arms that reach $s_6$ and give a reward of $1$ each. Let us compute the expected value of $Z$ as follows:
\begin{align*}
    \Exp[Z] &= \frac{1}{2^{2n}} \sum_{i = 0}^{2n} \binom{2n}{i} \min(n, i) \\
    &= \frac{1}{2^{2n}} \left( \sum_{i = 0}^{n} i \binom{2n}{i} + \sum_{i = n+1}^{2n} n \binom{2n}{i} \right) \\
    &= \frac{1}{2^{2n}} \left( 2n \sum_{i = 1}^{n} \binom{2n-1}{i-1} + \frac{n}{2} \left(-\binom{2n}{n} + \sum_{i = 0}^{2n} \binom{2n}{i} \right) \right) \\
    &\le \frac{1}{2^{2n}} \left( 2n \sum_{i = 0}^{n-1} \binom{2n-1}{i} + \frac{n}{2} \left(-\frac{2^{2n}}{\sqrt{\pi (n + 0.5)}}+ 2^{2n} \right) \right) \\
    &= \frac{1}{2^{2n}} \left( n \sum_{i = 0}^{2n-1} \binom{2n-1}{i} + \frac{n}{2} \left(-\frac{2^{2n}}{\sqrt{\pi (n + 0.5)}}+ 2^{2n} \right) \right) \\
    &= \frac{1}{2^{2n}} \left( n 2^{2n} - \frac{n2^{2n}}{2 \sqrt{\pi (n + 0.5)}} \right) \\
    %
    %
    &= n - \frac{n}{2 \sqrt{\pi (n + 0.5)}} \le n - \sqrt{\frac{n}{6\pi}}.
\end{align*}
In the first inequality, we are upper bounding the central binomial coefficient $\binom{2n}{n}$ by $\frac{2^{2n}}{\sqrt{\pi (n + 0.5)}}$.\footnote{\url{https://en.wikipedia.org/wiki/Central_binomial_coefficient}} In the last inequality, we are using the fact that $n \ge 1$.

As discussed earlier, the policy that spends its budget on arms that start in $s_7$ gets a reward of $n-\delta$ (with probability $1$). So, the loss of expected reward for \mfp is $\ge n - \delta - (n - \sqrt{n/(6\pi)})$, which is $\Omega(\sqrt{n}) = \Omega(\sqrt{N})$, as total number of arms is $N = 3n = \Theta(n)$.

\paragraph{\textbf{Extension to} $T \ge 4$} The extension to $T \ge 4$ is easy. At $t = 4$, we have shown that $\le n - \sqrt{n/(6\pi)}$ arms reach state $s_6$ (in expectation) and all other arms go to the dropout state $s_5$. For $t \ge 4$, it can be observed that \mfp will continue playing the active action for these arms and get a total expected reward of $\le (T-3) (n - \sqrt{n/(6\pi)})$. On the other hand, a policy that allocates budget to arms that start in $s_7$ gets a reward of $(T-1)n(1-\epsilon) = (T-3)n - \delta$. So, the loss in expected reward is $\ge (T-3) \sqrt{n/(6\pi)} - \delta = \Omega(T\sqrt{N})$.
\end{proof}

\section{Alternate RMAB Policy from the Mean-Field LP}\label{sec:mf:alternate:app}
Here, we describe an alternate mean-field policy that solves the mean-field LP only once at $t=1$, in contrast to the algorithm described in Section~\ref{sec:mf:policy} that solves the LP every time step.

\subsection{Algorithm}
Given the initial state $\rmu_1 = \mmu_1$, solve the mean-field LP given in Section~\ref{sec:mf:lp} to get the estimated mean-field states $\mmu_t$ and actions $\malpha_t$ for $t \in [T]$. For $t = 1, 2, \ldots, T$, do the following:
\begin{enumerate}
    \item Given the real-life state $\rmu_t$ at time $t$, play the action $a$ for  $\left\lfloor \malpha_{t,i}(s,a) \frac{\min(\mmu_{t,i}(s), \rmu_{t,i}(s))}{\mmu_{t,i}(s)} \right\rfloor$ arms in cluster $i$ and state $s$. For the remaining arms in cluster $i$ and state $s$, which is equal to $\rmu_{t,i}(s) - \sum_a \left\lfloor \malpha_{t,i}(s,a) \frac{\min(\mmu_{t,i}(s), \rmu_{t,i}(s))}{\mmu_{t,i}(s)} \right\rfloor$ play the zero cost action.
    \item After playing the action at time $t$, we get the realization of the real-life state at time $t+1$, $\rmu_{t+1}$.
\end{enumerate}

\subsection{Analysis (Sketch)} 
It is easy to check that the policy described above is a valid policy because (i) we define a valid action for all arms at each time step and (ii) the actions played at each time step have a total cost of at most the cost of  $\malpha_t$, which is $\le B_t$, as required.

To see the near-optimality of this policy, we can use the results that we proved in Section~\ref{sec:mf:analysis}. The main high-level steps of the proof are as follows:
\begin{itemize}
    \item Given $\rmu_1 = \mmu_1$, we can approximate $\Exp[|| \rmu_2 - \mmu_2 ||_1] \approx  \sqrt{K|S|N}$ using Lemma~\ref{lm:mubound}, ignoring constant factors and lower order terms.
    \item As we play the action recommended by the mean-field LP for about (ignoring rounding error) $\sum_{i,s} \min(\mmu_{t,i}(s), \rmu_{t,i}(s)) \approx N - || \rmu_2 - \mmu_2 ||_1/2$ arms, using ideas similar to the ones used in the proof of Lemma~\ref{lm:mubound}, we can show that $\Exp[|| \rmu_3 - \mmu_3 ||_1] \approx 2 \sqrt{K|S|N}$.
    \item Repeating the idea, for all $t$, we get $\Exp[|| \rmu_t - \mmu_t ||_1] \approx (t-1) \sqrt{K|S|N}$, ignoring constant factors and lower-order terms.
    \item Putting it together, we get that the reward of the policy described above is $\approx \mV_{1:T}(\mmu_1) - T^2 \sqrt{K|S|N} \ge \rV_{1:T}(\rmu_1) - T^2 \sqrt{K|S|N}$, where the last inequality is from Lemma~\ref{lm:expected2lp}.
\end{itemize}
\section{Additional Experimental Results}\label{sec:exp:app}
In this section, we provide additional results for the experiments in Section~\ref{sec:maternity} and Section~\ref{sec:tuberculosis}. We also describe and provide results for another simulation environment by \cite{killian2021beyond}.

\subsection{Mobile Healthcare for Maternity Health (Additional Plots)}\label{sec:maternity:app}

The real-life data collected by \cite{mate2022field} contains the weekly behavior of $96158$ beneficiaries for about $12$ weeks. For each beneficiary, the data provides the action (whether called or not-called by a healthcare worker) and the state (whether engaged or not-engaged) for every week. As done in \cite{mate2022field}, we make a Markov assumption and estimate the transition probabilities from this data, and then do clustering on passive transition probabilities (because the data is very sparse and most beneficiaries do not have any samples for active transitions). 

The implementation is in Python. The mean-field LP used PuLP with CBC optimizer. Other algorithms like finite and infinite horizon Whittle were written using Numpy. For each set of hyperparameters, we averaged the rewards over $40$ runs with random start states and transitions. The runs were executed in parallel on a machine with $48$ cores and $128$ GB RAM. 


Figures~\ref{fig:armman:app:t50} and~\ref{fig:armman:app:t100} vary the number of clusters from $40$ to $100$ on the $x$-axis, plot the difference in reward from the random policy on the $y$-axis, while setting $T=50$ and $T=100$, respectively. Both these figures show that the \mfp algorithm achieves a higher reward than both the finite and the infinite versions of the Whittle index algorithm.
Figure~\ref{fig:armman:app:timevnoclusters} plots the runtime of the \mfp algorithm and the finite Whittle index algorithm with respect to varying number of clusters, keeping time horizon fixed at $T=100$. It plots the number of clusters on the $x$-axis and the time in seconds on the $y$-axis. 
Figures~\ref{fig:armman:app:timec20} to~\ref{fig:armman:app:timec100} vary the number of clusters from $20$ to $100$ and plot the average runtime of the Whittle algorithm and the mean-field algorithm with respect to the length of time horizon. 
While \mfp is slower than Whittle index, it scales linearly as the number of clusters or the time horizon grows, demonstrating its suitability for real-world practical settings.

\begin{figure*}[ht]
  \centering
  \subfloat[Reward vs Number of Clusters ($T = 50$)]{
  \includegraphics[width=0.4\textwidth]{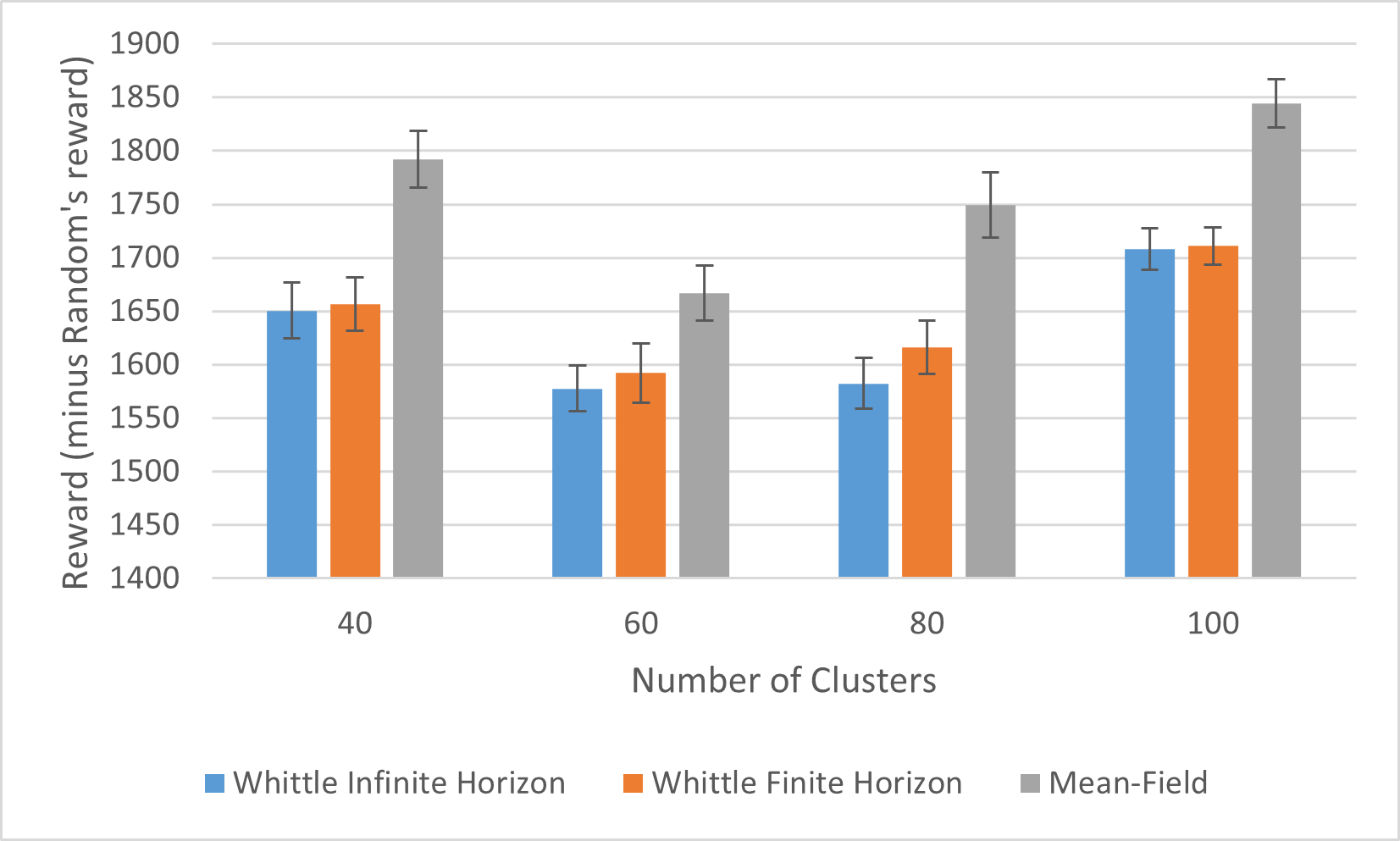}
  \label{fig:armman:app:t50}
  }
  \subfloat[Reward vs Number of Clusters ($T = 100$)]{
  \includegraphics[width=0.4\textwidth]{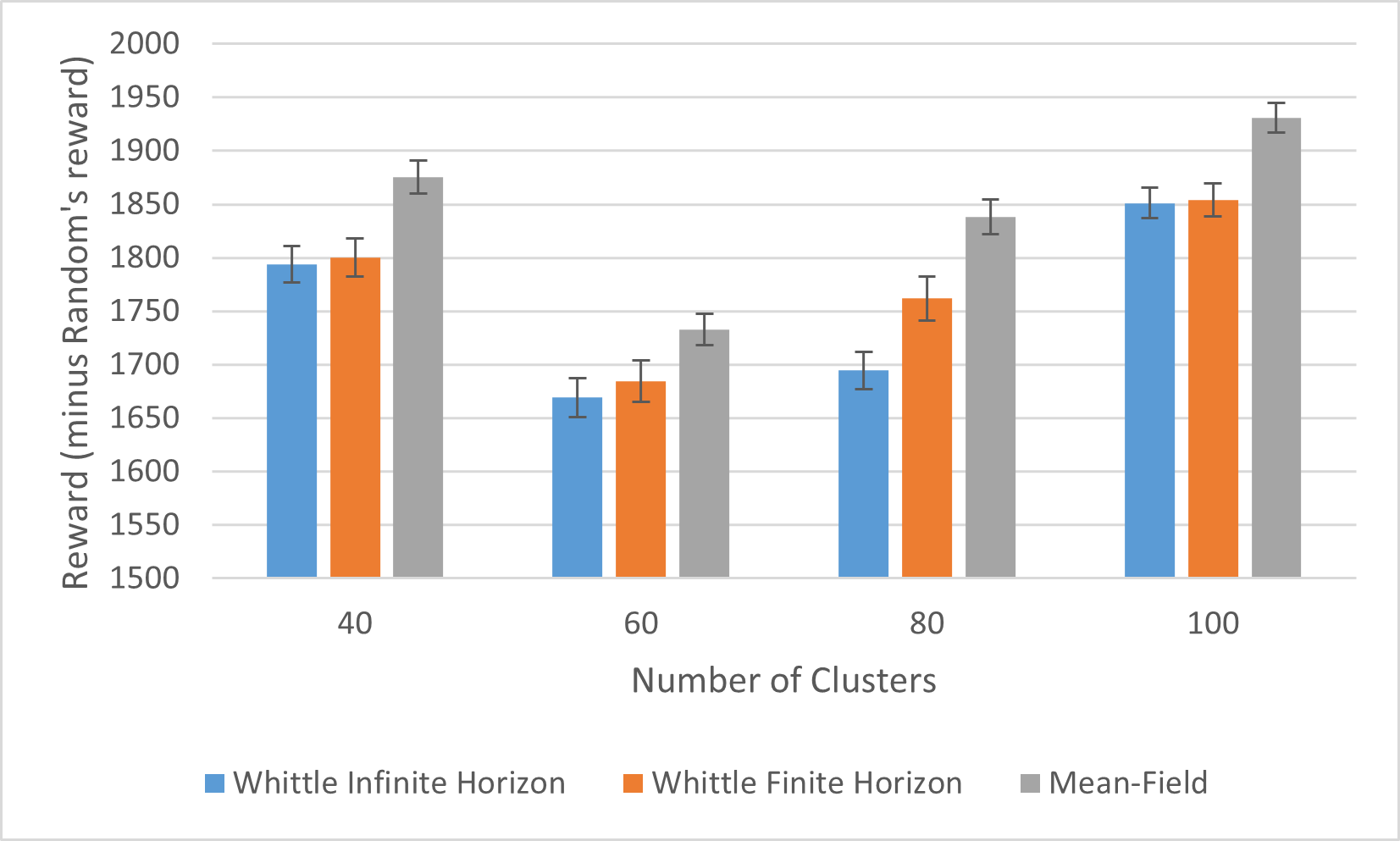}
  \label{fig:armman:app:t100}
  } 
  \\ 
  \subfloat[Runtime vs Number of Clusters ($T = 100$)]{
  \includegraphics[width=0.4\textwidth]{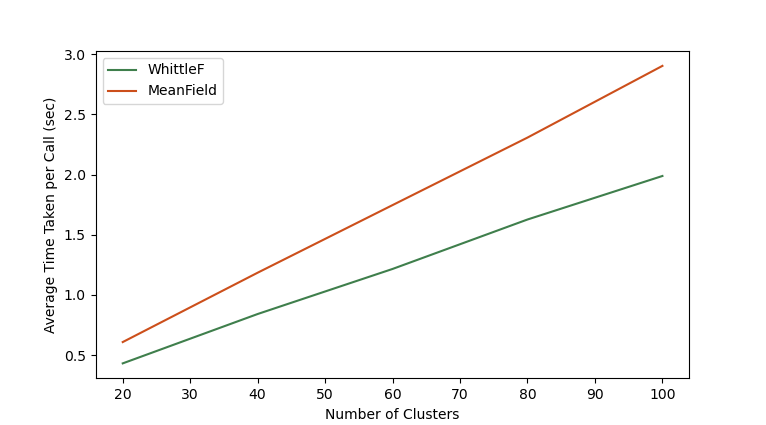}
  \label{fig:armman:app:timevnoclusters}
  }
  \subfloat[Runtime vs Time Horizon ($K = 20$ )]{
  \includegraphics[width=0.4\textwidth]{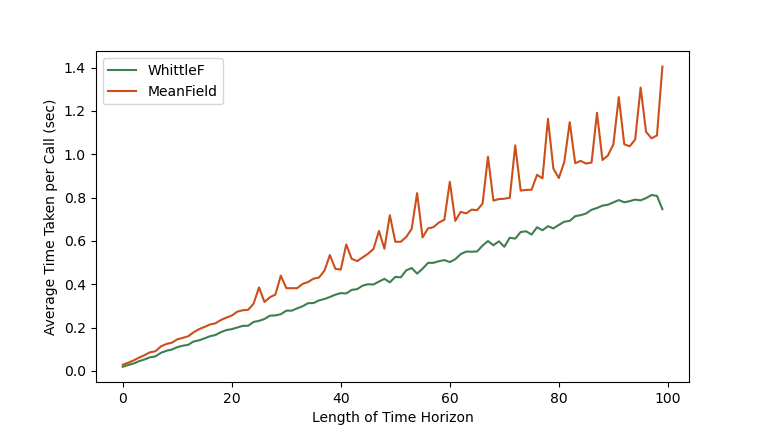}
  \label{fig:armman:app:timec20}
  }
  \\
  \subfloat[Runtime vs Time Horizon ($K = 40$)]{
  \includegraphics[width=0.4\textwidth]{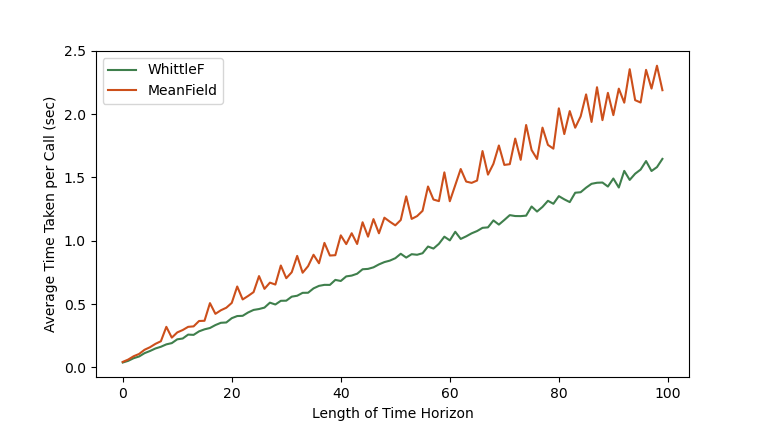}
  \label{fig:armman:app:timec40}
  }
  \subfloat[Runtime vs Time Horizon ($K = 60$)]{
  \includegraphics[width=0.4\textwidth]{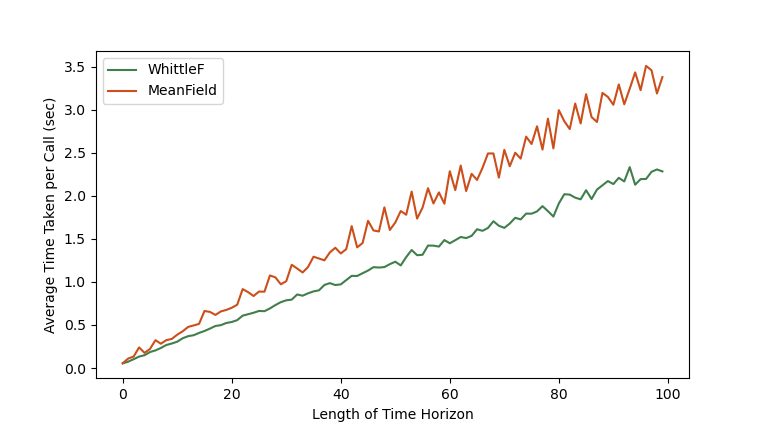}
    \label{fig:armman:app:timec60}
  }
  \\
  \subfloat[Runtime vs Time Horizon ($K = 80$)]{
  \includegraphics[width=0.4\textwidth]{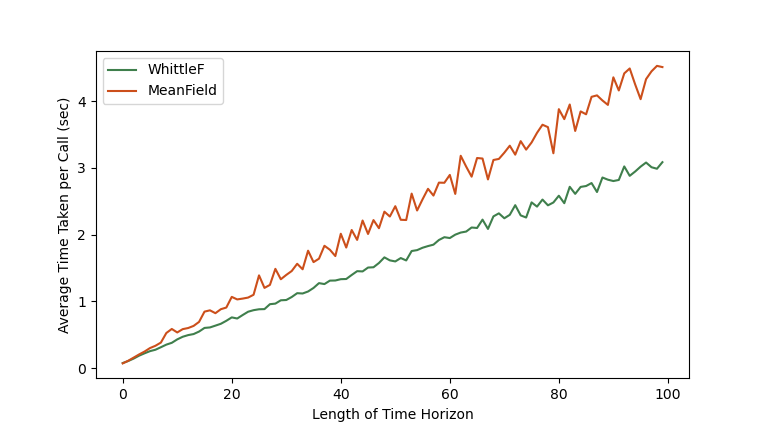}
  \label{fig:armman:app:timec80}
  }
  \subfloat[Runtime vs Time Horizon ($K = 100$)]{
  \includegraphics[width=0.4\textwidth]{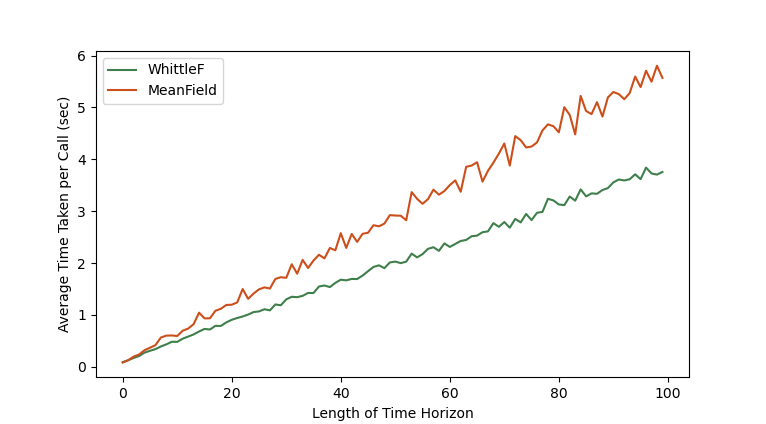}
  \label{fig:armman:app:timec100}
  }
\caption{Additional experiments on the ARMMAN domain (plots best seen in color)}
\label{fig:armman:app:extraexp}
\end{figure*}

\subsection{Tuberculosis Healthcare (Additional Plots)}\label{sec:tuberculosis:app}
We built upon the simulation environment developed by \cite{killian2021beyond}.\footnote{\url{https://github.com/killian-34/MAMARB-Lagrange-Policies}} We used the implementation for Hawkins and other algorithms in \cite{killian2021beyond} using Gurobi optimizer. We added a mean-field implementation in Gurobi to this simulation environment. All our results are averaged over $25$ runs, and each set of $25$ runs was executed in parallel on a machine with $48$ cores and $128$ GB RAM. 

We now provide additional results comparing mean field with the algorithms in~\citet{killian2021beyond} varying the number of arms. Figures~\ref{fig:tb:app:arms50} to~\ref{fig:tb:app:arms1000} vary the number of arms from $50$ to $1000$. The $y$-axis in these plots shows the discounted sum of rewards, whereas the $x$-axis shows the different policies.
These results show that the mean-field reward is as good as the Hawkins reward (and the reward of other policies). 


\begin{figure*}[ht]
  \centering
  \subfloat[Reward vs Policy ($N = 50$ arms)]{
  \includegraphics[width=0.4\textwidth]{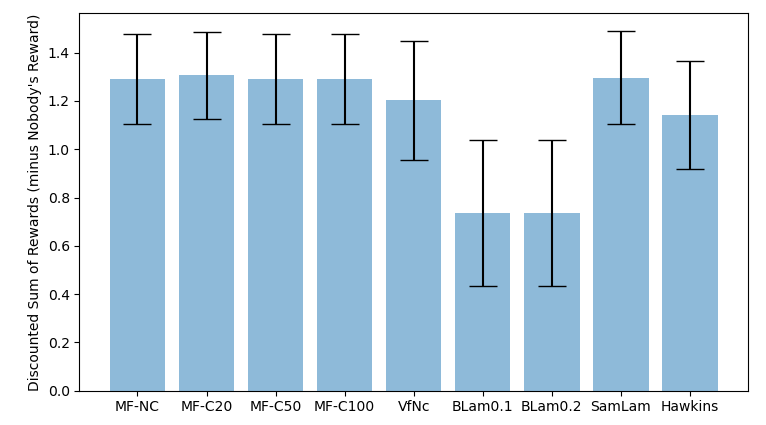}
  \label{fig:tb:app:arms50}
  }
  \subfloat[Reward vs Policy ($N = 100$ arms)]{
  \includegraphics[width=0.4\textwidth]{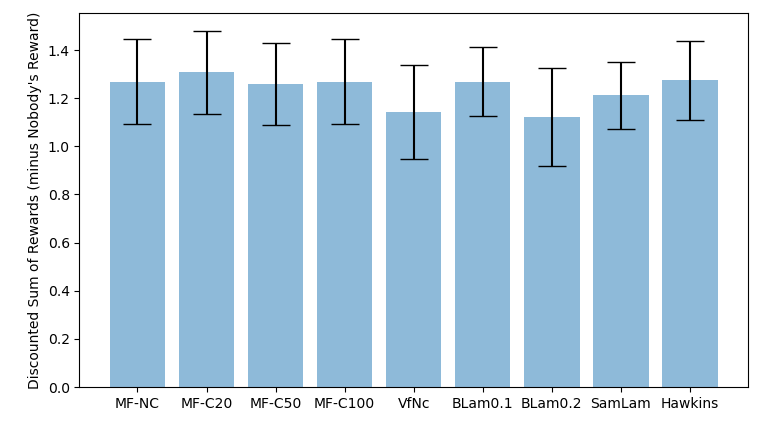}
  \label{fig:tb:app:arms100}
  }
  \\
  \subfloat[Reward vs Policy ($N = 200$ arms)]{
  \includegraphics[width=0.4\textwidth]{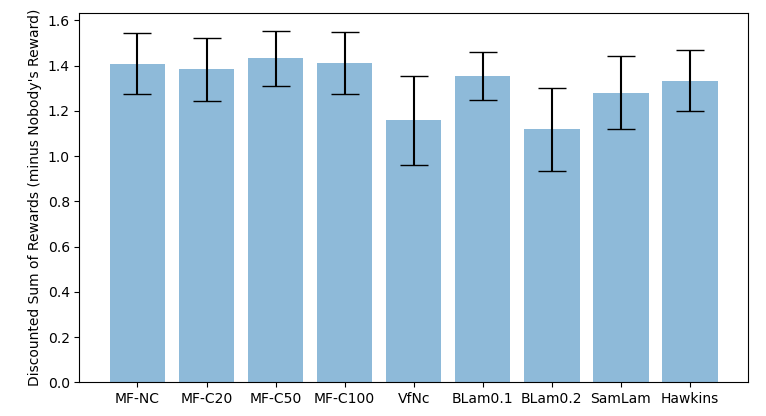}
  \label{fig:tb:app:arms200}
  }
  \subfloat[Reward vs Policy ($N = 500$ arms)]{
  \includegraphics[width=0.4\textwidth]{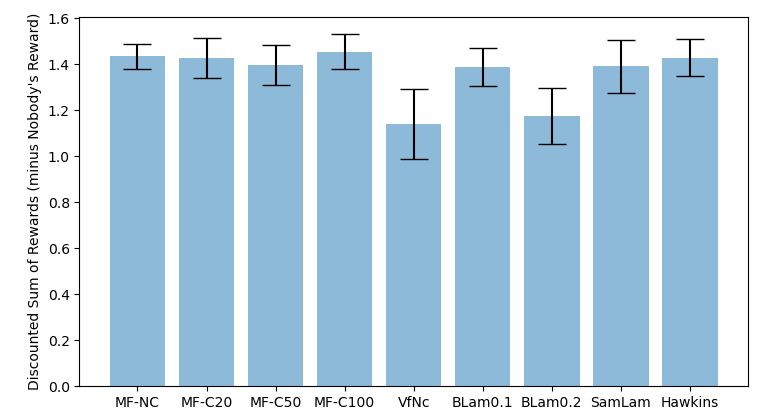}
  \label{fig:tb:app:arms500}
  }
  \\
  \subfloat[Reward vs Policy ($N = 1000$ arms)]{
  \includegraphics[width=0.4\textwidth]{plots/tuberculosis/1000.png}
  \label{fig:tb:app:arms1000}
  }
  \subfloat[Runtime vs Number of Arms (Mean-Field without Clustering and Hawkins*10)]{
  \includegraphics[width=0.4\textwidth]{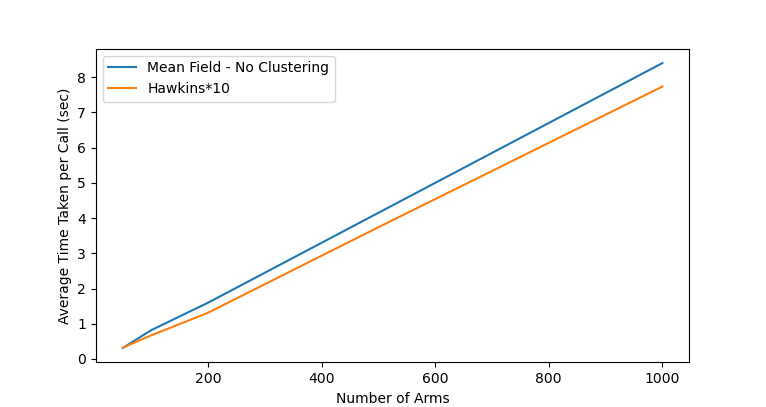}
  \label{fig:tb:app:time1}
  }
  \\
  \subfloat[Runtime vs Number of Arms (Mean-Field with Clustering and Hawkins*)]{
  \includegraphics[width=0.4\textwidth]{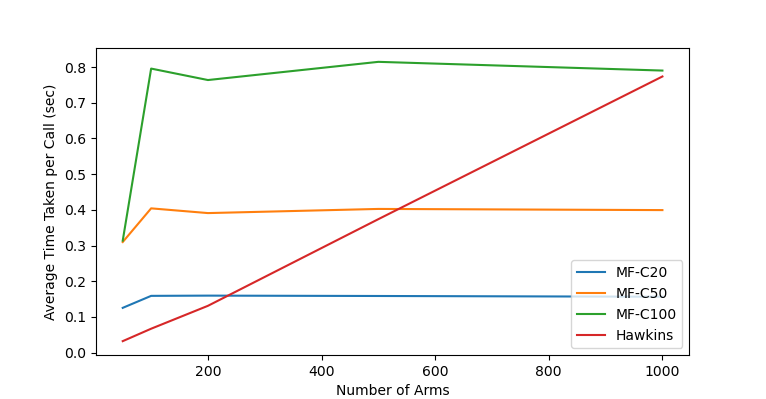}
  \label{fig:tb:app:time2}
  }

\caption{Tuberculosis Experiment: Additional Results (plots best seen in color)}
\label{fig:tb:app}
\end{figure*}





On the other hand, Figures~\ref{fig:tb:app:time1} and~\ref{fig:tb:app:time2} plot the time taken by the mean-field algorithms (with and without clustering) and the Hawkins algorithm versus the number of arms. Here, the $y$-axis shows the time in seconds, whereas the $x$-axis shows the number of arms. Without clustering, the time taken by the mean-field algorithm scales linearly with the number of arms. With clustering, the time taken remains (approximately) a constant as the number of arms increases. (In our experiments, the other algorithms in \cite{killian2021beyond} take the following proportion of time in comparison to Hawkins: VfNc $\approx 3\%$, SampleLam $\approx 20\%$, BLam0.1 $\approx 35\%$, Blam0.2 $\approx 32\%$.)






\subsection{Additional Experimental Domain}
\label{sec:gre:app}
\citet{killian2021beyond} propose another simulation environment which is similar to the examples we saw in Section~\ref{sec:fail_case}. There are three types of agents: (1) \textit{Greedy}: Must take increasingly expensive actions to collect increasingly high reward. Once the required action is not taken, the agent never produces reward again. This is modeled with a single chain of states, each with unit-increasing reward, reachable only by an action with unit-increasing cost. Failure to take the next action leads to a dead state. (2) \textit{Reliable}: Must take the cheapest non-zero action every round to achieve reward $1$. If the arm is not played for any round, it never produces reward again. This is modeled with a simple $2$-state chain, in which the final state recurs with the proper action, otherwise it goes to a dead state. (3) \textit{Easy}: Always gives reward of $1$ regardless of action. We make the proportion of (1) and (2) equal and set the budget so that the cheapest non-zero action can be played for at most all of (1) or (2) (or some mix), but not more. 

Notice that the instance described above is a generalized version of the examples in  Section~\ref{sec:fail_case}. Clearly, with large enough horizon, the optimal policy is to always play the Reliable agents since committing to the Greedy agents will eventually leave the planner only collecting reward from the Easy agents. But, as proved in Section~\ref{sec:fail_case}, the Whittle index policy performs poorly in such settings. The Hawkins algorithm is the generalization of Whittle index policy to multiple actions. Figure~\ref{fig:gre:1} shows that Hawkins performs very poorly compared to the mean-field policy for an instance with $10$ actions. Figure~\ref{fig:gre:2} shows a similar result for Hawkins with $2$ actions, equivalently the Whittle index policy. The time horizon is $40$, the discount factor is $0.95$, and the number of arms is $50$.

\begin{figure*}[ht]
  \centering
  \subfloat[$10$ Actions]{
  \includegraphics[width=0.4\textwidth]{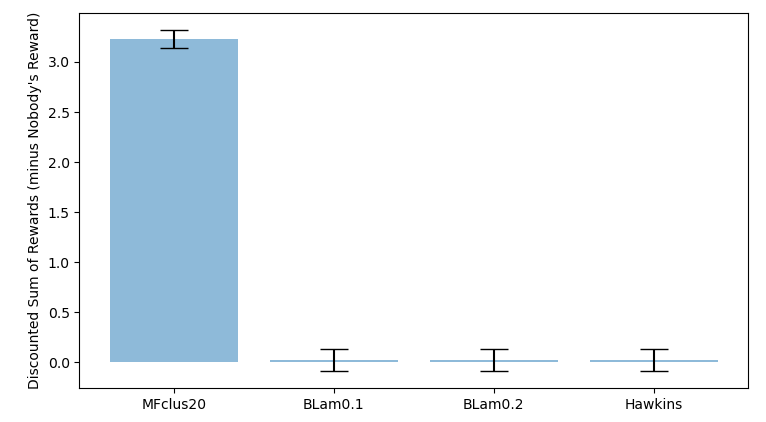}
  \label{fig:gre:1}
  }
  \subfloat[$2$ Actions (Hawkins $\equiv$ Whittle)]{
  \includegraphics[width=0.4\textwidth]{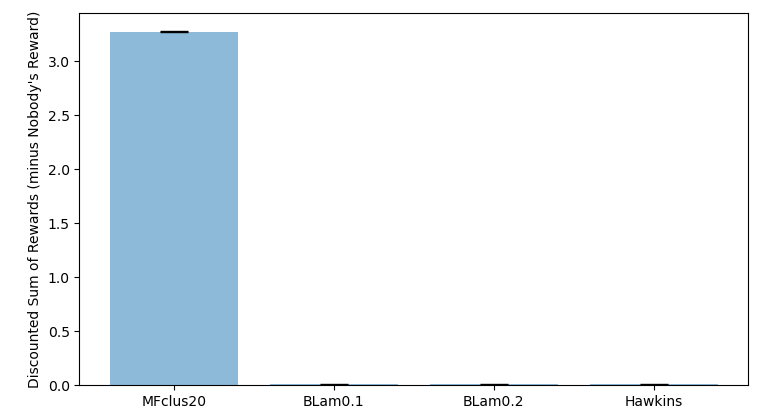}
  \label{fig:gre:2}
  }

\caption{Greedy/Reliable/Easy Experiment from~\cite{killian2021beyond} - Reward vs Policy}
\label{fig:gre}
\end{figure*}

\end{document}